\newif\ifdraft \draftfalse
\newif\iffull \fulltrue
\makeatletter \@input{tex.flags} \makeatother
\let\chapter\section
\definecolor{DarkGreen}{rgb}{0.1,0.5,0.1}
\definecolor{DarkRed}{rgb}{0.5,0.1,0.1}
\definecolor{DarkBlue}{rgb}{0.1,0.1,0.5}
\newcommand{\ar}[1]{\ifdraft \textcolor{green}{[Aaron: #1]} \fi}
\newcommand{\mk}[1]{\ifdraft \textcolor{red}{[Michael: #1]} \fi}
\newcommand{\rc}[1]{\ifdraft \textcolor{red}{[Rachel: #1]}\fi}
\newcommand{\sw}[1]{\ifdraft \textcolor{blue}{[Steven: #1]}\fi}
\newcommand\ABR{\vec{\text{BA}}}
\newcommand\RR{\mathbb{R}}
\newcommand\cA{\mathcal{A}}
\newcommand\cE{\mathcal{E}}
\newcommand\cM{\mathcal{M}}
\newcommand\cQ{\mathcal{Q}}
\newcommand\cR{\mathcal{R}}
\newcommand\cX{\mathcal{X}}
\newcommand\cV{\mathcal{V}}
\newcommand\cT{\mathcal{T}}
\newcommand{\br}{\vec{\text{BA}}}
\newcommand{\A}{\mathcal{A}}
\DeclareMathOperator{\polylog}{polylog}
\DeclareMathOperator{\Supp}{Supp}
\renewcommand{\tilde}{\widetilde}
\newcommand{\avgP}{\overline{p}}
\DeclareMathOperator*{\Expectation}{\mathbb{E}}
\newcommand{\vx}{\vec{x}}
\newcommand{\vp}{\vec{p}}
\newcommand{\PSL}{\mbox{{\sf PRESL}}\xspace}
\newcommand{\NPSL}{\mbox{{\sf NPRESL}}\xspace}
\newcommand{\PSN}{\mbox{{\sf PSummNash}}\xspace}
\newcommand{\SN}{\mbox{{\sf SummNash}}\xspace}
\newcommand{\SV}{\mbox{{\sf Sparse}}\xspace}
\newcommand{\EXP}{\mbox{{\sf EXP}}\xspace}
\newcommand{\DMW}{\mbox{{\sf DistMW}}\xspace}
\newcommand{\abr}{\text{-}\br}
\newcommand{\eps}{\varepsilon}
\def\epsilon{\varepsilon}
\DeclareMathOperator{\Lap}{Lap}
\DeclareMathOperator{\OPT}{OPT}
\renewcommand{\hat}{\widehat}
\newcommand{\INDSTATE}[1][1]{\STATE\hspace{#1\algorithmicindent}}
\newcommand{\INDSTATE}[1][1]{\hspace{#1\Indp}}
\newtheorem{corollary}{Corollary}
\newtheorem{lemma}{Lemma}
\newtheorem{theorem}{Theorem}
\newtheorem{definition}{Definition}
\newtheorem{remark}{Remark}
\begin{document}
\markboth{Cummings et al.}{Privacy and Truthful Equilibrium Selection in Aggregative Games}
\fi

\title{Privacy and Truthful Equilibrium Selection for Aggregative Games}

\iffull
\author{
Rachel Cummings\footnotemark[1]
\quad
 Michael Kearns\footnotemark[2]
\quad
 Aaron Roth\footnotemark[2]
\quad
 Zhiwei Steven Wu\footnotemark[2]
}

\begin{document}

\date{\today}
\maketitle
\renewcommand{\thefootnote}{\fnsymbol{footnote}}
\footnotetext[1]{Computing and Mathematical Sciences, California Institute of Technology;
\texttt{rachelc@caltech.edu}. Research performed while the author was visiting the
University of Pennsylvania.
}
\footnotetext[2]{Computer and Information Science, University of Pennsylvania;
\texttt{\{mkearns,aaroth,wuzhiwei\}@cis.upenn.edu}
}
\renewcommand{\thefootnote}{\arabic{footnote}}
\else

\author{Rachel Cummings
\affil{California Institute of Technology}
Michael Kearns
\affil{University of Pennsylvania}
Aaron Roth
\affil{University of Pennsylvania}
Zhiwei Steven Wu
\affil{University of Pennsylvania}}
\fi

\begin{abstract}
  We study a very general class of games --- multi-dimensional aggregative
  games --- which in particular generalize both anonymous games and
  weighted congestion games. For any such game that is also \emph{large},
  we solve the equilibrium selection problem in a strong sense. In
  particular, we give an efficient {\em weak
    mediator\/}: a mechanism which has only the power to
  listen to reported types and provide non-binding suggested actions, such
  that (a) it is an asymptotic Nash equilibrium for every player to
  truthfully report their type to the mediator, and then follow its
  suggested action; and (b) that when players do so, they end up
  coordinating on a particular asymptotic pure strategy Nash equilibrium of
  the induced complete information game. In fact, truthful reporting is an
  \emph{ex-post} Nash equilibrium of the mediated game, so our solution
  applies even in settings of incomplete information, and even when player
  types are arbitrary or worst-case (i.e. not drawn from a common prior).
  We achieve this by giving an efficient differentially private algorithm
  for computing a Nash equilibrium in such games. The rates of convergence
  to equilibrium in all of our results are inverse polynomial in the number
  of players $n$. We also apply our main results to a multi-dimensional market
  game.

  Our results can be viewed as giving, for a rich class of games, a more
  robust version of the Revelation Principle, in that we work with weaker
  informational assumptions (no common prior), yet provide a stronger
  solution concept (ex-post Nash versus Bayes Nash equilibrium). In comparison to
  previous work, our main conceptual contribution is showing that weak
  mediators are a game theoretic object that exist in a wide variety of
  games -- previously, they were only known to exist in traffic routing
  games. We also give the first weak mediator that can implement an
  equilibrium optimizing a linear objective function, rather than
  implementing a possibly worst-case Nash equilibrium.
\end{abstract}

\iffull
\else
\category{F}{Theory of Computation}{Algorithmic Game Theory, Mechanism Design, Differential Privacy}

\terms{Algorithms, Truthfulness, Privacy}
\acmformat{Rachel Cummings, Michael Kearns, Aaron Roth, Zhiwei Steven Wu, 2015. Privacy and Truthful Equilibrium Selection for Aggregative Games.}
\fi
\maketitle

\iffull
 \vfill
 \thispagestyle{empty}
\setcounter{page}{0}
\pagebreak
\fi

\section{Introduction}\label{s.intro}

Games with a large number of players are almost always played, but
only sometimes modeled, in a setting of \emph{incomplete information}.
Consider, for example, the problem of selecting stocks for a 401k
portfolio among the companies listed in the S\&P500. Because stock
prices are the result of the aggregate decisions of millions of
investors, this is a large multi-player strategic interaction, but it
is so decentralized that it is implausible to analyze it in a complete
information setting (in which every player knows the types or
utilities of all of his opponents), or even in a Bayesian setting (in
which every agent shares common knowledge of a prior distribution from
which player types are drawn). How players will behave in such
interactions is unclear; even under settings of complete information,
there remains the potential problem of coordinating or selecting a
particular equilibrium among many.

One solution to this problem, recently proposed by~\citet{KPRU14}
and~\citet{RR14}, is to modify the game by introducing a \emph{weak
  mediator}, which essentially only has the power to listen and to
give advice. Players can ignore the mediator, and play in the original
game as they otherwise would have. Alternately, they can use the
mediator, in which case they can report their type to it (although
they have the freedom to lie). The mediator provides them with a
suggested action that they can play in the original game, but they
have the freedom to disregard the suggestion, or to use it in some
strategic way (not necessarily following it). The goal is to design a
mediator such that {\em good behavior\/} -- that is, deciding to use
the mediator, truthfully reporting one's type, and then faithfully
following the suggested action -- forms an ex-post Nash equilibrium in
the mediated game, and that the resulting play forms a Nash
equilibrium of the original {\em complete information\/} game, induced
by the actual (but unknown) player types. A way to approximately
achieve this goal -- which was shown in~\citet{KPRU14,RR14} -- is to
design a mediator which computes a Nash equilibrium of the game
defined by the reported player types under a stability constraint
known as {\em differential privacy\/} \citep{DMNS06}. Prior to our
work, this was only known to be possible in the special case of large,
unweighted congestion games.


In this paper, we extend this approach to a
much more general class of games known as {\em multi-dimensional
  aggregative games\/} (which among other things, generalize both anonymous games 
and weighted congestion games). In such a game, there is a vector of
linear functions of players' joint actions called an {\em
  aggregator\/}. Each player's utility is then a possibly {\em
  non-linear\/} function of the aggregator vector and their own
action. For example, in an investing game, the imbalance between buyers and
sellers of a stock, which is a linear function of actions, may be used in the
utility functions to compute prices, which are a non-linear function of the
imbalances (see Section~\ref{s.predict} for details). In an anonymous game,
the aggregator function represents the number of players playing each action.
In a weighted congestion game, the aggregator function represents the total
weight of players on each of the facilities. Our results apply to any
\emph{large} aggregative game, meaning that any player's unilateral change in
action can have at most a bounded influence on the utility of any other
player, and the bound on this influence should be a diminishing function in
the number of players in the game. Conceptually, our paper is the first to
show that \emph{weak mediators} are a game-theoretic object that exists in a
large, general class of games: previously, although defined, weak mediators
were only known to exist in traffic routing games \cite{RR14}.

This line of work can be viewed as giving robust versions of the
Revelation Principle, which can implement Nash equilibria of the
complete information game using a ``direct revelation mediator,'' but
without needing the existence of a prior type distribution. Compared
to the Revelation Principle, which generically requires such a
distribution and implements a Bayes Nash equilibrium, truth-telling
forms an ex-post Nash equilibrium in our setting. We include a
comparison to previous work in Table \ref{table:truthful}.

Finally, another important contribution of our work is that we are the first
to demonstrate the existence of weak mediators (in \emph{any} game) that have
the power to optimize over an arbitrary linear function of the actions, and
hence able to implement near optimal equilibria under such objective
functions, rather than just implementing worst-case Nash equilibria.

\begin{table}[ht]
\begin{center}
\begin{tabular}{ >{\centering\arraybackslash}m{2.2cm} >{\centering\arraybackslash}c >{\centering\arraybackslash}p{1.7cm} >{\centering\arraybackslash}p{1.8cm} >{\centering\arraybackslash}p{2.5cm} }
\toprule
  {\bf Mechanism \/} & {\bf Class of Games\/} & {\bf Common Prior? \/} & {\bf Mediator Strength\/} & {\bf Equilibrium Implemented\/} \\
\midrule
\shortstack{Revelation \\ Principle\\\citep{Mye81}} & Any Finite Game & Yes & Weak &  Bayes Nash \\[20pt]
\citet{KPRU14} & Any Large Game & No & Strong & Correlated \\[15pt]
\citet{RR14} & Large Congestion Games & No & Weak & Nash \\[15pt]
This Work & Aggregative Games & No & Weak & Nash \\[10pt]
\hline
\bottomrule
\end{tabular}
\caption{Summary of truthful mechanisms for various classes of games
  and solution concepts. Note that a ``weak'' mediator does not
  require the ability to verify player types. A ``strong'' mediator
  does. Weak mediators are preferred. }
\label{table:truthful}
\end{center}
\end{table}

\subsection{Our Results and Techniques}
Our main result is the existence of a mediator which makes truthful reporting
of one's type and faithful following of the suggested action (which we call
the ``good behavior'' strategy) an ex-post Nash equilibrium in the mediated
version of any aggregative game, thus implementing a Nash equilibrium of the
underlying game of complete information. Unlike the previous work in this
line~\citep{KPRU14, RR14}, we do not have to implement an arbitrary (possibly
worst-case) Nash equilibrium, but can implement a Nash equilibrium which
optimizes any linear objective (in the player's actions) of our choosing. We
here state our results under the assumption that any player's action has
influence bounded by $(1/n)$ on other's utility,
but our results hold more generally, parameterized by the ``largeness'' of
the game.

\begin{theorem}[Informal]\label{thm.informal}
  In a $d$-dimensional aggregative game of $n$ players and $m$
  actions, there exists a mediator that makes good behavior an
  $\eta$-approximate ex-post Nash equilibrium, and implements a Nash
  equilibrium of the underlying complete information game that
  optimizes any linear objective function to within $\eta$, where
\[
 \eta = O\left( \frac{\sqrt{d}}{n^{1/3}}\cdot \polylog(n, m, d)   \right).
\]
\end{theorem}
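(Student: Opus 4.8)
The plan is to prove the (informal) theorem in two stages: a general reduction showing it suffices to build one differentially private equilibrium--computation algorithm, followed by the construction of that algorithm.

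\textbf{Stage 1: from private computation to ex-post incentives.} First I would reduce to exhibiting an algorithm $\mathcal M$ that takes the reported type profile $\hat t=(\hat t_1,\dots,\hat t_n)$, outputs a suggested action $a_i$ to each player $i$, and satisfies: (i) with probability $1-\beta$ the profile $a=(a_1,\dots,a_n)$ is an $\alpha$-approximate pure Nash equilibrium of the complete-information game induced by $\hat t$, and its value under the target linear objective $c$ is within $\alpha$ of the best value achievable by any $\alpha$-approximate equilibrium; and (ii) for every player $i$, the map $\hat t_i\mapsto a_{-i}$ is $\eps_0$-differentially private (i.e.\ $\mathcal M$ is jointly differentially private). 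The mediator simply runs $\mathcal M$ on the reports. To establish that good behavior is an $\eta$-approximate ex-post Nash equilibrium, fix the true types $t$, consider an arbitrary deviation by player $i$ (a misreport $\hat t_i\ne t_i$ and/or any strategy not following the suggestion), and decompose its effect: changing $i$'s report only changes the distribution of $a_{-i}$ from $\mathcal M_{-i}(t)$ to $\mathcal M_{-i}(t_{-i},\hat t_i)$, and by joint differential privacy these are pointwise within a factor $e^{\eps_0}$, so $i$'s best attainable expected utility moves by at most $(e^{\eps_0}-1)\cdot\max|u_i|=O(\eps_0)$; and against $a_{-i}$ drawn from the truthful distribution, following the suggested $a_i$ is an $\alpha$-best response on the $1-\beta$ event and loses at most $\max|u_i|$ on the failure event. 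Hence $\eta=O(\eps_0+\alpha+\beta)$, and because differential privacy is a worst-case guarantee this holds for arbitrary (non-common-prior) types. The objective guarantee in the theorem statement is exactly part (i) of $\mathcal M$'s specification.

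\textbf{Stage 2: the private algorithm $\mathcal M$.} The real work is building $\mathcal M$, and here the structural leverage is that the game is aggregative and large: the only type-dependent quantity any utility sees is the $d$-dimensional aggregator $s(a)=\sum_i w_i(a_i)$, and largeness forces each player's contribution to each coordinate to be at most $1/n$, so $s(\cdot)$ has $\ell_\infty$-sensitivity $1/n$ in any one player's action. The plan is: discretize the bounded aggregator space to a grid of width $\gamma$; run an iterative dynamics (best-response / no-regret flavored) in which, at each of $T$ rounds, (a) every player locally computes an $\alpha$-best response to the current noisy aggregate estimate, with ties broken to favor the objective $c$, (b) the new aggregate of these tentative actions is released with independent $\Lap(1/(\eps_1 n))$ noise in each coordinate, and (c) the sparse-vector/AboveThreshold mechanism is used to detect that the dynamics have reached an approximate (self-consistent) fixed point and, among the fixed points reached, to privately select one whose objective value is within $\alpha$ of optimal. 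A potential/discretization argument bounds $T$ by a polynomial in $1/\gamma$ and $d$; Laplace tail bounds with a union bound over the $T$ rounds, the $n$ players, the $m$ actions and the $d$ coordinates control accuracy with only $\polylog(n,m,d)$ overhead; composition over the $T$ rounds and the $d$ coordinates converts the per-round budget $\eps_1$ into $\eps_0$. Finally, since the reached object is a nearly self-consistent mixed profile of a large aggregative game, a concentration/Azrieli--Shmaya-type argument shows that sampling one pure profile from it is w.h.p.\ an $O(\alpha)+\tilde O(1/\sqrt n)$-approximate \emph{pure} Nash equilibrium whose aggregate and objective value match the mixed profile up to $\tilde O(1/\sqrt n)$.

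\textbf{Parameters and the main obstacle.} Balancing the three error sources --- grid width $\gamma$, post-composition Laplace error of order $\tilde O(\sqrt{dT}/(\eps_0 n))$ (and coordinate-wise blow-up in $d$), and the privacy/approximation slack $\eps_0\sim\alpha$ --- against the polynomial bound $T\sim\poly(1/\gamma)$ yields $\gamma\sim\eps_0\sim\alpha\sim \tilde O(\sqrt d/n^{1/3})$, hence $\eta=O\!\big(\sqrt d\,\polylog(n,m,d)/n^{1/3}\big)$; the $\sqrt d$ is the cost of operating over the $d$-dimensional aggregator and the $n^{1/3}$ is the usual three-way tradeoff among discretization, per-step noise, and number of steps. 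I expect Stage 2 to be the main obstacle, and within it the simultaneous demands that the released aggregator be genuinely \emph{self-consistent} (an approximate fixed point, not merely a low-sensitivity estimate), that the dynamics provably converge in polynomially many rounds even though an aggregative game need not be a potential game, and that a \emph{linear objective} can be optimized over the reached equilibria --- all while the only type-dependent information ever touched is the $1/n$-sensitive aggregate, so that joint differential privacy, and hence the worst-case ex-post incentive guarantee, is preserved. The mixed-to-pure rounding is comparatively routine but must be verified not to destroy the equilibrium property or the objective value.
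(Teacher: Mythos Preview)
Your Stage~1 reduction is essentially the paper's own reduction (their Theorem~\ref{thm:priv-media}), so that part is fine.

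The gap is in Stage~2. You propose an iterative best-response/no-regret dynamic on the aggregator and hope a ``potential/discretization argument'' will bound the number $T$ of rounds. But multi-dimensional aggregative games are \emph{not} potential games in general, and you yourself flag convergence as the main obstacle; there is no reason your noisy best-response dynamic reaches an approximate fixed point in polynomially many rounds (and if you switch to genuine no-regret dynamics you get coarse correlated equilibria, not Nash). Moreover, your plan to optimize the linear objective by ``breaking ties'' inside the dynamics and then privately selecting among the fixed points reached presupposes that the dynamics visit (or can be steered toward) the objective-optimal fixed point, which is again unsupported.

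The paper avoids dynamics entirely. It observes that for any \emph{fixed} candidate aggregator $\hat s$, the question ``is there a mixed profile with $\Supp(p_i)\subseteq\xi\text{-}\br_i(\hat s)$ whose expected aggregator is near $\hat s$ and whose linear objective is at most $\hat y$?'' is a \emph{linear program} (because the aggregator and the objective are linear in the mixture). So the algorithm brute-force enumerates a grid over $(\hat s,\hat y)$ of size $(2W/\alpha)^{d+1}$, uses the sparse-vector mechanism to privately identify a single pair $(\hat s,\hat y)$ with a small LP value (paying only $\polylog$ in the number of candidates, hence linear in $d$), and then solves \emph{that one} LP under joint differential privacy via a new distributed multiplicative-weights solver \textsf{DistMW}: each agent runs her own MW over her private best-response simplex, and the only cross-agent communication in each round is the index of an approximately most-violated constraint, chosen by the exponential mechanism. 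The MW regret bound (not any game-theoretic potential) gives the round bound $T$; composition over those rounds gives the accuracy $\alpha=\tilde O((n\gamma^2/\eps)^{1/2})$. Finally, sampling a pure profile from the LP solution and applying concentration yields a pure approximate Nash equilibrium. Plugging $\gamma=1/n$ and optimizing $\eps$ then gives the $\sqrt d/n^{1/3}$ bound. The grid search is what makes the running time exponential in $d$, but it is also what eliminates the convergence question your proposal leaves open.
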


It is tempting to think that the fact that players only have small
influence on one another's utility function is sufficient to make any
algorithm that computes a Nash equilibrium of the game a suitable weak
mediator, but this is not so (see \citet{KPRU14} for an example). What
we need out of a mediator is that any single agent's report should
have little effect on the \emph{algorithm} computing the Nash
equilibrium, rather than on the payoffs of the other players.

The underlying tool that we use is differential privacy, which enforces the
stability condition we need on the equilibrium computation algorithm. Our
main technical contribution is designing a (jointly) differentially private
algorithm for computing approximate Nash equilibria in aggregative games. The
algorithm that we design runs in time polynomial in the number of players,
but exponential in the dimension of the aggregator function. We note that
since aggregative games generalize anonymous games, where the dimension of
the aggregator function is the number of actions in the anonymous game, this
essentially matches the best known running time for computing Nash equilibria
in anonymous games, even non-privately~\citep{DP08}. Computing exact Nash
equilibria in these games is known to be PPAD complete~\citep{CDO14}.

In the process of proving this result, we develop several techniques
which may be of independent interest. First, we give the first
algorithm for computing equilibria of multi-dimensional aggregative
games (efficient for constant dimensional games) even in the absence
of privacy constraints --- past work in this area has focused on the
single dimensional case \cite{KM02,B13}. Second, in order to implement
this algorithm privately, we develop the first technique for solving a
certain class of linear programs under the constraint of joint
differential privacy.

We also give similar results for a class of one-dimensional
aggregative games that permit a more general aggregation function and
rely on different techniques (\Cref{s.single}), and we show how
our main result can be applied to equilibrium selection in a
multi-commodity market (\Cref{s.predict}).

\subsection{Related Work}
Conceptually, our work is related to the classic Revelation Principle
of~\citet{Mye81}, in that we seek to implement equilibrium
behavior in a game via a ``mediated'' direct revelation mechanism.
Our work is part of a
line, starting with~\citet{KPRU14} and continuing with~\citet{RR14},
that attempts to give a more robust reduction, without the need to
assume a prior on types. \citet{KPRU14} showed how to
privately compute correlated equilibria (and hence implement this
agenda) in arbitrary large games. The private computation of
correlated equilibrium turns out to give the desired reduction to a
direct revelation mechanism only when the mediator has the power to
verify types. \citet{RR14} rectified this deficiency by
privately computing Nash equilibria, but their result is limited to
large unweighted congestion games. In this paper, we substantially
generalize the class of games in which we can privately compute Nash
equilibria (and hence solve the equilibrium selection problem with a
direct-revelation mediator).

This line of work is also related to ``strategyproofness in the
large,'' introduced by \citet{AB12}, which has similar goals. In
comparison to this work, we do not require that player types be drawn
from a distribution over the type-space, do not require any smoothness
condition on the set of equilibria of the game, are algorithmically
constructive, and do not require our game to be nearly as
large. Generally, their results require the number of agents $n$ to be
larger than the size of the action set and the size of the type
set. In contrast, we only require $n$ to be as large as
the \emph{logarithm} of the number of actions, and require no bound at
all on the size of the type space (which can even be infinite).

Our work is also related to the literature on mediators in games
\citep{mediators1,mediators2}. In contrast to our main goal (which is
to implement solution concepts of the complete information game in
settings of incomplete information), this line of work aims to modify
the equilibrium structure of the complete information game. It does so
by introducing a mediator, which can coordinate agent actions if they
choose to opt in using the mediator. Mediators can be used to convert
Nash equilibria into dominant strategy equilibria \citep{mediators1},
or implement equilibrium that are robust to collusion
\citep{mediators2}. \citet{AMT09} considers mediators in
games of incomplete information, in which agents can misrepresent
their type to the mediators. Our notion of a mediator is related, but
our mediators require substantially less power than the ones from this
literature. For example, our mechanisms do not need the power to make
payments \citep{mediators1}, or the power to enforce suggested actions
\citep{mediators2}. Like the mediators of \citet{AMT09}, ours are
designed to work in settings of incomplete information and so do not
need the power to verify agent types --- but our mediators are weaker,
in that they can only make suggestions (i.e. players do not need to
cede control to our weak mediators).

The computation of equilibria in aggregative games (also known as
summarization games) was studied in \citet{KM02}, which gave efficient
algorithms and learning dynamics converging to equilibria in the
$1$-dimensional case. \citet{B13} also studies learning dynamics in
this class of games and shows that in the $1$-dimensional setting,
sequential best response dynamics converge quickly to equilibrium. Our
paper is the first to give algorithms for equilibrium computation in
the multi-dimensional setting, which generalizes many well studied
classes of games, including anonymous games. The running time of our
algorithm is polynomial in the number of players $n$ and exponential
in the dimension of the aggregation function $d$, which essentially
matches the best known running time for equilibrium computation in
anonymous games~\citep{DP08}.

We use a number of tools from differential privacy \cite{DMNS06}, as
well as develop some new ones. In particular, we use the advanced
composition theorem of \citet{boostingDP}, the exponential mechanism
from \citet{McSherryT07}, and the sparse vector technique introduced by
\citet{DNRRV09} (refined in \citet{HR10} and abstracted into its current
form in \citet{DR13}). We introduce a new technique for solving linear
programs under joint differential privacy, which extends a line of
work (solving linear programs under differential privacy) initiated by
\citet{HRRU14}.

Finally, our work relates to a long line of work initiated
by~\citet{McSherryT07} using differential privacy as a tool and
desideratum in mechanism design. In addition to works already cited,
this includes~\citet{NST12,NOS12,Xia13,GL13,CCKMV13,BMSS14,
schooltruth} among others. For a survey of this area see~\citet{PR13}.

\section{Model and Preliminaries}\label{s.prelim}

\subsection{Aggregative Games}\label{s.summgame}

Consider an $n$-player game with action set $\cA$ consisting of $m$
actions and a (possibly infinite) type space $\cT$ indexing utility
functions. Let $\vx = (x_i, \vx_{-i})$ denote a strategy profile in
which player $i$ plays action $x_i$ and the remaining players play
strategy profile $\vx_{-i}$. Each player $i$ has a utility function,
$u \colon \cT\times \cA^n \rightarrow [-1, 1]$, where a player with
type $t_i$ experiences utility $u(t_i, \vx)$ when players play
according to $\vx$. When it is clear from context, we will use
shorthand and write $u_i(\vx)$ to denote $u(t_i,\vx)$, the utility of
player $i$ at strategy profile $\vx$.

The utility functions in \emph{aggregative games} can be defined in
terms of a multi-dimensional \emph{aggregator} function $S\colon \A^n
\rightarrow [-W,W]^d$, which represents a compact ``sufficient
statistic'' to compute player utilities. In particular, each player's
utility function can be represented as a function only of her own
action $x_i$ and the aggregator of the strategy profile $\vx$:
$u_i(\vx) = u_i(x_i, S(\vx))$. We also assume $W$ to be polynomially
bounded by $n$ and $m$. In aggregative games, the function $S_k$ for
each coordinate $k\in [d]$, is an additively separable function: $
S_k(\vx) = \sum_{i=1}^n f_i^k(x_i). \footnote{In the economics
  literature, aggregative games have more restricted aggregator
  function: $S_k(\vx) = \sum_{i=1}^n x_i$. The games we study are more
  general, and sometimes referred to as \emph{generalized aggregative
    games}.} $

Similar to the setting of~\citet{KM02} and~\citet{B13}, we focus on
\emph{$\gamma$-aggregative games}, in which each player has a
\emph{bounded influence} on the aggregator:\[ \max_i \max_{x_i,
  x_i'\in \A}\left\|S(x_i,\vx_{-i}) - S(x_i',\vx_{-i}) \right\|_\infty
\leq \gamma, \mbox{ for all } \vx_{-i} \in \A^{n-1}.
\] That is, the greatest change a player can unilaterally cause to the
aggregator is bounded by $\gamma$. With our motivation to study large
games, we assume $\gamma$ diminishes with the population size $n$. We
also assume that all utility functions are $1$-Lipschitz with respect
to the aggregator: for all $x_i\in \A$, $\left|u_i(x_i, s) - u_i(x_i, s')
\right| \leq \|s - s'\|_\infty$.\footnote{Note that the influence that
  any single player's action has on the utility of others is also
  bounded by $\gamma$. If $\gamma=o(1/n)$, then any player's utility
  is essentially independent of other players' actions. Therefore, we
  further assume that $\gamma = \Omega(1/n)$ for the problem to be
  interesting. This will also simplify some statements.} 

For $\gamma$-aggregative games, we can express the aggregator more
explicitly as
\[S_k(\vec{x}) = \gamma \sum_{i=1}^n f^k_i(x_i),\] where
$f^k_i(x_i)$ is the influence of player $i$'s action $x_i$ on the
$k$-th aggregator function, and also $|f^k_i(x_i)|\leq 1$ for all
actions $i\in[n]$ and $x_i\in \cA$. Let $f^k_{ij} = f^k_i(a_j)$, where
$a_j$ denotes the $j$-th action in $\cA$.

We say that player $i$ is playing an $\eta$-\emph{best response} to
$\vx$ if $u_i(\vx) \geq u_i(x_i', \vx_{-i}) - \eta, \mbox{ for all }
x_i'\in \A$. A strategy profile $\vx$ is an $\eta$-\emph{pure strategy
  Nash equilibrium} if all players are playing an $\eta$-best response
in $\vx$. We also consider \emph{mixed strategies}, which are defined
by probability distributions over the action set. For any profile of
mixed strategies, given by a product distribution $\vp$, we can define
expected utility $u_i(\vp) = \Expectation_{\vx \sim \vp}{u_i(\vx)}$
and the expected aggregator
\begin{equation}
S_k(\vp) = \Expectation_{\vx\sim \vp}{S_k(\vx)} = \gamma \sum_{i=1}^n \sum_{j=1}^m f^k_{ij}\, p_{ij} =
\gamma\, \langle f^k , \vp\rangle \label{eq:expected-agg}.
\end{equation}
The \emph{support} of a mixed strategy $p$, denoted $\Supp(\vp_i)$, is
the set of actions that are played with non-zero probabilities. A
mixed strategy profile $\vp$ is a \emph{mixed strategy Nash
  equilibrium} if $u_i(\vp)\geq
\Expectation_{\vx_{-i}\sim\vp_{-i}}u_i(x_i', \vx_{-i})$ for all
$i\in[n]$ and $x_i'\in \cA$.

For each aggregator $s$, we define the \emph{aggregative best
  response}\footnote{Sometimes called \emph{best react}~\citep{B13}, and \emph{apparent
    best response}~\citep{KM02}.}
for player $i$ to $s$ as $\br_i(s) = \arg\max_{x_i\in A} \{u_i(x_i, s)\}$,
breaking ties arbitrarily. We define the \emph{$\eta$-aggregative best
  response set} for player $i$ to $s$ as 
\[
\eta\abr_i(s) = \{ x_i \in \A | u_i(x_i, s) \geq \max_{x_i'} u_i(x_i',
s) - \eta \}\] to be the set of all actions that are at most $\eta$
worse than player $i$'s exact aggregative best response.

\begin{remark}
Note that best response is played against the other players' actions
$x_{-i}$, but aggregative best response is played against the
aggregator value $s$. Aggregative best response ignores the effect of
the player's action on the aggregator, which is bounded by $\gamma$;
the player reasons about the utility of playing different actions as
if the aggregator value were \emph{promised} to be $s$. Nevertheless,
aggregative best response and best response can translate to each
other with only an additive loss of $\gamma$ in the approximation
factor. Furthermore, aggregative best responses to different
aggregators can translate to each other as long as the corresponding
aggregators are close. If $\|s-s'\|_\infty \leq \alpha$, then the
actions in $\eta\abr(s)$ are also in $(\eta+2\alpha)\abr(s')$. We
state these results more formally in the following lemmas.

\end{remark}

\begin{lemma}
\label{lem:nash-prop}
Let $\vx$ be a strategy profile such that player $i$'s action $x_i$ is
an $\eta$-best response. Then $x_i$ is also an $(\eta + \gamma)$-aggregative best response to $S(\vx)$.
\end{lemma}

\begin{proof}
  Let $s = S(\vx)$ and $s' = S(a, x_{-i})$ for some deviation $a\neq
  x_i$. Since $x_i$ is an $\eta$-best response, we know that $u_i(x_i,
  s) \geq u_i(a,s') - \eta$. By the bounded influence of player $i$,
  we know that $\|s - s'\|_\infty \leq \gamma$. Also, by the Lipschitz property of $u_i$, we have that $|u_i(a,s') -
  u_i(a,s)| \leq \gamma$. It follows that $u_i(a,s') \geq u_i(a,s)
  - \gamma$, and therefore $u({x_i}, s) \geq u_i(a,s) -\gamma
  -\eta$.
\end{proof}

\begin{lemma}\label{lem:abr-prop}
  Let $\vx$ be a strategy profile such that every player is playing
  $\eta$-aggregative best response to $S(\vx)$. Then we know that each
  player is playing $(\eta + \gamma)$-best response, and hence
  $\vx$ forms a $(\eta + \gamma)$-Nash equilibrium.
\end{lemma}

\begin{proof}
  Let $s = S(\vx)$ and $s' = S(a, x_{-i})$ for some deviation $a\neq
  x_i$. Since $x_i$ is an $\eta$-aggregative best response, we know
  $u(x_i, s) \geq u_i(a, s) - \eta$. We know that $\| s - s' \|_\infty \leq
  \gamma$ by bounded influence of $i$. Then by the Lipschitz property
  of $u_i$, $u_i(a, s) \geq u_i(a, s') - \gamma$. It follows that
  $u_i(x_i, s) \geq u_i(a, s') - \eta - \gamma$.
\end{proof}

\begin{lemma}\label{lem:abr-move}
  Suppose action $x_i$ is an $\eta$-aggregative best response to $s$
  for player $i$. Let $s'$ be aggregator such that $\|s - s'\|_\infty
  \leq \alpha$. Then $x_i$ is an $(\eta + 2\alpha)$-aggregative
  best response to $s'$.
\end{lemma}

\begin{proof}
Let $a \neq x_i$ be some deviation for player $i$. Since $x_i$ is an
$\eta$-aggregative best response to $s$, we have $u_i(x_i ,s) \geq
u_i(a, s) - \eta$. By the Lipschitz property of $u_i$, $u_i(x_i, s')
\geq u_i(x_i, s) - \alpha$ and also $u_i(a, s) \geq u_i(a, s') -
\alpha$. Combining these inequalities, we have $u_i(x_i, s') \geq
u_i(a, s') - \eta -2 \alpha$.
\end{proof}




\subsection{Mediated Games}
We now define games modified by the introduction of a mediator. A
mediator is an algorithm $M: (\mathcal{T}\cup \{\perp\})^n \to \cA^n$
which takes as input reported types (or $\perp$ for any player who
declines to use the mediator), and outputs a suggested action to each
player. Given an aggregative game $G$, we construct a new game $G_M$
induced by the mediator $M$.  Informally, in $G_M$, players have
several options: they can \emph{opt-out} of the mediator (i.e. report
$\perp$) and select an action independently of it. Alternately they
can \emph{opt-in} and report to it some type (not necessarily their
true type), and receive a suggested action $r_i$. They are free to
follow this suggestion or use it in some other way: they play an
action $f_i(r_i)$ for some arbitrary function $f_i:\cA\rightarrow
\cA$. Formally, the game $G_M$ has an action set $\cA_i$ for each
player $i$ defined as $\A_i = \A_i' \cup \A_i''$, where
$$
\cA'_{i} = \{(t_i, f_i): t_i \in \mathcal{T}, f_i : \cA \to \cA\}\quad \mbox{ and }
\A_i'' = \{ (\perp, f_i): f_i \text{ is constant} \} \mbox{.}
$$

Players' utilities in the mediated game are simply their expected utilities induced by the actions they play in the original game. Formally, they have utility functions $u_i'$:
$u_i'(t,f) = \mathbb{E}_{\vx \sim M(t)}[u_i(f(\vx))]$.
We are interested in finding mediators such that \emph{good behavior} is an ex-post Nash equilibrium in the mediated game. We first define an ex-post Nash equilibrium.

\begin{definition}[Ex-Post Nash Equilibrium]
A collection of strategies $\{\sigma_i:\cT\rightarrow \cA_i\}_{i=1}^n$
forms an $\eta$-approximate ex-post Nash equilibrium if for every type
vector $t \in \cT^n$, and for every player $i$ and action $x_i \in
\cA_i$:
$$u_i'(\sigma_i(t_i),\sigma_{-i}(t_{-i})) \geq
u_i'(x_i,\sigma_{-i}(t_{-i})) - \eta$$ That is, it forms an
$\eta$-approximate Nash equilibrium for \emph{every} possible vector
of types.
\end{definition} 
Note that ex-post Nash equilibrium is a very strong solution concept for incomplete information games because it does not require players to know a prior distribution over types.

In a mediated game, we would like players to truthfully report their type, and then faithfully follow the suggested action of the mediator. We call this \emph{good behavior}. Formally, the good behavior strategy is defined as $g_i(t_i) = (t_i, \mathrm{id})$ where $\mathrm{id}:\cA\rightarrow \cA$ is the identity function -- i.e.\ it truthfully reports a player's type to the mediator, and applies the identity function to its suggested action. 


In order to achieve this, we use the notion of \emph{joint
  differential privacy} defined in \citet{KPRU14} (adapted from
\emph{differential privacy}, defined in \citet{DMNS06} and presented here in \Cref{sec:privatetool}), as a privacy measure for mechanisms on agents' private
data (types). Intuitively, it guarantees that the output to all other
agents excluding player $i$ is insensitive to $i$'s private type, so
the mechanism protects $i$'s private information from arbitrary
coalitions of adversaries.

\begin{definition}[Joint Differential Privacy~\cite{KPRU14}]
  Two type profiles $t$ and $t'$ are $i$-neighbors if they differ only
  in the $i$-th component. An algorithm $\cM:\cT^n \rightarrow \A^n$
  is {\em $(\epsilon,\delta)$-joint differentially private} if for
  every $i$, for every pair of $i$-neighbors $t, t' \in \cT^n$, and for
  every subset of outputs $\mathcal{S} \subseteq \cA^{n-1}$,
\[
  \Pr[\cM(t)_{-i} \in \mathcal{S}] \leq \exp(\epsilon)\Pr[\cM(t')_{-i} \in \mathcal{S}] + \delta.
\]
If $\delta = 0$, we say that $\cM$ is {\em $\epsilon$-jointly differentially
  private}.
\end{definition}

We here quote a theorem of \citet{RR14}, inspired by \citet{KPRU14}
which motivates our study of private equilibrium computation.
\begin{theorem}[\cite{RR14,KPRU14}]
\label{thm:priv-media}
Let $M$ be a mechanism satisfying $(\epsilon,\delta)$-joint
differential privacy, that on any input type profile $t$ with
probability $1-\beta$ computes an $\alpha$-approximate pure strategy
Nash equilibrium of the complete information game $G(t)$ defined by type profile
$t$.  Then the ``good behavior'' strategy $g = (g_1,\ldots,g_n)$ forms
an $\eta$-approximate ex-post Nash equilibrium of the mediated game
$G_M$ for
$$\eta = \alpha + 2(2\epsilon + \beta +\delta) .$$
\end{theorem}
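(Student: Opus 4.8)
The plan is to fix an arbitrary type profile $t\in\cT^n$ and an arbitrary player $i$, and show directly that in $G_M$ no unilateral deviation of $i$ away from good behavior gains more than $\eta$; since $t$, $i$, and the deviation are all arbitrary, this is exactly the definition of an $\eta$-approximate ex-post Nash equilibrium. Under good behavior the mediator receives the truthful profile $t$, draws $\vx\sim M(t)$, and every player plays $\vx$, so $u_i'(g(t)) = \Ex{\vx\sim M(t)}{u_i(\vx)} =: A$. A deviation of $i$ is some $(t_i',f_i)\in\cA_i$ (possibly an opt-out $(\perp,f_i)$ with $f_i$ constant): it causes the mediator to receive $t':=(t_i',t_{-i})$, draw $\vx'\sim M(t')$, and results in $i$ playing $f_i(x_i')$ while every other player $j$ plays its suggestion $x_j'$, so the deviation payoff is $B := \Ex{\vx'\sim M(t')}{u_i(f_i(x_i'),x_{-i}')}$. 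The goal is to prove $A\ge B-\eta$.

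First I would collapse the deviation payoff onto the other players' actions: since $u_i(f_i(x_i'),x_{-i}')\le h(x_{-i}')$ for $h(x_{-i}):=\max_{a\in\cA}u_i(a,x_{-i})\in[-1,1]$, we have $B\le \Ex{\vx'\sim M(t')}{h(x_{-i}')}$, and this bound depends on $M(t')$ only through its marginal on the coordinates $-i$. That is the whole point of working with \emph{joint} differential privacy: I can now transport this marginal from $M(t')_{-i}$ to $M(t)_{-i}$. Concretely, the definition of $(\epsilon,\delta)$-joint differential privacy (applied on the domain $(\cT\cup\{\perp\})^n$ so as to also cover opt-outs, and with the roles of $t$ and $t'$ interchanged) gives $\Pr[M(t')_{-i}\in\cS]\le e^{\epsilon}\Pr[M(t)_{-i}\in\cS]+\delta$ for every $\cS\subseteq\cA^{n-1}$; combined with the standard fact that $P(\cS)\le e^\epsilon Q(\cS)+\delta$ for all $\cS$ implies $\Ex{P}{h}\le\Ex{Q}{h}+2(e^\epsilon-1)+2\delta$ for any $h$ valued in $[-1,1]$ (via the layer-cake representation after rescaling $h$ into $[0,1]$), and using $2(e^\epsilon-1)\le 4\epsilon$ for $\epsilon\le 1$, this yields $B \le \Ex{\vx\sim M(t)}{h(x_{-i})} + 4\epsilon + 2\delta$.

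It then remains to bound $\Ex{\vx\sim M(t)}{h(x_{-i})}$ in terms of $A$, and this is where the approximate-equilibrium hypothesis enters. Let $E$ be the event, over the internal randomness of $M$ on input $t$, that $\vx=M(t)$ is an $\alpha$-approximate pure strategy Nash equilibrium of the complete information game $G(t)$; by hypothesis $\Pr[\bar E]\le\beta$. On $E$, player $i$'s action $x_i$ is an $\alpha$-best response to $x_{-i}$, so $h(x_{-i})=\max_a u_i(a,x_{-i})\le u_i(x_i,x_{-i})+\alpha = u_i(\vx)+\alpha$; and always $h(x_{-i})\le 1\le u_i(\vx)+2$ since utilities lie in $[-1,1]$. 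Taking expectations over $\vx\sim M(t)$, $\Ex{\vx\sim M(t)}{h(x_{-i})}\le A + \alpha + 2\Pr[\bar E]\le A+\alpha+2\beta$. Combining with the previous bound, $B\le A+\alpha+2\beta+4\epsilon+2\delta = A+\alpha+2(2\epsilon+\beta+\delta)$, i.e. $A\ge B-\eta$ with $\eta=\alpha+2(2\epsilon+\beta+\delta)$, as claimed.

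The one point that has to be handled with care — and the reason the naive ``small influence'' intuition does not suffice — is the coupling in the second paragraph: a priori $B$ depends on the full joint law of $(x_i',x_{-i}')$ under $M(t')$, and all we know about $M(t')$ is that it computes an equilibrium of $G(t')$, in which player $i$ has the \emph{fake} type $t_i'$; that tells us nothing about $i$'s true utility. The fix is precisely the two-step maneuver above: eliminate the dependence on $x_i'$ by maximizing over $i$'s own actions, leaving a function of the $(-i)$-marginal, and then use joint differential privacy to replace the marginal of $M(t')$ by that of $M(t)$, at which point the genuine $\alpha$-equilibrium guarantee for the truthful game $G(t)$ can finally be applied. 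The remaining ingredients — the DP-to-expectation inequality and the union bound over $\bar E$ — are routine.
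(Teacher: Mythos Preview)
The paper does not provide its own proof of this statement: it is quoted as a known result from \cite{RR14,KPRU14} (``We here quote a theorem of \citet{RR14}, inspired by \citet{KPRU14}\ldots''), so there is nothing in-paper to compare against. Your argument is correct and is essentially the standard proof from those references. The two-step maneuver you articulate --- first collapsing the deviation payoff to a function $h(x_{-i})=\max_a u_i(a,x_{-i})$ of the $(-i)$-marginal only, then using joint differential privacy to transport that marginal from $M(t')$ to $M(t)$, and only then invoking the $\alpha$-equilibrium guarantee on the truthful input --- is exactly the intended idea, and your explanation of \emph{why} this ordering is necessary (the equilibrium guarantee on $M(t')$ concerns the fake type, not the true one) is the crux.

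Two minor remarks. First, the constant you derive from the DP-to-expectation inequality is slightly loose: from $P(\cS)\le e^\epsilon Q(\cS)+\delta$ and $h\in[-1,1]$ one gets $\Ex{P}{h}\le e^\epsilon\Ex{Q}{h}+(e^\epsilon-1)+2\delta$, and since $\Ex{Q}{h}\le 1$ this yields at most $\Ex{Q}{h}+2(e^\epsilon-1)+2\delta$; the statement's $4\epsilon$ then requires $\epsilon\le 1$, which you correctly note but which is an implicit hypothesis not stated in the theorem. Second, you are right to flag that joint differential privacy must hold over the enlarged input space $(\cT\cup\{\perp\})^n$ for the opt-out case to be covered; the paper's formal definition is stated only on $\cT^n$, so strictly speaking this is an assumption one must add (and which the cited works make).
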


Our private equilibrium computation relies on two private algorithmic
tools, sparse vector mechanism (called $\SV$) and exponential
mechanism (called $\EXP$), which allows us to access agents' types in
a privacy-preserving manner.  (Full details in
\Cref{sec:privatetool}.)

\section{Private Equilibrium Computation}\label{s.multi}
Let $G$ be a $d$-dimensional $\gamma$-aggregative game, and $L\colon
\cA^n \rightarrow \RR$ be a $\gamma$-Lipschitz\footnote{This can be
  achieved by scaling.} linear loss function:
\[
L(\vx) = \gamma \sum_i \ell_i(x_i) \quad \mbox{ and } \quad L(\vp) =
\gamma\Expectation_{\vx \sim \vp} L(\vx) =\gamma \sum_i \langle p_{ij}
, \ell_{ij}\rangle.
\]
where $0\leq \ell_{i}(a_j)\leq 1$ for all actions $a_j\in \A$, and
$\ell_{ij} = \ell_i(a_j)$.

Given any $\zeta \geq \gamma\sqrt{8n\log(2mn)}$, let $\cE(\zeta)$ be the set
of $\zeta$-approximate pure strategy Nash equilibria in the game
$G$,\footnote{We will show that $\cE(\zeta)$ is non-empty for $\zeta
  \geq \gamma\sqrt{8n\log(2mn)}$ in~\Cref{section:psl}.} and let
 \[\OPT(\zeta) = \min\{L(\vx) \mid \vx\in \cE(\zeta)\}.\]

We give the following main result:
\begin{theorem}
\label{main.thm}
For any $\zeta \geq \gamma\sqrt{8n\log(2mn)}$, there exists a mediator
$M$ that makes good behavior an $(\zeta + \eta)$-approximate ex-post
Nash equilibrium of the mediated game $G_M$, and implements an
approximate pure strategy Nash equilibrium $\vx$ of the underlying
complete information game with $L(\vx) \leq \OPT(\zeta) + \eta$, where
\[
\eta = O\left(n^{1/3} \gamma^{2/3} \sqrt{d}\cdot \polylog(n, m, d)
\right).\]
\end{theorem}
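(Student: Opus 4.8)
The plan is to invoke \Cref{thm:priv-media} and thereby reduce the theorem to constructing an $(\eps_0,\delta_0)$-jointly differentially private mechanism $M$ that, on the true type profile $t$, with probability $1-\beta$ outputs a pure strategy profile $\vx$ that is both a $\big(\zeta+O(\eta)\big)$-approximate pure Nash equilibrium of $G(t)$ and satisfies $L(\vx)\le \OPT(\zeta)+O(\eta)$; one then takes the final approximation to be $O(\eta)+2(2\eps_0+\beta+\delta_0)$ and rescales. The central object is a family of convex programs indexed by a candidate value of the aggregator. Discretize the cube $[-W,W]^d$ into a grid $\mathcal{G}$ of granularity $\tau$ (so $|\mathcal{G}|=(O(W/\tau))^d$), fix $\zeta_0:=\zeta+\gamma+\tau$ and a penalty $\Lambda=\Theta(n\gamma/\tau)$, and for each $s\in\mathcal{G}$ let
\[
v(s)\;=\;\min_{\vp:\ \Supp(\vp_i)\subseteq \zeta_0\abr_i(s)\ \forall i}\ \Big(L(\vp)+\Lambda\max\big(0,\|S(\vp)-s\|_\infty-\tau\big)\Big),
\]
where by \eqref{eq:expected-agg} the quantity $\|S(\vp)-s\|_\infty$ is the max of $2d$ linear functions of $\vp$, and the per-player support constraint is the only coupling-free, type-dependent ingredient. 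This $v$ is always finite, and changing one $t_i$ only alters the set $\zeta_0\abr_i(s)$, which perturbs player $i$'s contribution to $L$ and to each aggregator coordinate by at most $\gamma$; hence $v$ has $\ell_1$-sensitivity $O(\gamma\Lambda)=O(n\gamma^2/\tau)$ in any one type, and small $v(s)$ certifies both near-optimality of $L$ and near-feasibility of the aggregator-consistency constraints.

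First I would establish the non-private skeleton. Let $\vx^{\star}\in\cE(\zeta)$ attain $\OPT(\zeta)$ and let $s'\in\mathcal{G}$ be nearest to $S(\vx^{\star})$; since each $x^{\star}_i$ is a $\zeta$-best response in $\vx^{\star}$, \Cref{lem:nash-prop} and \Cref{lem:abr-move} give $x^{\star}_i\in\zeta_0\abr_i(s')$, the point mass on $\vx^{\star}$ is therefore feasible at $s'$ with zero penalty, so $v(s')\le L(\vx^{\star})=\OPT(\zeta)$ and $\min_{s}v(s)\le\OPT(\zeta)$. Conversely, given any $s$ and any $\vp$ with $\Supp(\vp_i)\subseteq\zeta_0\abr_i(s)$ and $\|S(\vp)-s\|_\infty\le O(\tau)$, drawing $\vx\sim\vp$ with actions sampled independently keeps each $x_i\in\zeta_0\abr_i(s)$, and Hoeffding's inequality applied coordinatewise (each $S_k$ is a sum of $n$ terms of magnitude $O(\gamma)$) with a union bound over $[d]$ gives $\|S(\vx)-s\|_\infty\le O(\tau)+O\big(\gamma\sqrt{n\log(d/\beta)}\big)$ with probability $1-\beta$; \Cref{lem:abr-move} then places each $x_i$ in $\big(\zeta_0+O(\tau)+O(\gamma\sqrt{n\log(d/\beta)})\big)\abr_i(S(\vx))$ and \Cref{lem:abr-prop} makes $\vx$ a $\big(\zeta+O(\tau)+O(\gamma\sqrt{n\log(d/\beta)})+O(\gamma)\big)$-approximate pure Nash equilibrium, while $|L(\vx)-L(\vp)|\le O\big(\gamma\sqrt{n\log(1/\beta)}\big)$ with high probability. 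Since $\zeta\ge\gamma\sqrt{8n\log(2mn)}$ and we are in the regime where $\eta=o(1)$, these $\gamma\sqrt{n\log}$ slacks, together with $\tau$ and $\gamma$, are $O(\eta)$; so it remains to privately (a) select a grid point $\hat s$ with $v(\hat s)$ nearly minimal and (b) produce a $\vp$ nearly minimizing the penalized objective at $\hat s$.

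Step (a) is the exponential mechanism $\EXP$ over $\mathcal{G}$ with score $-v(\cdot)$; by the sensitivity bound it returns, with high probability, $\hat s$ with $v(\hat s)\le\OPT(\zeta)+O\big((n\gamma^2/\tau)\,d\log(W/\tau)/\eps_1\big)$, which (by the penalty) also forces $\|S(\vp)-\hat s\|_\infty\le O(\tau)$ for the near-optimal $\vp$ and $L(\vp)\le\OPT(\zeta)+O(\eta)$. Step (b)---which I expect to be the crux---is a jointly differentially private solver for the single convex program at $\hat s$, extending the private linear-programming framework of \citet{HRRU14}. Write the penalized program as a linear program in $(\vp,z)$ by introducing $z\ge 0$ with $z\ge\pm(\gamma\langle f^k,\vp\rangle-\hat s_k)-\tau$, and dualize only these $2d$ coupling constraints with multipliers $\lambda$ in the scaled simplex $\{\lambda\ge 0:\ \sum\lambda\le\Lambda\}$: for fixed $\lambda$ the Lagrangian separates across players, and player $i$'s minimizer is a single action of $\zeta_0\abr_i(\hat s)$ computable from $t_i$ and the public pair $(\hat s,\lambda)$. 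Run no-regret (multiplicative-weights) dynamics on $\lambda$ driven by the residuals $\gamma\langle f^k,\vp^{(r)}\rangle-\hat s_k$, each of which has sensitivity $O(\gamma)$ under a single $t_i$, so the running $d$-dimensional residual vector is monitored privately via $\SV$ (flagging which constraints are still violated, so privacy is charged only for the bounded number of dual updates) together with advanced composition over the resulting queries. Output the time-average $\bar\vp$ and purify, $x_i\sim\bar\vp_i$ independently. The only data-dependent public state released is $(\hat s,\lambda^{(1)},\lambda^{(2)},\ldots)$, and each player's action is a function of that state and her own type, so a billboard-style argument lifts the differential privacy of $\EXP$ and of the solver to \emph{joint} differential privacy of $M$, with purification being post-processing.

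Finally, quantitative bookkeeping fixes $\eta$. The solver trades its internal accuracy target $\theta$ against the noise needed to protect the $2d$ aggregate constraints: these have width $O(W)=O(n\gamma)$, forcing roughly $(n\gamma/\theta)^2\cdot\polylog(d)$ dual updates, and advanced composition over the private evaluations of the $d$-dimensional residual then makes the effective per-round error scale like $n\gamma^2\cdot(\text{factors of }d)\cdot\polylog/(\theta\eps_0)$, so the optimal $\theta$ gives solver error $\gamma\sqrt{n}\cdot(\text{factors of }d)\cdot\polylog/\sqrt{\eps_0}$, and the $L$-guarantee degrades by the same order. Setting the privacy budget $\eps_0=\eps_1+\eps_2$ to match $\eta$ (which is also what \Cref{thm:priv-media} charges to the approximation) and choosing $\tau$ and $\beta,\delta_0=1/\poly(n)$ so that $\tau$, the purification slack $O(\gamma\sqrt{n\log n})$, the $\EXP$ error $O((n\gamma^2/\tau)d\log(W/\tau)/\eps_0)$, and the solver error are all $O(\eta)$, every constraint has the shape ``$n\gamma^2\cdot\mathrm{poly}(d,\log)/(\mathrm{param}\cdot\eps_0)\lesssim\eta$'' with $\eps_0,\mathrm{param}\lesssim\eta$, hence $\eta^3\gtrsim n\gamma^2\cdot\mathrm{poly}(d,\log)$; tracking the $d$- and logarithmic factors through and using $\gamma=\Omega(1/n)$ shows it suffices to take
\[
\eta=O\!\left(n^{1/3}\gamma^{2/3}\sqrt{d}\cdot\polylog(n,m,d)\right).
\]
The non-private skeleton, the sensitivity calculation for $\EXP$, and the purification concentration bounds are routine given the lemmas already in hand; the genuinely new and delicate ingredient is the jointly differentially private solver of step (b) and the verification that its error does not dominate the stated rate.
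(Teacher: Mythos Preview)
Your plan follows the same template as the paper's proof: reduce via \Cref{thm:priv-media} to an $(\eps,\delta)$-jointly differentially private algorithm that (a) privately identifies a grid point $\hat s$ near the aggregator of a good equilibrium, (b) privately solves the resulting LP to obtain a mixed profile, and (c) purifies by independent sampling, with joint differential privacy coming from a billboard argument. The non-private skeleton, the concentration step for purification, and the final parameter balance are essentially what the paper does, and your identification of step~(b) as the crux is accurate.

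Where you diverge is in the mechanics of (a) and (b). For (a), the paper does \emph{not} use a penalized objective with \EXP; it grids over both the aggregator $\hat s$ and a target objective value $\hat y$, and for each pair defines a feasibility-LP value $Q(\hat s,\hat y)$ with sensitivity only $O(\gamma)$ (not $O(\Lambda\gamma)$), then runs \SV\ over this collection to find one pair below threshold. Your penalty trick works, but it inflates the sensitivity of the score by the factor $\Lambda=\Theta(n\gamma/\tau)$; the paper's $(\hat s,\hat y)$ grid avoids this entirely, which is why its selection error is $O(\gamma d\log(W)/\eps)$ rather than $O((n\gamma^2/\tau)\,d\log(W)/\eps)$.

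For (b), the paper's \DMW\ is the \emph{dual} of what you describe: each player runs multiplicative weights on her own distribution over actions (the primal variables), and the only private broadcast each round is a single call to \EXP\ that selects the most-violated of the $2d{+}1$ coupling constraints; the players then use that constraint's coefficient vector as their loss. Your proposal runs MW on the dual $\lambda$ and has players best-respond---a legitimate alternate route---but the sentence ``the running $d$-dimensional residual vector is monitored privately via \SV'' does not work as written. Sparse vector only tells you which queries cross a threshold; it does not hand you the residual vector that a dual-MW update needs, and ``privacy charged only for the bounded number of dual updates'' is not what \SV\ buys here. A corrected version of your step~(b) would either (i) release a noisy $d$-dimensional residual each round (Gaussian mechanism plus advanced composition), paying an extra $\mathrm{poly}(d)$ that you would then have to track, or (ii) use \EXP\ to pick a single worst constraint per round---at which point you have essentially rediscovered the paper's \DMW\ from the other side. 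Either fix closes the argument; as stated, that one step is the gap.
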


Recall that the quantity $\gamma$ is diminishing in $n$; whenever
$\gamma = O(1/n^{1/2 + \epsilon})$ for $\epsilon > 0$, the
approximation factor $\eta$ tends towards zero as $n$ grows large.
Plugging in $\gamma = 1/n$ and $\zeta = \gamma\sqrt{8n\log(2mn)}$
recovers the bound in Theorem \ref{thm.informal}.

This result follows from instantiating \Cref{thm:priv-media} with an
algorithm that computes an approximate equilibrium under joint
differential privacy, presented in \Cref{alg:pLP} as \PSL \ (Private
Equilibrium Selection).\footnote{We also present the full details of
  the non-private algorithm to compute equilibrium for aggregative
  games in \Cref{s.fixedpoint}.} We give here an informal description
of our algorithm, absent privacy concerns, and then describe how we
implement it privately, deferring the formal treatment to
\Cref{section:psl}. 

The main object of interest in our algorithm is the set-valued
function
\[\cV_\xi(\hat s) = \{S(\vp) \mid \text{for each }i,
\Supp(p_i)\subseteq \xi\abr_i(\hat s)\},\] which maps aggregator
values $\hat{s}$ to the set of aggregator values that arise when
players are randomizing between $\xi$-aggregative best responses to
$\hat{s}$. An approximate equilibrium will yield an aggregator
$\hat{s}$ such that $\hat{s} \in \mathcal{V}_\xi(\hat{s})$, so we wish
to find such a fixed point for $\cV_\xi$ (the value of $\xi$ will be
determined in the analysis, see~\Cref{section:psl}). Note that
\emph{pure strategy} Nash equilibria correspond to such fixed points,
but a-priori, it is not clear that fixed points of this function
(which may involve mixed strategies) are mixed strategy Nash
equilibria. This is because player utility functions need not be
linear in the aggregator, and so a best response to the expected value
of the aggregator need not be a best response to the corresponding
distribution over aggregators. However, as we will show, we can safely
round such fixed points to approximate pure strategy Nash equilibria,
because the aggregator will be well concentrated under rounding.

For every fixed value $\hat{s}$, the problem of determining whether
$\hat{s} \in \mathcal{V}_\xi(\hat{s})$ is a linear program (because
the aggregator is linear), and although $\Supp(p_i)\subseteq
\xi\abr_i(\hat s)$ is not a convex constraint in $\hat{s}$, the
aggregative best responses are fixed for each fixed value of
$\hat{s}$.  The first step of our algorithm simply searches through a
discretized grid of all possible aggregators $X=\{-W, -W + \alpha,
\ldots , W-\alpha\}^d$, and solves this linear program to check if
some point $\hat{s} \in \mathcal{V}_\xi(\hat{s})$. This results in a
set of aggregators $S$ that are induced by the approximate equilibria
of the game. Let $p_{ij}$ denote the probability that player $i$ plays
the $j$-th action. Then the linear program we need to solve is as follows:
\begin{equation}
\begin{aligned}
\label{eq:non-privateLP}
  &\forall k\in[d],\qquad \hat{s}_k - \alpha \leq \gamma\sum_{i=1}^n\sum_{j=1}^m
  f_{ij}^k p_{ij} \leq \hat{s}_k + \alpha\\
  &\forall i\in[n],\qquad \forall j\in \xi\text{-}\br_i(\hat{s}),    \qquad 0\leq p_{ij} \leq 1\\
  &\forall i\in[n],\qquad \forall j\notin \xi\text{-}\br_i(\hat{s}), \qquad
  p_{ij} = 0 
\end{aligned}
\end{equation}

Next, we need to find a particular equilibrium (an assignment of
actions to players) that optimizes our cost-objective function $L$.
This is again a linear program (since the objective function is
linear) for each $\hat{s}$. Hence, for each fixed point $\hat{s}\in
\cV_\xi(\hat s)$ we simply solve this linear program, and out of all
of the candidate equilibria, output the one with the lowest cost.
Finally, this results in mixed strategies for each of the players, and
we round this to a pure strategy Nash equilibrium by sampling from
each player's mixed strategy. This does not substantially harm the
quality of the equilibrium; because of the low sensitivity of the
aggregator, it is well concentrated around its expectation under this
rounding. The running time of this algorithm is dominated by the grid
search for the aggregator fixed point $\hat{s}$, which takes time
exponential in $d$. Solving each linear program can be done in time
polynomial in all of the game parameters.

Making this algorithm satisfy joint differential privacy is more
difficult. There are two main steps. The first is to identify
the fixed point $\hat{s} \in \mathcal{V}_\xi(\hat{s})$ that corresponds
the lowest cost equilibrium.  There are exponentially in $d$ many candidate aggregators to check, and with naive noise addition we would have to pay for this exponential factor in our accuracy bound. However, we take advantage of the
fact that we only need to \emph{output} a single aggregator -- the one
corresponding to the lowest objective value equilibrium -- and so
the \emph{sparse vector mechanism} $\SV$ (described in \Cref{s.sparse}) can be brought to bear, allowing us to pay only
linearly in $d$ in the accuracy bound.

The second step is more challenging, and requires a new technique: we
must actually solve the linear program corresponding to $\hat{s}$, and
output to each player the strategy they should play in equilibrium.
The output strategy profile must satisfy joint differential privacy.
To do this, we give a general method for solving a class of linear
programs (containing in particular, LPs of the form
\eqref{eq:non-privateLP}) under joint differential privacy, which may
be of independent interest. This algorithm, which we call $\DMW$
(described in \Cref{sec:distmw}), is a distributed version of the
classic multiplicative weights (MW) technique for solving LPs
\citep{mw-survey}. The algorithm can be analyzed by viewing each agent
as controlling the variables corresponding to their own mixed
strategies, and performing their multiplicative weights updates in
isolation (and ensuring that their mixed strategies always fall within
their best response set $\xi\text{-}\br_i(\hat{s}))$. At every round,
the algorithm aggregates the current solution maintained by each
player,
and then identifies a coordinate in which the constraints are far from being satisfied. The
algorithm uses the \emph{exponential mechanism} $\EXP$
(described in \Cref{s.exp}) to pick such a coordinate while
maintaining the privacy of the players' actions. The identification of such a
coordinate is sufficient for each player to update their own
variables. Privacy then follows by combining the privacy guarantee of
the exponential mechanism with a bound on the convergence time of the
multiplicative weights update rule. The fact that we can solve this LP
in a distributed manner to get joint differential privacy (rather than
standard differential privacy) crucially depends on the fact that the
sensitivity $\gamma$ of the aggregator is small. The algorithm \DMW
will find a set of strategies that approximately satisfy the linear
program  -- the violation on each coordinate is
bounded by
\[ E = {O}\left( \frac{n\gamma^2}{\eps} \polylog\left(n, m, d,
    \frac{1}{\beta}, \frac{1}{\delta} \right)\right)^{1/2}.
\]

\begin{algorithm}

\KwData{A type vector $t$, comparator parameter $\zeta$, linear cost
  function $L$, privacy parameters $(\eps, \delta)$, confidence
  parameter $\beta$}

\KwResult{An $\tilde{O}\left(\zeta + \frac{\left(\sqrt{n\eps} + d
    \right) \gamma}{\eps}\right)$-approximate pure strategy Nash
  equilibrium with cost objective no more than $\OPT(\zeta) +
  \tilde{O}\left(\frac{\left(\sqrt{n\eps} + d \right)
    \gamma}{\eps}\right)$}

\textbf{Initialize}: discretization resolution $\alpha =
      E_1 + E_2$, where
      \[
      E_1 = {\frac{100\gamma}{\eps} \left((d + 1)\log(2W)\log(n) +
          \log\left(\frac{6}{\beta} \right)\right)}, \phantom{m} \]\[E_2=
      100 \left(\frac{n\gamma^2}{\eps}\log\left(\frac{3d}{\beta}
        \right)\log(n)\sqrt{\log(m)\ln\left(\frac{1}{\delta}\right)}\right)^{1/2}
        \]

\tcc{Find a fixed point $\hat s \in \cV_\xi(\hat s)$ that corresponds
  to the lowest cost equilibrium}

 \Indp{\textbf{let} $\{a(\hat{s},
  \hat y)\} = \SV(t, \cQ, \alpha + E_1, 1, \eps)$}\tcp*{$\cQ$ formally
  defined in~\Cref{section:psl}.}

\eIf{all $a(\hat{s}, \hat y) = \perp$}{Abort\;}
{we have $(\hat{s}, \hat y)$ such that $a(\hat{s})
      \neq \perp$\;}
{\textbf{let} $\vec{p} = \DMW(LP(\hat
      s, \hat y), \epsilon, \delta,
      \alpha, \beta/3)$\;
\textbf{let} $\vx$ be an action profile sampled from the product
  distribution $\vp$\;}
\Indm{\textbf{Output: } $\vec{x}$\;}
\caption{Private Equilibrium Selection via LP: $\PSL(t, \zeta, L, \eps, \delta, \beta)$\label{alg:pLP}}
\end{algorithm}

The algorithm \PSL has the following guarantee:

\begin{restatable}{theorem}{multid}
\label{thm:multi-util}
Let $\zeta \geq \gamma\sqrt{8n\log(2mn)}, \eps,\delta,\beta\in (0,1)$.
$\PSL(t, \zeta, L, \eps, \delta, \beta)$ satisfies $(2\eps,\delta)$-joint
differential privacy, and, with probability at least $1 - \beta$,
computes a $(\zeta + 12 \alpha)$-approximate pure strategy equilibrium
$\vx$ such that $L(\vx) < \OPT(\zeta) + 5\alpha$, where
\[
\alpha = O\left( \frac{\left(\sqrt{n\eps} + d \right)\gamma}{\eps}\polylog\left(n,m,d, 1/\beta, 1/\delta \right) \right).
\]
\end{restatable}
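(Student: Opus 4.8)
The plan is to analyze the two phases of $\PSL$ separately — the sparse-vector search for a good fixed point $\hat s$, and the distributed multiplicative-weights solver $\DMW$ for the corresponding LP — and then combine their accuracy and privacy guarantees, finishing with a concentration argument for the final rounding step. I would first set up a query class $\cQ$ (to be formalized in \Cref{section:psl}): for each grid point $\hat s \in X$ and each candidate objective threshold $\hat y$ on a discretized grid of possible values of $L$, the query $q_{\hat s,\hat y}(t)$ asks whether the LP \eqref{eq:non-privateLP} augmented with the constraint $L(\vp) \le \hat y$ is feasible up to slack $\alpha$. Because each $f^k_{ij}$ and $\ell_{ij}$ is scaled by $\gamma$, a single player changing type changes the feasibility margin of any such query by at most $O(\gamma)$, so the queries have sensitivity $O(\gamma)$; invoking the $\SV$ guarantee (\Cref{s.sparse}) with threshold error $\alpha + E_1$ shows that with probability $1-\beta/3$, $\SV$ halts on some $(\hat s,\hat y)$ whose true LP is feasible with slack $\alpha + O(E_1)$, while never returning a $(\hat s,\hat y)$ whose LP is infeasible by more than $\alpha + O(E_1)$; since $\OPT(\zeta)$-value equilibria exist (\Cref{section:psl}) and lie on a nearby grid point, the returned $\hat y$ satisfies $\hat y \le \OPT(\zeta) + O(\alpha)$, and by \Cref{lem:abr-move} the corresponding best-response sets change by at most $2\alpha$, so the LP at the grid point is genuinely feasible.

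Next I would feed $LP(\hat s,\hat y)$ into $\DMW$. The key structural point is that each constraint is a sum $\gamma\sum_i (\cdot)$ of per-player terms with coefficients in $[-1,1]$, so each player's block of variables $p_i$ (constrained to the simplex over $\xi\text{-}\br_i(\hat s)$) contributes sensitivity $O(\gamma)$ to every constraint; this is exactly what lets the distributed MW dynamics be run with each player updating in isolation and only the \emph{identity} of a maximally-violated coordinate being released via $\EXP$ (\Cref{s.exp}). I would quote the convergence bound for MW on LPs (\citet{mw-survey}): after $T = O(\log(m)/\rho^2)$ rounds the width-$\rho$ violation is $\le \rho$; composing $T$ invocations of $\EXP$ (each with sensitivity $O(\gamma)$) via advanced composition \citep{boostingDP} gives $(\eps,\delta)$-privacy provided each $\EXP$ call uses privacy budget $\approx \eps/\sqrt{T\log(1/\delta)}$, which injects error $O(\gamma \sqrt{T\log(1/\delta)}\log(m)/\eps)$ into the coordinate selection. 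Balancing the number of rounds against this noise yields a per-coordinate constraint violation of $E = E_2 = O\big((n\gamma^2/\eps)\,\polylog\big)^{1/2}$ (the factor $n$ appears because the width is $n\gamma$, not $\gamma$). Thus $\DMW$ returns a product distribution $\vp$ satisfying all constraints of \eqref{eq:non-privateLP} up to $\alpha = E_1 + E_2$ in each coordinate, with cost $L(\vp) \le \hat y + O(\alpha) \le \OPT(\zeta) + O(\alpha)$, all with probability $1-\beta/3$.

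For the equilibrium conclusion: the output $\vp$ has $\Supp(p_i) \subseteq \xi\text{-}\br_i(\hat s)$ and $\|S(\vp) - \hat s\|_\infty \le \alpha + E$, so by \Cref{lem:abr-move} every action in $\Supp(p_i)$ is a $(\xi + 2(\alpha+E))$-aggregative best response to $S(\vp)$; choosing $\xi = O(\alpha)$ and applying \Cref{lem:abr-prop} gives that $\vp$ is an $O(\alpha)$-approximate mixed equilibrium. Finally I would handle the rounding: sampling $\vx \sim \vp$, the aggregator $S(\vx) = \gamma\sum_i f_i(x_i)$ is a sum of $n$ independent bounded vectors each of $\ell_\infty$-norm $\le \gamma$, so a vector Hoeffding/Azuma bound plus a union bound over $d$ coordinates gives $\|S(\vx) - S(\vp)\|_\infty \le O(\gamma\sqrt{n\log(d/\beta)})$ with probability $1-\beta/3$; this is where the hypothesis $\zeta \ge \gamma\sqrt{8n\log(2mn)}$ is used, absorbing the rounding loss into the $\zeta$ term. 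Combining via \Cref{lem:abr-move} and \Cref{lem:abr-prop} once more, $\vx$ is a $(\zeta + 12\alpha)$-approximate pure Nash equilibrium with $L(\vx) \le \OPT(\zeta) + 5\alpha$. Privacy of the whole algorithm is $(2\eps,\delta)$-joint DP by composing the $\eps$-DP of $\SV$ with the $(\eps,\delta)$-joint DP of $\DMW$ (the post-processing/sampling step is free), using that $\DMW$'s output to player $i$ depends on others' types only through the released violated-coordinate sequence.

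\textbf{Main obstacle.} The crux is the second phase: establishing that the distributed multiplicative-weights solver simultaneously (i) converges to an approximately feasible point of the LP, (ii) can be run so that only a low-sensitivity statistic (the index of a violated coordinate, chosen by $\EXP$) ever leaves the "joint" information channel, so that player $i$'s suggested strategy is insensitive to $t_{-i}$, and (iii) has its round count and per-round privacy budget balanced to yield the stated $E_2$. Getting the width parameter right — recognizing it scales as $n\gamma$, which is why the final error has an $n\gamma^2$ rather than $\gamma^2$ inside the square root — and verifying that the best-response-set constraints $\Supp(p_i)\subseteq \xi\text{-}\br_i(\hat s)$ are preserved under MW projection without breaking privacy, is the delicate part; everything else is bookkeeping with the cited $\SV$, $\EXP$, and composition theorems.
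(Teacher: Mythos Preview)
Your overall architecture---$\SV$ to locate a good $(\hat s,\hat y)$, $\DMW$ to solve the resulting LP, concentration for the rounding step, and billboard-style composition for privacy---matches the paper's proof exactly. The concrete gap is in the role of $\zeta$ and your choice of $\xi$. You write ``choosing $\xi = O(\alpha)$'' and later say the hypothesis $\zeta \ge \gamma\sqrt{8n\log(2mn)}$ is used to ``absorb the rounding loss into the $\zeta$ term.'' Neither is how the argument actually works. The parameter $\zeta$ is an \emph{input} specifying the comparison class $\cE(\zeta)$; it can be arbitrarily larger than $\alpha$. The optimal $\zeta$-equilibrium $\vx^\star$ with $L(\vx^\star)=\OPT(\zeta)$ has each player playing only a $\zeta$-best response, hence (by \Cref{lem:nash-prop} and \Cref{lem:abr-move}) a $(\zeta+\gamma+2\alpha)$-aggregative best response to the nearest grid point $\hat s$. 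For $\vx^\star$ to be a feasible witness for the LP at $(\hat s,\hat y)$ with $\hat y\approx \OPT(\zeta)$---which is what your $\SV$ argument needs---you must set $\xi = \zeta+\gamma+2\alpha$, not $O(\alpha)$. With $\xi=O(\alpha)$ the feasible region of every LP only contains $O(\alpha)$-equilibria, so $\SV$ can only certify $\hat y \le \OPT(O(\alpha))+O(\alpha)$, and since $\OPT$ is nonincreasing in its argument this can be strictly larger than $\OPT(\zeta)+5\alpha$. Your equilibrium-quality bound would still hold (trivially, it would even be tighter), but the cost bound $L(\vx) < \OPT(\zeta)+5\alpha$ would fail.

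Correspondingly, the lower bound on $\zeta$ is there only to guarantee $\cE(\zeta)\neq\emptyset$ so that $\OPT(\zeta)$ is well-defined; the rounding error $E_3 = O(\gamma\sqrt{n\log(d/\beta)})$ is shown separately to be at most $\alpha$ and is accounted for inside the $12\alpha$ and $5\alpha$ terms, not inside $\zeta$. Once you set $\xi = \zeta+\gamma+2\alpha$, the $\zeta$ in the final $(\zeta+12\alpha)$ bound propagates from $\xi$ through \Cref{lem:abr-move} and \Cref{lem:abr-prop}, exactly as in the paper. The rest of your proposal---sensitivity of the LP-value queries, the $E_2$ balance in $\DMW$, the billboard composition yielding $(2\eps,\delta)$-joint DP---is correct and matches the paper's treatment.
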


We defer the full proof and technical details to \Cref{section:psl}.

\begin{remark}{The running time of this algorithm is exponential
  in $d$, the dimension of the aggregative game. For games of fixed
  dimension (where $d$ is constant), this yields a polynomial time
  algorithm. This exponential dependence on the dimension matches the
  best known running time for (non-privately) computing equilibrium in
  anonymous games by \cite{DP08}, which is a sub-class of aggregative
  games.}
\end{remark}

Theorem \ref{main.thm} then follows by instantiating Theorem
\ref{thm:priv-media} with $\PSL\left(t, \zeta, L, n^{1/3}\gamma^{2/3}
d^{1/2}, \frac{1}{n}, \frac{1}{n}\right)$ -- i.e. by setting $\epsilon
= n^{1/3} \gamma^{2/3} d^{1/2}$ and $\delta = \beta = \frac{1}{n}$.

\section{An Application to Multi-Commodity Markets}\label{s.predict}

Here we give an application of our main result
to a natural market-based game, in which aggregator functions are used
to compute non-linear prices.

Consider a market with $d$ types of goods or contracts, which agents can
either buy or sell short (i.e. on each contract, an agent can be either
long, short, or neutral, and so we can think of actions as being
vectors $a \in \cA=\{-1,0,1\}^d$). In aggregate, the actions of all $n$
players will lead to a price for each contract, represented by a
vector $q \in [0,1]^d$. Agents have (potentially complicated) valuation
functions of their positions in the market, modeled as arbitrary
functions $v_i\colon \cA\rightarrow [-d,d]$, and utilities which are
quasilinear in money. Note that such valuation functions can model
arbitrary complementarity and substitute relationships between
contracts. As a result, the equilibria in this market can be complex
and diverse, and equilibrium selection becomes a problem.

Central to the game is a \emph{market maker} who
sets prices for each contract as a function of the demand. The precise
pricing rule that the market maker uses determines the structure of
the equilibria of the market. In both real markets and our idealized
game, one of the market maker's key objectives is to choose a pricing
rule that minimizes his \emph{worst-case loss} --- i.e. the loss he
might suffer over the buy and sell decisions of the market
participants (defined precisely below). A natural and realistic goal is
for this loss to be sublinear in the number $n$ of participants or
trades. For example, in random-walk models of price movements it is
typical for market maker loss to be on the order of $\sqrt{n}$ after
$n$ steps or trades.  Our model is
agnostic to the nature of the commodities being bought and sold ---
these could, for example, be contracts paying off as a function of the
realization of future events, making this a combinatorial prediction
market. See~\citet{MMfinance} and~\citet{MMprediction} for analyses of
market maker loss in both traditional finance and prediction market
models, respectively.

In the following, we show how to phrase the market
described here as an aggregative game.  We implement a market maker
that makes truthful reporting an approximate ex-post Nash equilibrium,
and computes an asymptotic Nash equilibrium of the underlying market,
all while guaranteeing that the market maker has loss bounded by
$O(n^{1/2+\epsilon})$ per commodity, for any constant $\epsilon > 0$
(i.e. almost achieving an overall loss $O(d\sqrt{n})$).

\subsection{Instantiation of the Market as an Aggregative Game}
We will formalize this setting as an $n$-player $\gamma$-aggregative game. The
action set of each player is $\A = \{-1,0,1\}^d$, where an action is a
$d$-dimensional vector of long, short, or neutral decisions (a
portfolio), where $1$ and $-1$ in the $k$-th coordinate respectively
indicate buying and selling a unit of the $k$-th security. Player
$i$'s private type is described by her private valuation function
$v_i: \cA \rightarrow [-d, d]$ that determines her value for any
portfolio of $d$ securities, held in positive or negative unit
quantities.

Given any strategy profile $\vx$, the \emph{imbalance} in each
security is the number agents buying minus the number of agents
selling: $I_k(\vx) =\sum_i (x_i)_k$. If the price of security $k$ is
$q_k$ and $a_j = 1$, then the player pays $q_k$; if $a_j = -1$, the
player is paid $q_k$; otherwise, the player receives no payment. The
price $q_k$ for each security is a (nonlinear) function of the
imbalance vector $I$ parameterized by $\lambda$:
\begin{equation}\label{eq:pricing}
q_k(I) =  \begin{cases}
    0 & \mbox{if } I_k < \frac{-\lambda}{2}\\
     {I_k}/{\lambda} + 1/2 &\mbox{if } \frac{-\lambda}{2} \leq I_k\leq \frac{\lambda}{2}\\
    1 & \mbox{if } I_k > \frac{\lambda}{2}
\end{cases}
\end{equation}
This simple ``hinge'' pricing rule is linear with slope $\lambda$ in a symmetric range of imbalances
around 0, and saturates at a price of 1 in the case of overdemand (too many buyers)
or 0 in the case of underdemand (too many sellers). We note that all the results
discussed here also hold for the standard exponential pricing rule often used
in prediction markets~\cite{MMprediction}.

\begin{figure}[h!]
\label{fig:hinge}
  \centering
    \includegraphics[width=0.5\textwidth]{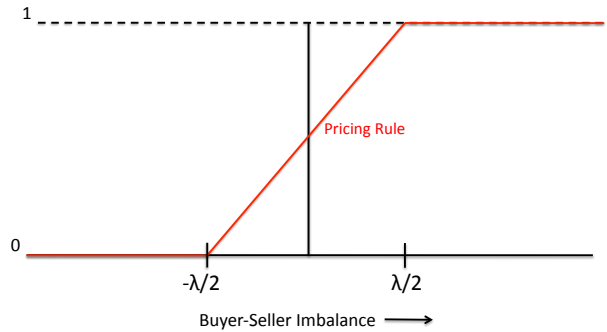}
  \caption{Hinge Pricing Rule}
\end{figure}



To apply our main result, we define our aggregator to be $S(\vx) =
I(\vx)/\lambda$. Conversely, when the aggregator has value $s$, the
imbalance vector is $\lambda s$ and each player $i$'s payoff function
(after rescaling) for playing action $x_i$ is
\[
u_i(x_i, s) = \frac{1}{2d}\left(v_i(x_i) - \langle x_i , q(\lambda s )
\rangle\right).\footnote{We normalize the utility function by $1/2d$
  to ensure that $u_i\in [-1, 1]$, so that it fits into the setting
  defined in~\Cref{s.summgame}.}
\]
For any fixed action $x_i$, the payoff $u_i$ is a
$1/2$-Lipschitz function of price vector $q$, and the price vector $q$
is an $(1/\lambda)$-Lipschitz function of the imbalance $I$, which in
turn is a $\lambda$-Lipschitz function of the aggregator. Therefore,
the payoff is a $1$-Lipschitz function of the aggregator.\footnote{In
  fact, the payoff is a $1/2$-Lipschitz function of the aggregator,
  which is a strictly stronger condition.} Here the range of the
aggregator is $[-n/\lambda, n/ \lambda]^d$, and each
player has bounded influence $\gamma = 1/\lambda$.

\subsection{Equilibrium Selection and  Market Maker's Loss}\label{s.pricing}


For each security $k$, if $I_k$ is positive, then there are $I_k$ more
buyers than sellers, and the market maker must sell to these players
at price $q_k(I)$.  The market maker will have to pay the maximum
price of $1$ to procure an extra copy of the item for each player in
the worst case, for a potential loss of $\ell_k = I_k
(1-q_k(I))$. Conversely, if $I_k$ is negative, then there are $ I_k$
more sellers than buyers, so the market maker must buy from these
players at price $q_k(I)$, for a potential loss of $\ell_k = -I_k\,
q_k(I)$. In total, the market maker's worst-case loss is $\sum_{k=1}^d
\ell_k$.

Now consider a mediated game in this market in which the market maker
wishes to elicit private valuation functions from all players and make
buy/sell recommendations to each player. We have the freedom to set
the pricing rule via choice of the parameter $\lambda$, but are also
faced with a bicriteria problem; we need to set prices to minimize the
potential loss of the market maker, while still incentivizing truthful
reporting from the players. Here, we demonstrate a trade-off between
incentives and market maker's potential loss. First, we have the
following lemma which bounds the market maker's loss as a function of
$\lambda$.

\begin{restatable}{lemma}{marketloss}
  The loss $\ell_k$ for the market maker in each security $k$ under
  the pricing rule defined in \Cref{eq:pricing} is bounded by
  $\lambda/16$.
\end{restatable}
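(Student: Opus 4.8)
The plan is a direct case analysis on the imbalance $I_k$, using the definitions of the loss and the hinge pricing rule. Write $q = q_k(I)$ for brevity. Recall that the loss is $\ell_k = I_k(1-q)$ when $I_k > 0$, $\ell_k = -I_k\, q$ when $I_k < 0$, and $\ell_k = 0$ when $I_k = 0$. In the saturated regimes the loss vanishes outright: if $I_k > \lambda/2$ then $q = 1$ and $\ell_k = 0$, while if $I_k < -\lambda/2$ then $q = 0$ and again $\ell_k = 0$. So the only interesting range is $|I_k| \le \lambda/2$, where $q = I_k/\lambda + 1/2$.

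First I would handle $0 < I_k \le \lambda/2$. Here $1 - q = 1/2 - I_k/\lambda$, so $\ell_k = I_k\left(\tfrac12 - \tfrac{I_k}{\lambda}\right)$. Setting $t = I_k \in (0, \lambda/2]$, this is the concave quadratic $g(t) = \tfrac{t}{2} - \tfrac{t^2}{\lambda}$, whose unconstrained maximum is at $t = \lambda/4$ (which lies in the interval), giving $g(\lambda/4) = \tfrac{\lambda}{4}\cdot\tfrac14 = \tfrac{\lambda}{16}$. Next, for $-\lambda/2 \le I_k < 0$, substituting $t = -I_k \in (0,\lambda/2]$ gives $\ell_k = -I_k\, q = t\left(\tfrac12 - \tfrac{t}{\lambda}\right)$, the \emph{same} function $g(t)$, so the same bound $\ell_k \le \lambda/16$ applies. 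Combining all cases yields $\ell_k \le \lambda/16$.

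There is essentially no obstacle here: the statement reduces to maximizing a one-variable concave quadratic over an interval, and the symmetry between the overdemand and underdemand cases means only one optimization needs to be carried out. The only thing to be careful about is confirming that the maximizer $t = \lambda/4$ indeed falls inside the linear regime $t \le \lambda/2$, which it does, so the bound is tight and attained.
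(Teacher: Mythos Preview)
Your proposal is correct and follows essentially the same approach as the paper: a case split on the sign of $I_k$, reduction to a one-variable concave quadratic in the linear regime, and maximization at $I_k = \pm\lambda/4$. The only cosmetic differences are that the paper completes the square as $-\tfrac{1}{\lambda}(I_k \mp \lambda/4)^2 + \lambda/16$ rather than locating the vertex, and does not separate out the saturated regimes explicitly (your treatment of those cases is in fact a bit more careful).
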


\begin{proof}
Suppose $I_k > 0$. The loss
\begin{align*}
\ell_k = I_k (1- q_k(I)) &\leq I_k (1 - I_k/\lambda - 1/2)\\
&= I_k(1/2 -I_k/\lambda) \\
&= -1/\lambda \left(I_k - \lambda/4 \right)^2 + \lambda/16\leq \lambda/16.
\end{align*}
Suppose $I_k <0$, we also have
\begin{align*}
\ell_k = I_k (- q_k(I)) &\leq I_k (- I_k/\lambda - 1/2)\\
&= -1/\lambda \left(I_k + \lambda/4 \right)^2 + \lambda/16\leq \lambda/16.
\end{align*}
Thus, the loss is always bounded by $\lambda/16$.
\end{proof}

In order to guarantee sub-linear loss in each good, the market maker
needs to set $\lambda = o(n)$. Furthermore, since we have a
$(1/\lambda)$-aggregative game with each player's action set
consisting of $3^d$ actions, the market maker can use $\PSL$ as a
mediator to incentivize truthful reporting.
\begin{corollary}
  $\PSL$ as our market maker makes each player truthfully reporting their
  valuation function and following the market maker's recommendation
  form an $\eta$-approximate ex-post Nash equilibrium, where
\[
\eta = O\left( \sqrt{d}\left( \left(\frac{n}{\lambda^2}\right)^{1/3} +
\left(\frac{n}{\lambda^2}\right)^{1/2}\right)\cdot
  \polylog(n, d) \right).
\]
\end{corollary}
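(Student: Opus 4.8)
The plan is to obtain this corollary as a direct instantiation of \Cref{main.thm} (equivalently, of \Cref{thm:multi-util} composed with \Cref{thm:priv-media}) applied to the aggregative-game encoding of the market built in the previous subsection. First I would record the parameters of that encoding: the market is an $n$-player $(1/\lambda)$-aggregative game of dimension $d$ with action set $\A = \{-1,0,1\}^d$, hence $m = 3^d$ and $\gamma = 1/\lambda$; each payoff $u_i(x_i,s)$ is $1$-Lipschitz in the aggregator (via the chain of Lipschitz constants $\tfrac12 \cdot \tfrac1\lambda \cdot \lambda = \tfrac12 \le 1$); the aggregator range is $W = n/\lambda$, which is polynomially bounded in $n$; and since $\lambda = o(n)$ we have $\gamma = \omega(1/n)$, so the standing largeness assumptions of \Cref{s.summgame} are met. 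Taking $L \equiv 0$ suffices here: the corollary only asserts that good behavior is an approximate ex-post Nash equilibrium, and the market-maker loss is already controlled ($\ell_k \le \lambda/16$) for every imbalance by the loss lemma above, independently of which equilibrium the mediator selects, so no guarantee on a linear objective is needed.

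Next I would invoke \Cref{main.thm} with this game and the smallest admissible comparator $\zeta = \gamma\sqrt{8n\log(2mn)}$. It yields a mediator — concretely, $\PSL$ run with $\eps = n^{1/3}\gamma^{2/3} d^{1/2}$ and $\delta = \beta = 1/n$, exactly the instantiation described immediately after \Cref{main.thm} — under which good behavior, i.e. truthfully reporting one's valuation $v_i$ to the market maker and then playing the recommended portfolio, is a $(\zeta + \eta_0)$-approximate ex-post Nash equilibrium of the mediated market, where $\eta_0 = O\!\left(n^{1/3}\gamma^{2/3}\sqrt{d}\cdot\polylog(n,m,d)\right)$. Since $\PSL$ is $(2\eps,\delta)$-jointly differentially private and, with probability $\ge 1-\beta$, outputs an $O(\zeta + \alpha)$-approximate pure Nash equilibrium of the complete-information game defined by the reported valuations (\Cref{thm:multi-util}), \Cref{thm:priv-media} gives precisely this ex-post Nash guarantee.

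Finally I would substitute $\gamma = 1/\lambda$ and $m = 3^d$ and simplify to the stated form. For the comparator term, $\zeta = \tfrac1\lambda\sqrt{8n\log(2\cdot 3^d n)} = \tfrac1\lambda\sqrt{8n(d\log 3 + \log 2n)} = O\!\left(\sqrt{d}\,(n/\lambda^2)^{1/2}\,\polylog(n)\right)$, using $\sqrt{d+\log n}\le \sqrt{d}\cdot\polylog(n)$. For the accuracy term, $n^{1/3}\gamma^{2/3} = (n/\lambda^2)^{1/3}$, and the $m$-dependence of $\polylog(n,m,d)$ contributes only factors polynomial in $\log m = \Theta(d)$, which are folded together with $\sqrt d$ into the $\polylog(n,d)$ factor. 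Adding the two contributions,
\[
\eta = \zeta + \eta_0 = O\!\left(\sqrt{d}\left(\left(\tfrac{n}{\lambda^2}\right)^{1/3} + \left(\tfrac{n}{\lambda^2}\right)^{1/2}\right)\polylog(n,d)\right),
\]
as claimed. There is no real obstacle here — all the heavy lifting is in \Cref{main.thm} and in the market-as-aggregative-game instantiation — so the only thing that needs a little care is this last bookkeeping step, tracking how the $\log m = \Theta(d)$ terms and the various polylogarithmic factors coalesce into the final bound.
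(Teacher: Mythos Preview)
Your approach is correct and is essentially the paper's own argument: the paper's entire proof is the single sentence ``This result follows by instantiating Theorem~\ref{thm:priv-media} with $\zeta = \sqrt{8nd\log(3n)}/\lambda$,'' i.e.\ exactly the plug-in of $\gamma = 1/\lambda$, $m = 3^d$ into the main theorem that you carry out in detail. One small caveat on your bookkeeping: saying the $\polylog(m)$ factors, which are polynomial in $\log m = \Theta(d)$, get ``folded into the $\polylog(n,d)$ factor'' is not literally an asymptotic containment (polynomial in $d$ is not $\polylog(d)$); the paper sweeps over the same point and immediately resolves it by restricting attention to a fixed number of commodities $d$, so this is a looseness in the stated corollary rather than a gap in your argument.
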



This result follows by instantiating Theorem
\ref{thm:priv-media} with $\zeta = \sqrt{8nd\log(3n)}/\lambda$.
For a fixed number of commodities $d$, we get asymptotic truthfulness
as long as the market maker sets $\lambda$ to be at least $n^{1/2+
  \eps}$ for any $\eps >0$. With this setting of $\lambda$, we also
guarantee that the market maker experiences worst-case loss at most
$O(n^{1/2 + \epsilon})$ per good.

\section{Single Dimensional (Quasi)-Aggregative Games}\label{s.single}

In this section, we consider a more general class of games --
\emph{quasi-aggregative games}, in which the aggregator $S$ is not
required to have a linear structure as in aggregative games. We focus
on $\gamma$-quasi-aggregative games with a \emph{one-dimensional}
aggregator $S\colon \A^n \rightarrow [-W,W]$, and assume the same
properties of bounded influence and Lipschitz utilities.\footnote{This
  is identical to the setting in \citet{KM02} and \citet{B13}.} We
have the following result:

\begin{theorem}
\label{single.thm}
Let $G$ be a single dimensional $\gamma$-quasi-aggregative game for
some $\gamma < 1$. There
exists a mediator $M$ that makes good behavior an
$\eta$-approximate ex-post Nash equilibrium of the mediated game
$G_M$, and implements a Nash equilibrium $\vx$ of the underlying
complete information game, where
\[
\eta = O \left(\sqrt{\gamma}\cdot \polylog(n, m)
\right). 
\]
\end{theorem}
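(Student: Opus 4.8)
The plan is to follow the same template as the multi-dimensional result, but to replace the LP / multiplicative-weights machinery (which exploited additive separability of the aggregator) by a direct sparse-vector search, using that the aggregator now lives on a single interval $[-W,W]$ and still has sensitivity $\gamma$ to any one player's action. By \Cref{thm:priv-media}, it suffices to exhibit an $\eps$-jointly differentially private algorithm that, with probability $1-\beta$, outputs an $\alpha$-approximate pure strategy Nash equilibrium of the realized complete-information game; then $\eta = \alpha + O(\eps + \beta + \delta)$, and we optimize $\eps$ at the end.

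First I would discretize $[-W,W]$ into a grid of resolution $\rho = \Theta(\gamma)$ (so $O(W/\gamma) = \poly(n,m)$ points), fix a data-independent tie-breaking rule for the aggregative best response $\br_i(\cdot)$, and for a type profile $t$ and grid point $\hat s$ set $\vx^{\hat s} = (\br_1(\hat s),\dots,\br_n(\hat s))$ and $g_t(\hat s) = S(\vx^{\hat s}) - \hat s$. A grid point with $|g_t(\hat s)|$ small is an approximate fixed point of the aggregative-best-response map: in $\vx^{\hat s}$ every player plays an exact aggregative best response to $\hat s$, hence (by \Cref{lem:abr-move}) an $O(|g_t(\hat s)|)$-aggregative best response to the true aggregator $S(\vx^{\hat s})$, hence (by \Cref{lem:abr-prop}) $\vx^{\hat s}$ is an $O(|g_t(\hat s)| + \gamma)$-approximate Nash equilibrium. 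Crucially the recommended profile is already pure, so unlike in the multi-dimensional case no randomized rounding — and no concentration argument — is needed, which is why the one-dimensional bound ($\sqrt\gamma$) improves on the multi-dimensional one.

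Two observations finish the construction. \emph{Sensitivity:} changing one player's type changes only that player's action $\br_i(\hat s)$, so by bounded influence $|g_t(\hat s)|$ changes by at most $\gamma$ uniformly in $\hat s$; thus the family of queries $\{|g_t(\hat s)|\}_{\hat s}$ has $\ell_\infty$-sensitivity $O(\gamma)$. \emph{Existence:} $S \in [-W,W]$ gives $g_t(-W) \ge 0 \ge g_t(W)$; between flip points the profile $\vx^{\hat s}$ is constant, so $g_t$ decreases at unit rate, and after a negligible generic perturbation of the utilities each flip involves a single player and moves $g_t$ by at most $\gamma$, so the first grid point at which $g_t$ turns nonpositive witnesses $|g_t| = O(\gamma)$ (alternatively, invoke convergence of sequential best-response dynamics for one-dimensional aggregative games, \citet{KM02,B13}). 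Running the sparse vector mechanism $\SV$ over these $O(W/\gamma)$ queries to find a grid point with $|g_t(\hat s)| \le \Theta(\gamma\,\polylog(n,m,1/\beta)/\eps)$ is $\eps$-differentially private and, by the existence claim, succeeds with probability $1-\beta$. Recommending $\br_i(\hat s)$ to each player $i$ makes player $i$'s recommendation depend on the other players' types only through the differentially private signal $\hat s$, so by the ``billboard'' argument of \citet{KPRU14,RR14} the mechanism is $\eps$-jointly differentially private, and it outputs an $\alpha$-approximate pure Nash equilibrium with $\alpha = O(\gamma\,\polylog(n,m,1/\beta)/\eps)$. Plugging into \Cref{thm:priv-media}, $\eta = O(\gamma\,\polylog/\eps) + O(\eps + \beta + \delta)$; taking $\beta,\delta = 1/\poly(n)$ and $\eps = \Theta(\sqrt{\gamma\,\polylog})$ (legitimate since $\gamma < 1$) yields $\eta = O(\sqrt{\gamma}\cdot\polylog(n,m))$.

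The step I expect to be the main obstacle is the existence/stability analysis of the map $\hat s \mapsto S(\vx^{\hat s})$: because $S$ is an arbitrary nonlinear function, many players could a priori be indifferent at the same aggregator value, in which case $S(\vx^{\hat s})$ could jump by as much as $n\gamma$, destroying both the ``$|g_t| = O(\gamma)$ somewhere on the grid'' claim and the clean $O(\gamma)$ accuracy. Resolving this requires either a careful perturbation argument that desynchronizes ties (so that flips become single-player events) or a retreat to the set-valued correspondence $\cV_\xi(\hat s)$ — which is an interval in one dimension, being the image of a product of simplices under the map $\vp \mapsto S(\vp)$ — together with an intermediate-value argument for set-valued maps as in \citet{KM02}; carrying that out while keeping the query sensitivity at $O(\gamma)$ and the output pure (or absorbing a small purification error) is the delicate part.
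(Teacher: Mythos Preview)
Your template matches the paper's: discretize $[-W,W]$, define $V(\hat s)=S(\br(\hat s))$, run $\SV$ over the grid to find a point with $|V(\hat s)-\hat s|$ small, output the pure profile $\br(\hat s)$, and deduce joint privacy via the billboard lemma; the sensitivity bound and the final balance $\eps=\Theta(\sqrt\gamma)$ are both correct. The gap is exactly the one you flag, and neither of your suggested fixes closes it. Perturbing so that flips of $\br_i(\cdot)$ occur at distinct real values does not help: arbitrarily many flips (even $\Theta(n)$ of them --- there is no monotonicity assumption in the quasi-aggregative setting, and a single player's best response can change many times) can still land in one grid cell of width $\Theta(\gamma)$, so $g_t$ can drop from positive to $-\Theta(n\gamma)$ across a single cell and no grid point witnesses small $|g_t|$. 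Refining the grid just blows up the $\SV$ error. The set-valued route is closer in spirit, but you have not said what low-sensitivity query you would actually feed $\SV$.

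The paper's resolution (taken from \citet{KM02}) is a second-stage ``smooth walk'' that interpolates in \emph{player space} rather than aggregator space. If the Stage-1 $\SV$ finds no near-fixed grid point, a second $\SV$ call locates an adjacent crossing pair $s,\,s+\alpha$ at which $V$ lies on opposite sides of the diagonal; such a pair exists because $V(-W)\ge -W$, $V(W-\alpha)\le W$, and Stage~1 has just certified that $|V-\mathrm{id}|$ is bounded away from zero everywhere on the grid. One then forms $n{+}1$ pure profiles $\vx^0,\dots,\vx^n$ where in $\vx^j$ the first $j$ players play their aggregative best response to one endpoint and the remaining players to the other. Consecutive profiles differ in a single player's action, so $|S(\vx^{j+1})-S(\vx^j)|\le\gamma$ by bounded influence, and since $S(\vx^0)$ and $S(\vx^n)$ straddle $s$, some $\vx^{j'}$ has $|S(\vx^{j'})-s|\le\gamma/2$. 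A third $\SV$ call over these $n{+}1$ low-sensitivity queries finds it. Every player in $\vx^{j'}$ is playing an exact aggregative best response to $s$ or to $s+\alpha$, hence an $O(\alpha)$-aggregative best response to $S(\vx^{j'})$, and your \Cref{lem:abr-move} and \Cref{lem:abr-prop} argument finishes. This is precisely the intermediate-value idea from \citet{KM02} you were reaching for; the point you were missing is that the walk runs over only $n{+}1$ profiles with guaranteed $\gamma$-size steps, so it is itself amenable to $\SV$ with the same $O(\gamma\,\polylog/\eps)$ error.
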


Similar to \Cref{s.multi}, our mediator is a jointly differentially
private algorithm that computes an approximate Nash equilibrium. The
algorithm is a private implementation of existing algorithms
\citep{KM02,B13}, so we use different techniques for these
single-dimensional games. Under certain assumptions, we can also
select equilibrium with respect to any Lipschitz objective function of
the aggregator (\Cref{s.select}).

\subsection{Private Equilibrium Computation}\label{s.privsummnash}

\mk{Maybe include figure from original KM paper illustrating main idea of non-private algo?}

Our algorithm $\PSN$, presented in \Cref{alg:psn}, is a privatized
version of the $\SN$ algorithm proposed in \citet{KM02}, that computes
an approximate Nash equilibrium under joint differential privacy.

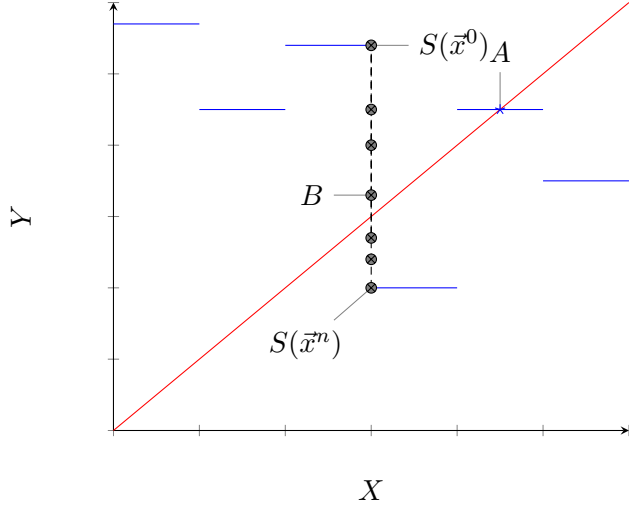
\begin{figure}
\centering
\begin{tikzpicture}
\begin{axis}[
    axis lines = left,
    xlabel = {$X$},
    ylabel = {$Y$},
    yticklabels={,,},
    xticklabels={,,}
]
\addplot [
  domain=-3:3, 
    samples=10, 
    color=red,
]
{x};

\addplot [
    domain=-3:-2, 
    color=blue,
    ]
    {2.7};
\addplot [
    domain=-2:-1, 
    color=blue,
    ]
    {1.5};
\addplot [
    domain=-1:0, 
    color=blue,
    ]
    {2.4};

 \addplot [
    domain=0:1, 
    color=blue,
    ]
    {-1};

    \addplot [
    domain=1:2, 
    color=blue,
    ]
    {1.5};

\addplot [
    domain=2:3, 
    color=blue,
    ]
    {0.5};

\addplot coordinates {
(0, -1)
(0, -.6)
(0,0.3)
(0,1)
(0, 2.4)
(0, 1.5)
(0, -0.3)
};

\addplot coordinates {
(1.5, 1.5)
  };
\node[coordinate,pin=above:{$A$}]
at (axis cs:1.5, 1.5) {};
\node[coordinate,pin=left:{$B$}]
at (axis cs:0, .3) {};
\node[coordinate,pin=right:{$S(\vx^0)$}]
at (axis cs:0,2.4) {};
\node[coordinate,pin=240:{$S(\vx^n)$}]
at (axis cs:0, -1) {};
\end{axis}
\end{tikzpicture}
\caption{A hypothetical plot of the function $V$. The ticks on the $X$
  axis correspond to the discretized points of aggregators $Z =\{-W,
  -W+\alpha,\ldots, W-\alpha\}$, and the height of the blue intervals
  correspond to the values of $V$ evaluated at these points (left
  endpoint of each interval). The diagonal line correspond to line
  $Y=X$.  The point labeled A is an example of a horizontal crossing
  the algorithm looks for in stage 1. The column of points including
  the point labeled B is an example of a vertical crossing the
  algorithm searches in stage 2. Each point in this column indicates a
  value of S realized on the sequence of strategy profiles $\vx^0$
  through $\vx^n$ defined in~\eqref{eq:walk}, while the point B itself
  is the value of S nearest the diagonal in this walk.}
\label{fig:sn}
\end{figure}

We will first briefly discuss the main idea of $\SN$, with reference
to~\Cref{fig:sn}. The main object of interest is the function $V$
defined on the aggregator space such that \[ V(s) = S(\ABR(s)),
\]
where $\ABR(s)$ denotes the aggregative best response profile to
aggregator $s$ (with each player breaking ties arbitrarily). The algorithm
will first discretize the aggregator space $[-W, W]$ into a discrete
set $Z = \{-W, -W+\alpha,\ldots , W-\alpha\}$, and evaluate $V$ on
each point of $Z$. In~\Cref{fig:sn}, the values of $V(s)$ for each
$s\in Z$ corresponds to the horizontal line segments.  Then the
algorithm finds an equilibrium in two stages.

In the first stage, the algorithm tries to find an approximate fixed
point of $V$ in the set $Z$.  Note that if we could identify an
aggregator $s$ such that $|V(s) - s| \leq \alpha$, then $\ABR(s)$
forms an $O(\alpha + \gamma)$-approximate equilibrium by
Lemma~\ref{lem:abr-prop} and Lemma~\ref{lem:abr-move}.  The algorithm
simply checks if there exist any $s\in Z$ that satisfy $|V(s) - s|
\leq \alpha$. In~\Cref{fig:sn}, the existence of an approximate fixed
point corresponds to a crossing point $A$, as the diagonal line
corresponds to the relation $Y = X$.  If such an $s$ is found, the
algorithm will simply suggest that each player $i$ play $\ABR_i(s)$
and halt.

Otherwise, the algorithm moves to the second stage, where it again
iterates over $Z$, this time to find two adjacent aggregators $s$ and
$s+\alpha$ such that $s > V(s) + \alpha$ and $s < V(s+\alpha) -
\alpha$. Such pair of $(s, s+\alpha)$ is guaranteed to exist because a
failure in the first stage implies that the two endpoints satisfy
$V(-W) > -W + \alpha$ and $V(W-\alpha) < W -
2\alpha$. 
Intuitively, the value of $V$ is ``too high'' at the lower endpoint,
and ``too low'' at the upper endpoint, so there must some ``crossing
point'' $s$ in the middle where $V(s)$ is close to $s$.

We can define a sequence of strategy profiles $\cX=\{\vx^0, \ldots,
\vx^n\}$, where
\begin{equation}
\vx^j_i = \begin{cases}
  \br_i(s) \mbox{ if }i\leq j\\
  \br_i(s + \alpha)\mbox{ otherwise}
      \end{cases}
\label{eq:walk}
\end{equation}
  Each profile in $\cX$ is a combination of some prefix in $\ABR(s)$
  and $\ABR(s+\alpha)$. The sequence of aggregators given by the
  profiles in $\cX$ is essentially a walk between $V(s)$ and
  $V(s+\alpha)$. By the assumption of bounded influence, changing the
  action of one player can change the aggregator value by at most
  $\gamma$, so every adjacent step in the walk has length no more than
  $\gamma$. Thus there must be an action profile $\vx\in \cX$ such
  $|V(x) - s| \leq \gamma$. Note that players' actions in $x$ come
  from both $\ABR(s)$ and $\ABR(s + \alpha)$, so all players are
  playing an $O(\alpha)$-aggregative best response to $s$ in $x$, and
  by Lemma \ref{lem:abr-prop}, $x$ forms an $O(\alpha +
  \gamma)$-approximate equilibrium.

The primary change we make to $\SN$ is that our algorithm needs to
access the players' aggregative best responses (defined by types) in a
privacy-preserving manner. Recall that there are three main parts that
require access to the players' private data:
\begin{enumerate}
\item for each $s\in Z$, check whether $V(s)$ is close to $s$;
\item for each $s\in Z$, check whether $s > V(s) + \alpha$ and also $s < V(s+\alpha) - \alpha$; and
\item for each $\vx \in \cX$, check whether $S(\vx)$ is close to a given $s\in Z$.
\end{enumerate}
To do that, we will first formulate these 3 conditions as 3 different
set of queries $\{Q_k\}, \{Q_k'\}$ and $\{Q_k''\}$ (detailed in
in~\Cref{alg:psn}), so that we can check these conditions by checking
whether the query values is below some threshold. Note that we only
need to identify at most one aggregator or strategy profile that
satisfies each condition even though we are answering a large
collection queries $(2W/\alpha + n)$. We take advantage of this fact
by using $\SV$, which gives a good accuracy guarantee, with error
scaling \emph{logarithmically} with the number of queries. This error
will eventually factor into the approximation factor of output Nash
equilibrium.

We here state the formal guarantee of the algorithm, and defer the
full proof and technical details to~\Cref{s.1dproofs}.

\begin{algorithm}
  \DontPrintSemicolon
\SetAlgoLined

   \caption{$\PSN(t, \eps,\alpha, \beta)$ \label{alg:psn}}
 \KwData{An
     $n$-player type vector $t$, privacy parameter $\eps$, accuracy
     parameter $\alpha$, and confidence parameter $\beta$}
 \KwResult{A $(10\alpha + 2\gamma )$-approximate
     Nash equilibrium}
\Indp{\textbf{Stage 1}\;}
{\textbf{for} each $-W/\alpha \leq k \leq W/\alpha - 1$, define a
  query $Q_k$ on the players' private payoff functions:}
      \[Q_k = | V(k\alpha) -  k\alpha |\]

{\textbf{let} $\{a_k\}=\SV(t, \{Q_k\}, 4\alpha, 1, \eps/3 )$\;}\tcp{try to find an approximate fixed point for $V$} 

\If{
  we have some $a_t \neq \perp$}{\textbf{Output}
  $\br(t\alpha)$}

{\textbf{Stage 2}\;}
{\textbf{for} each $-W/\alpha + 1\leq k\leq W/\alpha - 1$, define query
      $Q'_k$ on the players' private payoff functions:}
    \[ Q'_k = \max\left( \min(0, k\alpha - V((k -1 )\alpha)) ,
      -2\alpha \right) + \max\left( \min\left(0, V(k\alpha) - k\alpha
      \right), -3\alpha \right)
      \]
{\textbf{let} $\{a'_k\}=\SV(t, \{Q'_k\}, -4\alpha , 1, \eps/3 )$}

\eIf{ all  $a'_k = \perp$}{ Abort.}
{\textbf{let} $l$ be the index such that $a'_l  \neq \perp$\;}
\tcp{define a sequence of strategy profiles for the ``smooth walk''}
{\textbf{for} each $0\leq j\leq n$, \textbf{let}
  strategy profile $x^j$ be defined as}
\[
x^j_i = \begin{cases}
  \br_i(l\alpha) \mbox{ if }i\leq j\\
  \br_i((l-1)\alpha)\mbox{ otherwise}
\end{cases}
\]

{\textbf{let } query $Q_j'' = S(x^j)$\;} 
{\textbf{let} $\{a''_j\}=\SV(t, \{Q''_j\},
      \alpha + \gamma/2, 1, \eps/3 )$\;}
\eIf{some $a''_{j'} \neq \perp$}{
        \textbf{Output } $x^{j'}$}{Abort.}

\end{algorithm}

\begin{theorem}
\label{thm:psn-sum}
$\PSN(t, \eps, \alpha, \beta)$ satisfies $\eps$-joint differential
privacy, and with probability at least $1 - \beta$, computes a
$(10\alpha + 2\gamma)$-approximate pure strategy Nash equilibrium
as long as \[
\alpha \geq O\left( \frac{\gamma}{\eps}\polylog(n,m, 1/\beta) \right).
\]
\end{theorem}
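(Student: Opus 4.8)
\emph{Overview.} The plan is to establish the privacy and the accuracy claims separately, in each case reducing everything to the guarantee of the sparse vector mechanism $\SV$ together with the (aggregative) best‑response relationships of \Cref{lem:nash-prop,lem:abr-prop,lem:abr-move}.

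\emph{Privacy.} $\PSN$ touches the private types only through at most three invocations of $\SV$, each run with privacy parameter $\eps/3$. Every query used in these calls --- $Q_k=|V(k\alpha)-k\alpha|$ in Stage~1, the clipped quantity $Q_k'$, and $Q_j''$ (which measures how far $S(\vx^j)$ is from the target aggregator $l\alpha$) --- has sensitivity $O(\gamma)$: changing one player's type alters only that player's aggregative best responses, which moves each relevant aggregator value ($V(\cdot)$ or $S(\cdot)$) by at most $\gamma$ by bounded influence. After rescaling the queries by $1/\gamma$, the standard sensitivity‑$1$ analysis of $\SV$ applies (at the cost of a factor $\gamma$ in its error), and by (basic, adaptive) composition over the $\le 3$ calls the entire transcript of $\SV$ outputs --- the index $t$ in Stage~1, or the indices $(l,j')$ in Stage~2 --- is $\eps$‑differentially private as a function of the full type profile. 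Finally, observe that the action $\PSN$ assigns to any player $p$ is a deterministic function of this transcript and of $t_p$ alone: it is $\br_p(t\alpha)$ in Stage~1, and $\br_p(l\alpha)$ or $\br_p((l-1)\alpha)$ in Stage~2 depending on whether $p\le j'$. Hence, fixing a player $i$, the vector of actions handed to all other players is a post‑processing of the $\eps$‑private transcript together with $\{t_p\}_{p\ne i}$, and changing $t_i$ affects it only through the transcript; this gives $\eps$‑joint differential privacy.

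\emph{Accuracy.} Condition on the event that all $\SV$ calls are accurate up to error $\alpha_{\SV}=O\!\big(\tfrac{\gamma}{\eps}\polylog(n,m,1/\beta)\big)$; since the total number of queries is $O(W/\alpha+n)=\poly(n,m)$ this holds with probability at least $1-\beta$ after splitting the failure probability among the calls, and the hypothesis on $\alpha$ lets us assume $\alpha_{\SV}\le\alpha/2$. If Stage~1 halts with index $t$, then $|V(t\alpha)-t\alpha|\le 4\alpha+\alpha_{\SV}\le 5\alpha$; in $\br(t\alpha)$ every player plays an exact aggregative best response to $t\alpha$ while the realized aggregator is $V(t\alpha)=S(\br(t\alpha))$, so \Cref{lem:abr-move} makes everyone a $10\alpha$‑aggregative best response to the realized aggregator and \Cref{lem:abr-prop} yields a $(10\alpha+\gamma)$‑approximate pure Nash equilibrium. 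If Stage~1 fails, accuracy gives $|V(k\alpha)-k\alpha|\ge 3\alpha$ for every grid point $k$, so each $k$ is \emph{high} ($V(k\alpha)\ge(k+3)\alpha$) or \emph{low} ($V(k\alpha)\le(k-3)\alpha$); since $V\in[-W,W]$ the left endpoint is high and the right endpoint is low, so some $l$ has $l-1$ high and $l$ low. A short case check of the definition of $Q_k'$ under this dichotomy shows that $Q_k'=-5\alpha$ at every such ``downward crossing'' and $Q_k'\in\{0,-2\alpha,-3\alpha\}$ elsewhere; hence the Stage‑2 $\SV$ call with threshold $-4\alpha$ and error $<\alpha$ cannot abort and must return an index $l$ with $Q_l'=-5\alpha$, which forces $V((l-1)\alpha)\ge(l+2)\alpha\ge l\alpha$ and $V(l\alpha)\le(l-3)\alpha\le l\alpha$. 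Therefore the smooth‑walk profiles $\vx^0,\dots,\vx^n$ of \eqref{eq:walk} (as instantiated in \Cref{alg:psn}) satisfy $S(\vx^0)=V((l-1)\alpha)\ge l\alpha\ge V(l\alpha)=S(\vx^n)$, and since consecutive profiles differ in one player's action, $|S(\vx^{j+1})-S(\vx^j)|\le\gamma$ for all $j$. A discrete intermediate‑value argument produces some $j_0$ with $|S(\vx^{j_0})-l\alpha|\le\gamma$, so the third $\SV$ call cannot abort and returns some $j'$ with $|S(\vx^{j'})-l\alpha|\le\alpha+\gamma/2+\alpha_{\SV}$. In $\vx^{j'}$ every player plays an exact aggregative best response to $l\alpha$ or to $(l-1)\alpha$, hence a $2\alpha$‑aggregative best response to $l\alpha$ by \Cref{lem:abr-move}, hence an $O(\alpha+\gamma)$‑aggregative best response to the realized aggregator $S(\vx^{j'})$; \Cref{lem:abr-prop} then certifies $\vx^{j'}$ as an $O(\alpha+\gamma)$‑approximate pure Nash equilibrium. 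Tracking the constants through \Cref{lem:abr-move} and \Cref{lem:abr-prop} and using $\gamma\le\alpha$ shows that in both cases the output is a $(10\alpha+2\gamma)$‑approximate pure strategy Nash equilibrium.

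\emph{Main difficulty.} The delicate step is the Stage~2 argument: one must verify that the clipped query $Q_k'$, combined with the threshold $-4\alpha$ and the $\SV$ error bound, robustly isolates genuine downward crossings (rather than, say, pairs of grid points where $V$ lies far below the diagonal on both sides), and then convert such a crossing, via the smooth walk, into a profile whose realized aggregator lies within $\gamma$ of the diagonal. This is exactly where $\alpha=\Omega\!\big(\tfrac{\gamma}{\eps}\polylog(n,m,1/\beta)\big)$ is needed, since $\alpha_{\SV}$ must be small relative to the $\Theta(\alpha)$ gaps between the attainable values of $Q_k'$. The remaining pieces --- the privacy reduction via joint composition and the best‑response bookkeeping --- are routine.
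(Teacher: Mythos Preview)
Your proposal is correct and follows essentially the same route as the paper's proof: privacy via composition of three $\eps/3$-private $\SV$ calls plus the billboard/post-processing observation, and accuracy via the Stage~1 fixed-point case followed by the Stage~2 high/low dichotomy, the clipped-query case analysis for $Q_k'$, and the smooth-walk intermediate-value argument. Your treatment is in places slightly more explicit than the paper's (e.g.\ enumerating the four possible values $\{0,-2\alpha,-3\alpha,-5\alpha\}$ of $Q_k'$ under the dichotomy), but there is no substantive difference in approach.
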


\Cref{single.thm} then follows by instantiating Theorem
\ref{thm:priv-media} with $\PSN$ by setting $\eps = \sqrt{\gamma}$,
$\beta = 1/n$ and $\alpha = 100\sqrt{\gamma}(\log(12Wn^2)).$ (Recall
that $\gamma < 1$, so $\sqrt{\gamma}$ dominates $\gamma$).

\section*{Future Work}

The most interesting open question in this line of work is whether
there exists a weak mediator that implements good behavior in
\emph{every} large game, where we only assume that
 the influence that any single player's action has on the utility of
others is diminishing with the number of players.  Recall that in
\citet{KPRU14}, it was shown that there exists a \emph{strong} mediator that
implements good behavior in any large game, by giving an algorithm
that privately computes a correlated equilibrium in any large game. An
equivalent result could be shown for weak mediators by giving an
algorithm that is able to compute (under the constraint of joint
differential privacy) a \emph{Nash} equilibrium, subject only to a
largeness condition on the game. Note that such an algorithm would not
be expected to be computationally efficient in general. However, at
the moment it remains open whether such an algorithm exists at all,
independent of efficiency concerns. Finally note that it might be
possible to construct weak mediators using tools other than
differential privacy -- there is no reason why such mediators could
not be deterministic. We do not at present have any other
similarly general tools for constructing these objects, but results
using other tools would be of significant interest.

\iffull
\bibliographystyle{plainnat}
\bibliography{unicorn.bbl}

\else
\bibliographystyle{acmsmall}
\bibliography{./refs}

\begin{thebibliography}{32}
\providecommand{\natexlab}[1]{#1}
\providecommand{\url}[1]{\texttt{#1}}
\expandafter\ifx\csname urlstyle\endcsname\relax
  \providecommand{\doi}[1]{doi: #1}\else
  \providecommand{\doi}{doi: \begingroup \urlstyle{rm}\Url}\fi

\bibitem[Arora et~al.(2012)Arora, Hazan, and Kale]{mw-survey}
Sanjeev Arora, Elad Hazan, and Satyen Kale.
\newblock The multiplicative weights update method: a meta-algorithm and
  applications.
\newblock \emph{Theory of Computing}, 8\penalty0 (1):\penalty0 121--164, 2012.

\bibitem[Ashlagi et~al.(2009)Ashlagi, Monderer, and Tennenholtz]{AMT09}
Itai Ashlagi, Dov Monderer, and Moshe Tennenholtz.
\newblock Mediators in position auctions.
\newblock \emph{Games and Economic Behavior}, 67\penalty0 (1):\penalty0 2--21,
  2009.

\bibitem[Azevedo and Budish(2012)]{AB12}
Eduardo~M Azevedo and Eric Budish.
\newblock Strategyproofness in the large as a desideratum for market design.
\newblock In \emph{Proceedings of the 13th ACM Conference on Electronic
  Commerce}, EC '12, page~55, 2012.

\bibitem[Azrieli and Shmaya(2013)]{LipG}
Yaron Azrieli and Eran Shmaya.
\newblock \href{http://dx.doi.org/10.1287/moor.1120.0557}{Lipschitz games}.
\newblock \emph{Math. Oper. Res.}, 38\penalty0 (2):\penalty0 350--357, 2013.

\bibitem[Babichenko(2013)]{B13}
Yakov Babichenko.
\newblock Best-reply dynamic in large aggregative games.
\newblock \emph{SSRN abstract 2210080}, 2013.

\bibitem[Blum et~al.(2014)Blum, Morgenstern, Sharma, and Smith]{BMSS14}
Avrim Blum, Jamie Morgenstern, Ankit Sharma, and Adam Smith.
\newblock Privacy-preserving public information for sequential games.
\newblock \emph{arXiv preprint 1402.4488}, 2014.

\bibitem[Chakraborty and Kearns(2011)]{MMfinance}
Tanmoy Chakraborty and Michael Kearns.
\newblock Market making and mean reversion.
\newblock In \emph{Proceedings of the 12th ACM Conference on Electronic
  Commerce}, EC '11, pages 307--314, 2011.

\bibitem[Chen et~al.(2014)Chen, Durfee, and Orfanou]{CDO14}
Xi~Chen, David Durfee, and Anthi Orfanou.
\newblock On the complexity of nash equilibria in anonymous games.
\newblock \emph{arXiv preprint arXiv:1412.5681}, 2014.

\bibitem[Chen and Vaughan(2010)]{MMprediction}
Yiling Chen and Jennifer~Wortman Vaughan.
\newblock A new understanding of prediction markets via no-regret learning.
\newblock In \emph{Proceedings of the 11th ACM Conference on Electronic
  Commerce}, EC '10, pages 189--198, 2010.

\bibitem[Chen et~al.(2013)Chen, Chong, Kash, Moran, and Vadhan]{CCKMV13}
Yiling Chen, Stephen Chong, Ian~A Kash, Tal Moran, and Salil Vadhan.
\newblock Truthful mechanisms for agents that value privacy.
\newblock In \emph{Proceedings of the 14th ACM Conference on Electronic
  Commerce}, EC '13, pages 215--232, 2013.

\bibitem[Daskalakis and Papadimitriou(2008)]{DP08}
Constantinos Daskalakis and Christos~H. Papadimitriou.
\newblock Discretized multinomial distributions and {N}ash equilibria in
  anonymous games.
\newblock In \emph{Proceedings of the 49th Annual IEEE Symposium on Foundations
  of Computer Science}, FOCS '08, pages 25--34, 2008.

\bibitem[Dwork and Roth(2014)]{DR13}
Cynthia Dwork and Aaron Roth.
\newblock \href{http://dx.doi.org/10.1561/0400000042}{The algorithmic
  foundations of differential privacy}.
\newblock \emph{Foundations and Trends in Theoretical Computer Science},
  9\penalty0 (3-4):\penalty0 211--407, 2014.

\bibitem[Dwork et~al.(2006)Dwork, McSherry, Nissim, and Smith]{DMNS06}
Cynthia Dwork, Frank McSherry, Kobbi Nissim, and Adam Smith.
\newblock Calibrating noise to sensitivity in private data analysis.
\newblock In \emph{Proceedings of the 3rd Conference on Theory of
  Cryptography}, TCC '06, pages 265--284, 2006.

\bibitem[Dwork et~al.(2009)Dwork, Naor, Reingold, Rothblum, and
  Vadhan]{DNRRV09}
Cynthia Dwork, Moni Naor, Omer Reingold, Guy~N. Rothblum, and Salil Vadhan.
\newblock On the complexity of differentially private data release: Efficient
  algorithms and hardness results.
\newblock In \emph{Proceedings of the 41st Annual ACM Symposium on Theory of
  Computing}, STOC '09, pages 381--390, 2009.

\bibitem[Dwork et~al.(2010)Dwork, Rothblum, and Vadhan]{boostingDP}
Cynthia Dwork, Guy~N. Rothblum, and Salil Vadhan.
\newblock Boosting and differential privacy.
\newblock In \emph{Proceedings of the 51st Annual IEEE Symposium on Foundations
  of Computer Science}, FOCS '10, pages 51--60, 2010.

\bibitem[Ghosh and Ligett(2013)]{GL13}
Arpita Ghosh and Katrina Ligett.
\newblock Privacy and coordination: Computing on databases with endogenous
  participation.
\newblock In \emph{Proceedings of the 14th ACM Conference on Electronic
  Commerce}, EC '13, pages 543--560, 2013.

\bibitem[Hardt and Rothblum(2010)]{HR10}
Moritz Hardt and Guy~N. Rothblum.
\newblock A multiplicative weights mechanism for privacy-preserving data
  analysis.
\newblock In \emph{Proceedings of the 51st Annual IEEE Symposium on Foundations
  of Computer Science}, FOCS '10, pages 61--70, 2010.

\bibitem[Hsu et~al.(2014{\natexlab{a}})Hsu, Huang, Roth, Roughgarden, and
  Wu]{private-matching}
Justin Hsu, Zhiyi Huang, Aaron Roth, Tim Roughgarden, and Zhiwei~Steven Wu.
\newblock Private matchings and allocations.
\newblock In \emph{Proceedings of the 46th Annual ACM Symposium on Theory of
  Computing}, STOC '14, pages 21--30, 2014{\natexlab{a}}.

\bibitem[Hsu et~al.(2014{\natexlab{b}})Hsu, Roth, Roughgarden, and
  Ullman]{HRRU14}
Justin Hsu, Aaron Roth, Tim Roughgarden, and Jonathan Ullman.
\newblock Privately solving linear programs.
\newblock In \emph{Automata, Languages, and Programming}, volume 8572 of
  \emph{Lecture Notes in Computer Science}, pages 612--624. 2014{\natexlab{b}}.

\bibitem[Kalai(2004)]{LargeGame}
Ehud Kalai.
\newblock \href{http://www.jstor.org/stable/3598763}{Large robust games}.
\newblock \emph{Econometrica}, 72, 2004.

\bibitem[Kannan et~al.(2015)Kannan, Morgenstern, Roth, and Wu]{schooltruth}
Sampath Kannan, Jamie Morgenstern, Aaron Roth, and Zhiwei~Steven Wu.
\newblock \href{http://dx.doi.org/10.1137/1.9781611973730.126}{Approximately
  stable, school optimal, and student-truthful many-to-one matchings (via
  differential privacy)}.
\newblock In \emph{Proceedings of the Twenty-Sixth Annual {ACM-SIAM} Symposium
  on Discrete Algorithms, {SODA} 2015, San Diego, CA, USA, January 4-6, 2015},
  pages 1890--1903, 2015.

\bibitem[Kearns and Mansour(2002)]{KM02}
Michael Kearns and Yishay Mansour.
\newblock Efficient {N}ash computation in large population games with bounded
  influence.
\newblock In \emph{Proceedings of the 18th Conference on Uncertainty in
  Artificial Intelligence}, UAI '02, pages 259--266, 2002.

\bibitem[Kearns et~al.(2014)Kearns, Pai, Roth, and Ullman]{KPRU14}
Michael Kearns, Mallesh Pai, Aaron Roth, and Jonathan Ullman.
\newblock Mechanism design in large games: Incentives and privacy.
\newblock In \emph{Proceedings of the 5th Conference on Innovations in
  Theoretical Computer Science}, ITCS '14, pages 403--410, 2014.

\bibitem[McSherry and Talwar(2007)]{McSherryT07}
Frank McSherry and Kunal Talwar.
\newblock Mechanism design via differential privacy.
\newblock In \emph{Proceedings of the 48th Annual IEEE Symposium on Foundations
  of Computer Science}, FOCS '07, pages 94--103, 2007.

\bibitem[Monderer and Tennenholtz(2003)]{mediators1}
Dov Monderer and Moshe Tennenholtz.
\newblock k-{I}mplementation.
\newblock In \emph{Proceedings of the 4th ACM Conference on Electronic
  Commerce}, EC '03, pages 19--28, 2003.

\bibitem[Monderer and Tennenholtz(2009)]{mediators2}
Dov Monderer and Moshe Tennenholtz.
\newblock Strong mediated equilibrium.
\newblock \emph{Artificial Intelligence}, 173\penalty0 (1):\penalty0 180--195,
  2009.

\bibitem[Myerson(1981)]{Mye81}
Roger~B Myerson.
\newblock Optimal auction design.
\newblock \emph{Mathematics of Operations Research}, 6\penalty0 (1):\penalty0
  58--73, 1981.

\bibitem[Nissim et~al.(2012{\natexlab{a}})Nissim, Orlandi, and
  Smorodinsky]{NOS12}
Kobbi Nissim, Claudio Orlandi, and Rann Smorodinsky.
\newblock Privacy-aware mechanism design.
\newblock In \emph{Proceedings of the 13th ACM Conference on Electronic
  Commerce}, EC '12, pages 774--789, 2012{\natexlab{a}}.

\bibitem[Nissim et~al.(2012{\natexlab{b}})Nissim, Smorodinsky, and
  Tennenholtz]{NST12}
Kobbi Nissim, Rann Smorodinsky, and Moshe Tennenholtz.
\newblock Approximately optimal mechanism design via differential privacy.
\newblock In \emph{Proceedings of the 3rd Innovations in Theoretical Computer
  Science Conference}, ITCS '12, pages 203--213, 2012{\natexlab{b}}.

\bibitem[Pai and Roth(2013)]{PR13}
Mallesh~M Pai and Aaron Roth.
\newblock Privacy and mechanism design.
\newblock \emph{SIGecom Exchanges}, 12\penalty0 (1):\penalty0 8--29, 2013.

\bibitem[Rogers and Roth(2014)]{RR14}
Ryan~M. Rogers and Aaron Roth.
\newblock Asymptotically truthful equilibrium selection in large congestion
  games.
\newblock In \emph{Proceedings of the 15th ACM Conference on Economics and
  Computation}, EC '14, pages 771--782, 2014.

\bibitem[Xiao(2013)]{Xia13}
David Xiao.
\newblock Is privacy compatible with truthfulness?
\newblock In \emph{Proceedings of the 4th Conference on Innovations in
  Theoretical Computer Science}, ITCS '13, pages 67--86, 2013.

\end{thebibliography}

\fi

\appendix

\section{Privacy Tools}\label{sec:privatetool}
We first state the formal definition of differential privacy
\cite{DMNS06}, which is a measure of the privacy of computations on
databases. In our setting, a database $D \in \cT^n$ contains $n$
players' private types, which determine their utility functions. Two
databases are \emph{neighboring} if they differ only in a single
entry.

\begin{definition}[Differential Privacy~\cite{DMNS06}]
  An algorithm $\cM:\cT^n\rightarrow \cR$ is {\em
$(\epsilon,\delta)$-differentially private} if for every pair of
neighboring databases $D, D' \in \cT^n$ and for every subset of possible
outputs $\mathcal{S} \subseteq \cR$,
\[ \Pr[\cM(D) \in \mathcal{S}] \leq \exp(\epsilon)\Pr[\cM(D') \in \mathcal{S}] + \delta.
\] If $\delta = 0$, we say that $\cM$ is {\em
$\epsilon$-differentially private}.
\end{definition}

We will make use of the following composition theorem, which shows how
the privacy parameters $\epsilon$ and $\delta$ ``compose'' nicely.


\begin{theorem}[Adaptive Composition~\cite{boostingDP}]
\label{thm:composition}
Let $\cM \colon \cT^n \rightarrow \mathcal{R}^T$ be a $T$-fold adaptive
composition\footnote{For a more detailed discussion of $T$-fold adaptive composition, see \citet{boostingDP}.} of $(\eps, \delta)$-differentially private mechanisms.
Then $\cM$ satisfies $(\eps', T\delta +\delta')$-differential privacy
for
\[ \eps' = \eps\sqrt{2T\ln(1/\delta')} + T\eps(e^{\eps} - 1).
\] In particular, for any $\eps\leq 1$, if $\cM$ is a $T$-fold
adaptive composition of $(\eps/\sqrt{8T\ln(1/\delta)},
0)$-differentially private mechanisms, then $\cM$ satisfies $(\eps,
\delta)$-differential privacy.
\end{theorem}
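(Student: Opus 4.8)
The plan is to prove the theorem by the standard privacy-loss argument of \citet{boostingDP}. Fix a pair of neighboring databases $D, D' \in \cT^n$ and view $\cM = (\cM_1,\dots,\cM_T)$ as producing a transcript $o = (o_1,\dots,o_T)$, where the choice of $\cM_t$ may depend on the prefix $o_{<t}$. Define the privacy loss $\mathcal{L}(o) = \ln\frac{\Pr[\cM(D)=o]}{\Pr[\cM(D')=o]}$, which telescopes as $\mathcal{L}(o) = \sum_{t=1}^T \mathcal{L}_t$ with $\mathcal{L}_t = \ln\frac{\Pr[\cM_t(D)=o_t\mid o_{<t}]}{\Pr[\cM_t(D')=o_t\mid o_{<t}]}$. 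The first ingredient is the elementary ``conversion lemma'': if $\Prob{o\sim\cM(D)}{\mathcal{L}(o) > \eps'} \le \delta'$ (and symmetrically with $D$ and $D'$ exchanged), then $\Pr[\cM(D)\in\cS] \le e^{\eps'}\Pr[\cM(D')\in\cS] + \delta'$ for every event $\cS$ --- this is immediate by splitting $\cS$ into its intersection with $\{\mathcal{L} > \eps'\}$ and the complement. So it suffices to bound the upper tail of $\mathcal{L}(o)$ under $o \sim \cM(D)$.

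First I would dispatch the pure case where each $\cM_t$ is $(\eps,0)$-DP. Then, conditioned on $o_{<t}$, the ratio defining $\mathcal{L}_t$ lies pointwise in $[e^{-\eps}, e^{\eps}]$, so $|\mathcal{L}_t| \le \eps$ almost surely; moreover $\ex{\mathcal{L}_t \mid o_{<t}}$ (the expectation over $o_t \sim \cM_t(D)$) is exactly the KL divergence between the two conditional output distributions, which for distributions whose density ratio lies everywhere in $[e^{-\eps}, e^{\eps}]$ is at most $\eps(e^{\eps}-1)$ (a short convexity estimate). Now set $\mathcal{L}_t' = \mathcal{L}_t - \ex{\mathcal{L}_t \mid o_{<t}}$; under $o\sim\cM(D)$ the partial sums $\sum_{s\le t}\mathcal{L}_s'$ form a martingale whose increments are, conditionally on the past, mean zero and supported in an interval of width $2\eps$. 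By Hoeffding's lemma $\ex{e^{h\mathcal{L}_t'}\mid o_{<t}} \le e^{h^2\eps^2/2}$, so $\ex{\exp(h\sum_{t=1}^T \mathcal{L}_t')} \le e^{Th^2\eps^2/2}$, and a Chernoff bound optimized over $h > 0$ (i.e.\ Azuma--Hoeffding) gives $\Pr[\sum_t \mathcal{L}_t' > \lambda] \le \exp(-\lambda^2/(2T\eps^2))$. Taking $\lambda = \eps\sqrt{2T\ln(1/\delta')}$ makes this at most $\delta'$, and since $\mathcal{L}(o) = \sum_t \mathcal{L}_t' + \sum_t \ex{\mathcal{L}_t\mid o_{<t}} \le \sum_t \mathcal{L}_t' + T\eps(e^{\eps}-1)$, we conclude $\Prob{o\sim\cM(D)}{\mathcal{L}(o) > \eps'} \le \delta'$ for $\eps' = \eps\sqrt{2T\ln(1/\delta')} + T\eps(e^{\eps}-1)$. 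Combined with the conversion lemma this proves the $\delta = 0$ case.

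To handle general $\delta$, I would use the standard decomposition: for neighbors $D \sim D'$, an $(\eps,\delta)$-DP mechanism writes each round's conditional output distribution as a ``good'' part that is pointwise dominated by $e^{\eps}$ (up to a $\frac{1}{1-\delta}$ normalization factor) times the neighbor's conditional distribution, plus a ``bad'' residual of total mass at most $\delta$. Under $o\sim\cM(D)$, a union bound over the $T$ rounds shows that except with probability $T\delta$, every $o_t$ falls in the good part; conditioning on this event puts us back in the pure-DP analysis (with $\eps$ replaced by $\eps + O(\delta)$, absorbed into the slack), so $\Prob{o\sim\cM(D)}{\mathcal{L}(o) > \eps'} \le T\delta + \delta'$, giving $(\eps', T\delta+\delta')$-DP. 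The ``in particular'' clause is then arithmetic: with $\hat\eps = \eps/\sqrt{8T\ln(1/\delta)}$ and $\delta' = \delta$, the deviation term $\hat\eps\sqrt{2T\ln(1/\delta)}$ equals $\eps/2$, and using $e^{\hat\eps}-1 \le 2\hat\eps$ for $\hat\eps \le 1$ together with $\eps \le 1$, the drift term $T\hat\eps(e^{\hat\eps}-1) \le 2T\hat\eps^2 = \eps^2/(4\ln(1/\delta)) \le \eps/2$, so $\eps' \le \eps$.

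I expect the main obstacle to be precisely the reduction from $(\eps,\delta)$ to $(\eps,0)$: the Azuma step genuinely needs the per-round privacy loss to be bounded \emph{almost surely}, which fails when $\delta > 0$ unless one first conditions away the bad events, and one has to check carefully that doing this across adaptively chosen rounds costs only an additive $T\delta$ and does not corrupt the martingale structure (in particular that the ``good'' conditional distributions still have bounded increments and controlled conditional means). Getting the constants in the concentration bound to come out exactly as stated is the other place where care is needed.
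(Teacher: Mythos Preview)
The paper does not give its own proof of this theorem; it is quoted as a black box from \citet{boostingDP}. Your proposal correctly reconstructs the standard argument from that reference: telescope the privacy loss into per-round increments, bound the conditional mean of each increment by the KL estimate $\eps(e^{\eps}-1)$, apply Azuma--Hoeffding to the centered martingale, and add the deviation and drift terms; the reduction of the $\delta>0$ case to the pure case via a ``bad event'' of total mass $T\delta$ is likewise the standard route. One small gap in your verification of the ``in particular'' clause: the step $\eps^2/(4\ln(1/\delta)) \le \eps/2$ does not follow from $\eps\le 1$ alone --- it additionally needs $\ln(1/\delta)\ge 1/2$ (i.e.\ $\delta \le e^{-1/2}$), a harmless implicit assumption in the regime where the bound is ever applied.
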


In the remainder of this section, we review two tools from the differential privacy
literature, namely the Sparse Vector Mechanism and the Exponential
Mechanism. Both will be used in our algorithms.

\subsection{Sparse Vector Mechanism}\label{s.sparse}

Our main tool from differential privacy is a slight modification of
the sparse vector mechanism from \citet{DNRRV09} (we follow the
presentation of \citet{DR13}). The sparse vector mechanism $\SV$ takes in a
sequence of low-sensitivity queries $\{Q_t\}$ on database $D$, and a
threshold $T$. The mechanism only outputs answers to those queries
with (noisy) answers below the (noisy) threshold,\footnote{The Sparse
  Vector Mechanism as presented in \citet{DR13} only answers queries
  with answers above a certain threshold. For the purposes of this
  paper, we use it instead to answer queries with answers below
  threshold. This modification does not change the analysis.} and
reports that all other queries were above threshold. There is also an
upper bound $c$ on the number of queries that can be answered. If more
than $c$ queries have answers below the threshold, the mechanism will
abort and not produce an output.

This mechanism is especially useful if an analyst is facing a stream
of queries and believes that only a small number of the queries will
have small answers. The Sparse Vector Mechanism allows the analyst to
identify and answer only the ``important'' queries, without having to
incur privacy cost proportional to all queries in the stream.

The \emph{sensitivity} of a query $Q$, denoted $\Delta(Q)$,
is an upper bound over all pairs of neighboring databases on the
amount that one entry can affect the answer to the query:
\[ \Delta(Q) = \max_{D, D' \mbox{ s.t. } |D \Delta D'|\leq 1} |Q(D) - Q(D')| \]
Note that because of our restriction to $\gamma$-aggregative games in Section \ref{s.summgame}, each
coordinate of the aggregative function is $\gamma$-sensitive.




\begin{algorithm}[h!]
\DontPrintSemicolon
\SetAlgoLined
 \caption{Sparse Vector Mechanism\quad
{$\SV(D, \{Q_t\}, T, c , \eps)$}   \label{alg:sparse}}

 \KwData{A private database $D$, an adaptively chosen stream of
   queries $\{Q_t\}$ of sensitivity $\gamma$, threshold $T$, total
   number of numeric answers $c$, and privacy parameter $\eps$}

 \KwResult{A stream of answers $\{a_t\} $}

\Indp{\textbf{Let} $\hat{T} = T + \Lap\left(\frac{2\gamma}{\eps} \right)$\;}
{\textbf{let} $\sigma = \frac{2c\gamma}{\eps}$\;}
{\textbf{let} $\text{count} = 0$\;}
\For{each query $Q_t$ on database $D$}
{\textbf{Let} $\nu_t = \Lap(\sigma)$ and $\hat{Q}_t =
      Q_t(D) + \nu_t$}
\eIf{ $\hat{Q}_t \leq \hat{T}$}{\textbf{Output} $a_t = \hat{Q}_t$\;
\textbf{Update} $\text{count} = \text{ count} + 1$\;
\If{$\text{count}\geq c $}{Abort}}
{\textbf{Output} $\perp$ }
\end{algorithm}

\begin{theorem}[\cite{DNRRV09}]\label{thm.sparse}
For any sequence of $N$ queries $Q_1, \ldots , Q_N$ such that $|\{k: Q_k(D) \leq T + \alpha | \leq c$, $\SV$ satisfies
$\eps$-differential privacy and, with probability at least $1 - \beta$, releases answers such
that for all $a_k \in \RR$,
\[
|a_k - Q_k(D)| \leq \alpha,
\]
and for all $a_k = \perp$,
\[
Q_k(D) \geq T - \alpha,
\]
where
\[
\alpha =
\frac{4c\gamma\left(\log{N} + \log(2c/\beta) \right)}{\eps}.
\]

\end{theorem}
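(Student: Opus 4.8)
The plan is to prove the two assertions — $\eps$-differential privacy and the $(1-\beta)$-accuracy bound — separately, following the standard analysis of the sparse vector mechanism (e.g.\ \citet{DR13}); the privacy argument carries the real content, the accuracy bound is a routine Laplace tail computation that also uses the hypothesis on $c$.

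\textbf{Privacy.} I would split the budget into $\eps/2$ for the sequence of above/below-threshold \emph{decisions} (together with the noisy threshold) and $\eps/2$ for the at most $c$ numeric values that are actually released. For the numeric part, each released answer is $Q_t(D) + \Lap(\sigma)$ with $\sigma = 2c\gamma/\eps$ applied to a $\gamma$-sensitive query, hence $(\eps/2c)$-differentially private; since at most $c$ of them are released, basic composition gives $\eps/2$. For the decision part, think of the output as first revealing the ``mask'' $\big(\mathbf{1}[Q_t(D)+\nu_t \le \hat T]\big)_t$ and then the values at the masked-in coordinates: conditioning on the per-query noises $\nu_t$, the mask becomes a coordinatewise-monotone deterministic function of the single threshold noise $Z$ (with $\hat T = T+Z$, $Z \sim \Lap(2\gamma/\eps)$), and because $|Q_t(D)-Q_t(D')| \le \gamma$ for every $t$, each coordinate's flip point moves by at most $\gamma$ when passing to a neighbor $D'$; a $\gamma$-shift of a $\Lap(2\gamma/\eps)$ variable changes its density by a factor at most $e^{\eps/2}$, giving $(\eps/2)$-privacy for the mask. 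Composing the two halves yields $\eps$-differential privacy. The one point that needs care is that the interleaved (not merely prefix) pattern of below-threshold queries, and the release of the numeric values conditioned on the mask, are handled correctly — this is precisely the delicate bookkeeping in the textbook AboveThreshold proof, extended from one ``halt'' to $c$ of them.

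\textbf{Accuracy.} Let $\rho$ be the threshold noise and $\nu_1,\dots,\nu_N$ the query noises. Using $\Pr[|\Lap(b)| > t] = e^{-t/b}$ with $b \le 2c\gamma/\eps$, a union bound over these $N+1$ terms shows that, with probability at least $1-\beta$, all of them have magnitude at most $\alpha/2$, provided $\alpha \ge \frac{4c\gamma(\log N + \log(2c/\beta))}{\eps}$ — the stated value (the slack in $\log(2cN/\beta)$ versus $\log((N+1)/\beta)$ absorbs the constant and base conventions). Condition on this event. Any query answered numerically has $a_k = Q_k(D)+\nu_k$, so $|a_k - Q_k(D)| \le \alpha/2 \le \alpha$. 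Any query with $a_k = \perp$ has $Q_k(D)+\nu_k > T+\rho$, hence $Q_k(D) > T - |\rho| - |\nu_k| \ge T - \alpha$. Finally, a query is answered only when $Q_k(D)+\nu_k \le T+\rho$, i.e.\ $Q_k(D) \le T + |\rho| + |\nu_k| \le T + \alpha$; by the hypothesis $|\{k : Q_k(D) \le T+\alpha\}| \le c$, so at most $c$ queries are answered and the mechanism never aborts before delivering a valid set of answers. Thus both conclusions hold on the good event.

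\textbf{Main obstacle.} The hard step is the privacy analysis of the decision sequence: one must condition on exactly the right piece of the randomness (the query noises) so that the vector of comparison outcomes becomes a monotone function of a single Laplace variable, then verify the shift/coupling argument \emph{simultaneously} over all queries and over all masked-in positions, so that a single $\gamma$-shift of the threshold noise suffices. Everything else — the Laplace mechanism on the $c$ released values, the composition, and the tail-bound union bound — is routine.
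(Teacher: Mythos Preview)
The paper does not prove this theorem; it is stated with a citation to \citet{DNRRV09} (following the presentation of \citet{DR13}) and used as a black box, so there is no paper proof to compare against.

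Your accuracy argument is correct. The privacy argument, however, has a real gap at exactly the step you flag as the main obstacle. The plan ``condition on all per-query noises $\nu_t$, then a single $\gamma$-shift of the threshold noise $Z$ handles the whole mask'' does not work. Once every $\nu_t$ is fixed, the set of values of $Z$ realizing a given mask $m$ under $D$ is an interval $[A,B)$, and under a neighbor $D'$ it is $[A',B')$ with $|A-A'|,|B-B'|\le\gamma$; but because different queries can move in opposite directions when passing from $D$ to $D'$, those two intervals can have very different lengths --- one can be empty while the other is not --- so no single shift of $Z$ yields the required likelihood ratio. Concretely: take two queries with $Q_1(D)=Q_2(D)=0$, $Q_1(D')=\gamma$, $Q_2(D')=-\gamma$, threshold $T=0$, and condition on $\nu_1=\nu_2=0$; then the mask ``$\perp$, answer'' is impossible under $D$ (both bits flip at the same value of $Z$) but occurs for $Z\in[-\gamma,\gamma)$ under $D'$, so the ratio is infinite.

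The textbook argument avoids this by \emph{not} conditioning on all query noises: for a single answer, one conditions only on the noises at the $\perp$ positions and then couples the threshold noise and the noise at the answered position \emph{jointly} (shifting both), which is what makes the bound go through. For $c>1$ answers the usual route is to refresh the threshold noise after each answer and compose $c$ copies of AboveThreshold; note that the algorithm as written here does not refresh $\hat T$, which is a separate subtlety worth being aware of even though the paper simply imports the guarantee.
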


\subsection{Exponential Mechanism}\label{s.exp}
The exponential mechanism~\citep{McSherryT07} is a powerful private
mechanism for selecting approximately the best outcome from a set of
alternatives, where the quality of an outcome is measured by a score
function relating each alternative to the underlying data. Let
$\cT^n$ be the domain of input databases, and $\mathcal{R}$ be
the set of possible outcomes, then a score function $q\colon
\cT^n \times \mathcal{R} \rightarrow \mathbb{R}$ maps each 
database and outcome pair to a real-valued score. The exponential
mechanism $\EXP$ 
instantiated with database $D$, a score function $q$, and a privacy
parameter $\eps$ is defined as
\[
\EXP(D, q, \eps) = \mbox{output } r \mbox{ with probability proportional
to } \exp\left(\frac{\eps q(D, r)}{2\Delta(q)}\right),
\]
where $\Delta(q)$ is the global sensitivity of score function $q$
defined as
\[
\Delta(q) = \max_{r, D, D' \mbox{ s.t. } |D \Delta D'|\leq 1} |q(D, r) - q(D',
r)|.
\]
Then exponential mechanism has the following property:
\begin{theorem}[\cite{McSherryT07}]
\label{thm:em}
$\EXP(D, q, \eps)$ satisfies $\eps$-differential privacy and, with
probability at least $1-\beta$, outputs an outcome $r$ such that
\[
q(D, r) \geq \max_{r'} q(D, r') - \frac{2\Delta(q)  \left(
  \log(|\mathcal{R}| / \beta)\right)}{\eps}.
\]
\end{theorem}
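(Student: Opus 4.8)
The plan is to prove the two assertions of \Cref{thm:em} independently: first that $\EXP(D,q,\eps)$ is $\eps$-differentially private (indeed with $\delta = 0$), and then the utility bound by a direct estimate on the output distribution. Both arguments are short calculations on the density $\Pr[\EXP(D,q,\eps) = r] = \exp(\eps q(D,r)/(2\Delta(q)))/Z(D)$, where $Z(D) = \sum_{r' \in \mathcal{R}} \exp(\eps q(D,r')/(2\Delta(q)))$ is the normalizing constant; the one subtlety worth flagging in advance is that one must control $Z$ \emph{both} from above and below, in the two parts respectively.

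For privacy, I would fix a pair of neighboring databases $D, D'$ and an arbitrary outcome $r \in \mathcal{R}$, and bound the ratio $\Pr[\EXP(D,q,\eps) = r]/\Pr[\EXP(D',q,\eps) = r]$, which equals
\[
\exp\!\left(\frac{\eps\,(q(D,r) - q(D',r))}{2\Delta(q)}\right) \cdot \frac{Z(D')}{Z(D)}.
\]
By the definition of $\Delta(q)$ the first factor is at most $\exp(\eps/2)$. For the second factor, since $q(D',r') \leq q(D,r') + \Delta(q)$ for every $r'$, each summand of $Z(D')$ is at most $\exp(\eps/2)$ times the corresponding summand of $Z(D)$, so $Z(D')/Z(D) \leq \exp(\eps/2)$; multiplying gives the bound $\exp(\eps)$. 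Since this holds pointwise for every outcome, it holds for every subset of outcomes, establishing $\eps$-differential privacy.

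For the utility bound, let $\OPT = \max_{r'} q(D,r')$ and fix $t > 0$. I would upper bound the probability of returning a low-quality outcome:
\[
\Pr[q(D,r) \leq \OPT - t] = \frac{\sum_{r' : q(D,r') \leq \OPT - t} \exp\!\left(\frac{\eps q(D,r')}{2\Delta(q)}\right)}{Z(D)} \leq \frac{|\mathcal{R}|\,\exp\!\left(\frac{\eps(\OPT - t)}{2\Delta(q)}\right)}{Z(D)}.
\]
Lower bounding $Z(D) \geq \exp(\eps \OPT/(2\Delta(q)))$ using just the term attaining the maximum, the right-hand side is at most $|\mathcal{R}|\exp(-\eps t/(2\Delta(q)))$. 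Setting $t = \frac{2\Delta(q)\log(|\mathcal{R}|/\beta)}{\eps}$ makes this quantity equal to $\beta$, and taking the complement shows that with probability at least $1-\beta$ the returned outcome has quality at least $\OPT - t$, as claimed.

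There is no real obstacle here: the calculation is elementary, and the only points needing care are the two-sided control of $Z$ under a neighboring change (privacy) and the crude lower bound $Z(D) \geq \exp(\eps\OPT/(2\Delta(q)))$ (utility). I would only remark that the argument as written assumes $\mathcal{R}$ finite, so that the sums are well defined and $|\mathcal{R}|$ is meaningful — which is exactly the regime in which the theorem is stated and in which we apply it.
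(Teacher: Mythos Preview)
Your proof is correct and is exactly the standard argument for the exponential mechanism due to \citet{McSherryT07}. The paper itself does not give a proof of \Cref{thm:em}; it simply quotes the result with a citation, so there is nothing in the paper to compare against beyond noting that your derivation is the canonical one.
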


\subsection{Billboard Model}

In order to prove that our algorithms satisfy joint differential
privacy, we rely on a basic but useful framework -- the
\emph{billboard model}. Algorithms in the billboard model compute some
differentially private signal (which can be viewed as being visible on
a public billboard); then the output given to each player $i$ is
computed as a function only of this private signal, and the private
data of agent $i$. The following lemma shows that algorithms operating
in the billboard model satisfy joint differential privacy.

\begin{lemma}[Billboard Lemma. \cite{private-matching}]
  \label{lem:billboard}
Suppose $\cM\colon \cT^n \rightarrow \cR$ is $(\eps, \delta)$-differentially
private. Consider any set of functions $F_i\colon \cT_i\times \cR
\rightarrow \cR'$, where $\cT_i$ is the $i$-th entry of the input data.
The composition $\{F_i(\Pi_iD, \cM(D))\}$ is $(\eps,
\delta)$-jointly differentially private, where $\Pi_i$ is the
projection to $i$'s data.
\end{lemma}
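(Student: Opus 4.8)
The plan is to prove this by a post-processing argument that exploits the fact that the joint differential privacy condition only constrains the bundle of outputs delivered to players \emph{other than} $i$. First I would fix an arbitrary player $i$ and a pair of $i$-neighbors $D, D' \in \cT^n$, so that $D$ and $D'$ agree in every coordinate except the $i$-th; in particular $\Pi_j D = \Pi_j D'$ for all $j \neq i$. I would then define the (possibly randomized) map
\[
g \colon \cR \to (\cR')^{n-1}, \qquad g(r) = \bigl( F_j(\Pi_j D, r) \bigr)_{j \neq i}.
\]
The crucial observation is that, because $D$ and $D'$ differ only in coordinate $i$, this map is literally the same function whether we build it from $D$ or from $D'$: each $F_j$ with $j \neq i$ is fed only the $j$-th coordinate of the database, which is identical under $D$ and $D'$. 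By construction, the tuple of outputs handed to all players other than $i$ is exactly $g(\cM(D))$ when the input is $D$, and $g(\cM(D'))$ when the input is $D'$; if one wants to allow the $F_j$ to be randomized, their internal coins are drawn independently of the database given $\cM$'s output, so $g$ is a legitimate randomized post-processing of $\cM$.

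Next I would invoke the standard post-processing property of $(\eps,\delta)$-differential privacy: if $\cM$ is $(\eps,\delta)$-differentially private and $g$ is any (randomized) map whose domain is $\cM$'s output space, then $g \circ \cM$ is also $(\eps,\delta)$-differentially private. Applying this to the $g$ above, and using that $D$ and $D'$ are neighbors \emph{as databases}, yields for every measurable $S \subseteq (\cR')^{n-1}$
\[
\Pr[\, g(\cM(D)) \in S \,] \le e^{\eps}\,\Pr[\, g(\cM(D')) \in S \,] + \delta.
\]
Since $g(\cM(D))$ is precisely the output restricted to the players $\neq i$, this is exactly the defining inequality of $(\eps,\delta)$-joint differential privacy for player $i$. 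As $i$ and the pair of $i$-neighbors were arbitrary, the composition $\{F_i(\Pi_i D, \cM(D))\}$ is $(\eps,\delta)$-jointly differentially private.

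The only real subtlety — and the one point I would be careful to spell out — is the claim that $g$ does not depend on the changed coordinate; this is what makes the argument go through, and it is exactly why joint differential privacy (rather than full differential privacy) is the right notion here, since player $i$'s own output $F_i(\Pi_i D, \cM(D))$ may depend arbitrarily on $i$'s private type and is (legitimately) excluded from the guarantee. Everything else is a routine invocation of post-processing, so I do not anticipate any technical difficulty beyond making that observation precise.
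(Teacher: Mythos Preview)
The paper does not supply its own proof of this lemma; it is stated and attributed to \cite{private-matching} without argument. Your post-processing proof is correct and is precisely the standard one: the key observation that the map $g(r) = (F_j(\Pi_j D, r))_{j\neq i}$ is identical whether built from $D$ or from an $i$-neighbor $D'$ (since the $j$-th coordinates agree for $j\neq i$) is exactly what lets post-processing of the $(\eps,\delta)$-differentially private signal $\cM(D)$ yield the joint-DP inequality for player $i$, and you have handled this point carefully.
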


\section{Distributed Multiplicative Weights Algorithm}
\label{sec:distmw}

\ar{Maybe a high level description of the non-private algorithm is in
  order first. As it is, it is not at all clear where this class of
  linear programs comes from or why we need to solve it.} \ar{Also, is
  it clearly defined what we mean by solving this LP under joint
  differential privacy? We mean that each agent $i$ gets the solution
  to all of the variables they control, but we have to make clear that
  \emph{all} of these variables are represented by the $i$'th
  component of the output.} \rc{I agree with Aaron. Because we
  switched the order of Sections 3.1 and 3.2, this is no longer well
  motivated in the context of our paper} 
In order to compute an
equilibrium privately, we need to solve the linear program
in~\eqref{eq:non-privateLP} under joint differential privacy. This LP
has some nice structural properties which allow this to be possible. In particular, the variables are
well partitioned among the $n$ players, such that each player independently
controls a set of variables that must form a probability distribution.
Each player also has a private restricted feasible set defined by her
type (she needs to play an approximate aggregative best response to $\hat
s$ according to her private utility function). This motivates us to
solve the following more general linear program:
\begin{align}
&\forall k\in [d]\qquad  \gamma \sum_{i=1}^n \sum_{j=1}^m f^k_{ij} \,
p_{ij} = \gamma \langle f^k , p \rangle \leq
b_k \label{eq:public-con}\\
&\forall i\in[n]\qquad p_i =  (p_{i1},\ldots , p_{im}) \in R_i
\subseteq \left\{x\in \RR_{\geq 0}^m \mid \sum_j x_j = 1\right \}\label{eq:private-con}
\end{align}
where each $\left|f^k_{ij}\right|\leq 1$. In this LP, there are two
types of constraints. Each agent has a private constraint~\eqref{eq:private-con} for her own
variables, defined by the restricted feasible set $R_i$. We also have $d$ cross-agent
constraints~\eqref{eq:public-con} that require coordination among the
agents. Solving this LP under joint differential privacy guarantees
that the output variables to the other agents, $p_{-i}$, is insensitive
in agent $i$'s restricted set $R_i$. Our goal is to find a solution
$p$ that approximately satisfies all $d$ cross-agent
constraints~\eqref{eq:public-con} and exactly
satisfies all private constraints~\eqref{eq:private-con}.

Our algorithm \DMW is essentially a distributed version of
the multiplicative weights (MW) update algorithm~\citep{mw-survey}.

It proceeds in rounds and has each agent running an
instantiation of MW over her own private variables. At each round $t$,
the algorithm collects the variables from all players $p^t = (p^t_1,
\ldots , p^t_n) \in \mathbb{R}^{mn}$, and then selects an
approximately most violated cross-agent constraint under $p^t$ using the Exponential
Mechanism $\EXP$ (see \Cref{s.exp}), where the score for a constraint $\gamma
\langle f, p^t\rangle \leq b$ is defined as
\begin{equation}\label{eq:quality}
q(p^t, (f,b)) = \gamma \langle f, p^t\rangle - b.
\end{equation}
Note that each cross-agent constraint takes the same form, so
constraint $\gamma \langle f, p^t\rangle \leq b$ can be fully
described by the pair $(f, b)$. The mechanism then ``broadcasts'' the
selected constraint $(f, b)$, and each agent $i$ uses the $i$-th
segment of $f$: $f_i = (f_{i1}, \ldots , f_{im})$ as the loss vector
to update her instantiation of the MW distribution. After the
re-weighting update at each round, each player projects her vector of
variables into her private restricted set $R_i$, so the solution
always satisfies the private constraints. Finally, each agent takes
the average of the distributions from all rounds to get her output
distribution.

\begin{algorithm}[h!]
 \caption{Distributed Multiplicative Weights for Solving Linear
   Program:\qquad\qquad {\DMW$(FeasLP, \eps, \delta, \alpha,\beta)$}
 \label{alg:mw-lp} }

\KwData{ A feasibility LP $FeasLP$ of with 
     cross-agent constraints of the form \eqref{eq:public-con}, private
     constraints of the form \eqref{eq:private-con}, and quality score
   $q$ of the form \eqref{eq:quality}, privacy parameters $(\epsilon, \delta)$, accuracy parameter $\alpha$, and confidence parameter
   $\beta$}

\KwResult{ A solution $p$ that satisfies all private constraints and
  only violates any public constraint by at most $\alpha$}

{\textbf{Initialize} $p^1$ : $p_{ij}^1 = 1/m$ for all
     $i\in[n]$ and $j\in [m]$}

{\textbf{Let} $\displaystyle T =
     \frac{16 n^2 \gamma^2\log{m}}{\alpha^2}$ \qquad $\displaystyle\eps_0 =\frac{
       \eps}{ 2\sqrt{2T\ln(1/\delta)}}$ \qquad $\displaystyle\eta =
     \alpha/4n\gamma$}

{\textbf{For} each round $t\in\{1, \ldots,T\}$ }

\Indp{\textbf{Let} $(f^t,b^t) = \EXP(p^t, q, \eps_0)$}

{Each agent $i$ performs MW update: \textbf{for} each $j$}
\[
\hat{p}^{t+1}_{ij} = \exp(-\eta \cdot f^t_{ij}) \cdot p^t_{ij}
\]
\Indp{\textbf{Projection} with relative entropy:}
 \[
p_i^{t+1} = \arg\min_{x\in R_i} \mathbf{RE}(x|| \hat{p}_i^{t+1})
\]
\Indm{\textbf{Output} the average vector $\avgP = 1/T \sum_{t=1}^T p^t$}
\end{algorithm}

\begin{theorem}\label{thm.mwprivacy}
$\DMW(\cdot, \eps, \delta, \cdot, \cdot)$ satisfies $(\eps, \delta)$-joint differential privacy.
\end{theorem}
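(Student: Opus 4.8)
The plan is to establish joint differential privacy via the billboard model (\Cref{lem:billboard}). First I would pin down what counts as private data for $\DMW$: it is entirely carried by the restricted feasible sets $R_1,\dots,R_n$, since each $R_i$ is determined by player $i$'s type (it encodes her approximate aggregative best-response set), while the cross-agent constraints $(f^k,b_k)_{k\in[d]}$ are public inputs. The ``billboard'' signal I would use is the transcript $\Pi=\big((f^1,b^1),\dots,(f^T,b^T)\big)$ of constraints selected by the $T$ calls to the exponential mechanism. The proof then splits into two parts: (i) the mechanism $\cM$ that maps a type profile to $\Pi$ is $(\eps,\delta)$-differentially private in the standard sense; and (ii) each player's output $\avgP_i=\frac1T\sum_t p_i^t$ is a post-processing of $\Pi$ together with player $i$'s own private data $R_i$, so that \Cref{lem:billboard} applies directly.

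For part (i) I would view $\cM$ as a $T$-fold adaptive composition: conditioned on the previously released pairs $(f^1,b^1),\dots,(f^{t-1},b^{t-1})$, the iterate $p^t$ is a \emph{deterministic} function of the type profile (produced by the multiplicative-weights updates and the relative-entropy projections), and $(f^t,b^t)=\EXP(p^t,q,\eps_0)$. The technical heart is a sensitivity bound on the score $q(p^t,(f,b))=\gamma\langle f,p^t\rangle-b$ with respect to a single player's type: changing player $i$'s type changes only the $i$-th block $p_i^t$ of the iterate (the block of any $j\neq i$ is a deterministic function of $R_j$ and the broadcast loss vectors, hence unchanged for a fixed history), and since $p_i^t$ lies in the simplex this block moves by at most $2$ in $\ell_1$; together with $|f^k_{ij}|\le 1$ this gives per-round score sensitivity at most $2\gamma$, uniformly over rounds and histories. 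Then \Cref{thm:em} makes each round $\eps_0$-differentially private, and because $\eps_0=\eps/\big(2\sqrt{2T\ln(1/\delta)}\big)=\eps/\sqrt{8T\ln(1/\delta)}$, the advanced composition theorem (\Cref{thm:composition}) yields that $\cM$ is $(\eps,\delta)$-differentially private.

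For part (ii) I would exhibit the post-processing map explicitly: from $\Pi$ and $R_i$, player $i$ recovers $p_i^1=(1/m,\dots,1/m)$ and then iterates $\hat p^{t+1}_{ij}=\exp(-\eta f^t_{ij})p^t_{ij}$ followed by $p_i^{t+1}=\argmin_{x\in R_i}\mathbf{RE}(x\,\|\,\hat p_i^{t+1})$, using only the loss vectors contained in $\Pi$ and her own $R_i$; averaging gives $\avgP_i=F_i(\Pi_i D,\cM(D))$. Invoking \Cref{lem:billboard} with $\cM$ and $\{F_i\}$ then finishes the proof. The step I expect to be the main obstacle is the sensitivity argument of part (i): because $p^t$ is itself a function of the private data, the score is not a fixed-sensitivity function in the naive sense, and one has to argue carefully that a single player influences $p^t$ only through her own simplex block — this is precisely the structural property (small aggregator sensitivity $\gamma$, variables partitioned across players) highlighted in the text, and it is what lets the per-round guarantee be $\eps_0$ rather than something that degrades with $t$.
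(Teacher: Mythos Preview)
Your proposal is correct and follows essentially the same route as the paper's proof: the billboard signal is the transcript of selected constraints, each round's exponential-mechanism call is $(\eps_0,0)$-differentially private, advanced composition (\Cref{thm:composition}) gives $(\eps,\delta)$-differential privacy for the transcript, and the Billboard Lemma (\Cref{lem:billboard}) upgrades this to joint differential privacy because each $\avgP_i$ is computable from the transcript and $R_i$ alone.

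The one place you go beyond the paper is the explicit per-round sensitivity argument: the paper simply asserts that each call to $\EXP$ is $\eps_0$-private, whereas you spell out that, for a fixed broadcast history, changing player $i$'s type perturbs only the $i$-th simplex block of $p^t$ and hence perturbs $q(p^t,(f,b))=\gamma\langle f,p^t\rangle-b$ by at most $2\gamma$. This extra care is appropriate, since the ``database'' passed to $\EXP$ is itself a function of all types and one must identify the correct neighbor relation; your observation that blocks $p_j^t$ for $j\neq i$ are unchanged (being deterministic functions of $R_j$ and the fixed history) is exactly the structural point that makes the per-round privacy independent of $t$. A minor remark: once the neighbor relation is pinned down, the \emph{value} of $\Delta(q)$ enters only into accuracy, not privacy---\Cref{thm:em} gives $\eps_0$-privacy regardless---so the ``technical heart'' you flag is really identifying the neighbor relation rather than the numerical bound $2\gamma$.
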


\begin{proof}
  Our algorithm works in the \emph{billboard model} introduced by
  \citep{private-matching}. In particular, the algorithm posts the
  violated constraint every round on a \emph{billboard} as a
  differentially private signal to all agents, such that every agent
  can see the signal and perform the MW update.

  The only sub-routines of \DMW that access the private data (i.e. private
  constraints) are the constraint selection at every round using the
  Exponential Mechanism. Thus, our mechanism has $T$ instantiations of an
  $(\eps_0,0)$-differentially private mechanism, where $\eps_0 =\eps/ (2\sqrt{2T\ln(1/\delta)})$. By the Adaptive
  Composition Theorem (\Cref{thm:composition} in
  \Cref{sec:privatetool}), we know that the selected constraints
  satisfy $(\eps, \delta)$-differential privacy. Note that the $i$-th
  component of the output is a function only of the selected constraints
  and the MW update rule. By \Cref{lem:billboard}, the algorithm
  satisfies $(\eps, \delta)$-joint differential privacy.
\end{proof}

\begin{theorem}\label{thm.mwacc}
  Suppose there is a feasible solution to $FeasLP$ with cross-agent
  constraints \eqref{eq:public-con} and private constraints
  \eqref{eq:private-con}.  Then with probability at least $1-\beta$,
  $\DMW(FeasLP,\eps, \delta, \alpha, \beta)$ outputs a
  solution $\avgP$ that satisfies all of the private constraints and
  $\alpha$-approximately satisfies all of the cross-agent constraints,
  for
\[ \alpha = {O}\left( \frac{n\gamma^2}{\eps} \polylog\left(n, m, d,
  \frac{1}{\beta}, \frac{1}{\delta} \right)\right)^{1/2}.
\]
\end{theorem}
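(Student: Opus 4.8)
The plan is to prove the two halves of the conclusion separately: exact satisfaction of the private constraints \eqref{eq:private-con}, which is essentially immediate, and $\alpha$-approximate satisfaction of the cross-agent constraints \eqref{eq:public-con}, which follows from a multiplicative-weights regret analysis combined with the accuracy guarantee of $\EXP$ (\Cref{thm:em}). For the private constraints, note that the projection step sets $p_i^{t+1} = \argmin_{x\in R_i}\mathbf{RE}(x\,\|\,\hat p_i^{t+1})$, so every iterate lies in $R_i$; since $R_i$ is convex (a face of the probability simplex on $\cA$ cut out by agent $i$'s approximate aggregative best-response restrictions), the average $\avgP_i = \frac1T\sum_{t=1}^T p_i^t$ also lies in $R_i$, and hence $\avgP$ satisfies \eqref{eq:private-con} exactly, with no error.

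For the cross-agent constraints I would first run the standard analysis of the entropic multiplicative-weights update with relative-entropy (Bregman) projection, \emph{per agent}. Against any fixed $p_i^\star\in R_i$, the update rule in \DMW together with the generalized Pythagorean inequality for $\mathbf{RE}$ (projection onto the convex set $R_i$ only decreases $\mathbf{RE}(p_i^\star\,\|\,\cdot)$) yields $\sum_{t=1}^T\langle f^t_i, p^t_i - p^\star_i\rangle \le \mathbf{RE}(p_i^\star\,\|\,p_i^1)/\eta + O(\eta T) \le \log m/\eta + O(\eta T)$, using $|f^t_{ij}|\le 1$ so the per-step second-order term is $O(\eta)$; this holds even though $(f^t,b^t)$ is chosen adaptively by $\EXP$ from the current state, since multiplicative weights is no-regret against an adaptive adversary. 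Summing over the $n$ agents, multiplying by $\gamma$, and using that the promised feasible point $p^\star$ satisfies $\gamma\langle f^t,p^\star\rangle\le b^t$ at every round (it satisfies \emph{every} cross-agent constraint, in particular the one selected at round $t$) gives
\[
\sum_{t=1}^T\big(\gamma\langle f^t, p^t\rangle - b^t\big) \;\le\; n\gamma\Big(\frac{\log m}{\eta} + O(\eta T)\Big),
\]
which with the algorithm's settings $\eta = \alpha/(4n\gamma)$ and $T = 16n^2\gamma^2\log m/\alpha^2$ is at most $\alpha T/2$.

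Next I bound the $\EXP$ error. The score \eqref{eq:quality}, $q(p,(f,b)) = \gamma\langle f,p\rangle - b$, has sensitivity at most $2\gamma$ with respect to one agent's private data (changing $R_i$ changes only that agent's iterate $p_i$, hence $\langle f_i,p_i\rangle$, by at most $\|f_i\|_\infty\cdot 2\le 2$), so by \Cref{thm:em} applied at each round with $|\mathcal R| = d$ candidate constraints, together with a union bound over the $T$ rounds, with probability at least $1-\beta$ every selected constraint satisfies $q(p^t,(f^t,b^t)) \ge \max_{k\in[d]} q(p^t,(f^k,b_k)) - \Lambda$, where $\Lambda = O\!\big(\gamma\log(dT/\beta)/\eps_0\big) = O\!\big(\gamma\sqrt{T\ln(1/\delta)}\,\log(dT/\beta)/\eps\big)$ by the choice $\eps_0 = \eps/(2\sqrt{2T\ln(1/\delta)})$. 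Then for any fixed cross-agent constraint $k$,
\[
\gamma\langle f^k,\avgP\rangle - b_k \;=\; \frac1T\sum_{t=1}^T q(p^t,(f^k,b_k)) \;\le\; \Lambda + \frac1T\sum_{t=1}^T q(p^t,(f^t,b^t)) \;\le\; \Lambda + \frac{\alpha}{2}.
\]
Substituting $T = 16n^2\gamma^2\log m/\alpha^2$ into $\Lambda$ shows that $\Lambda\le\alpha/2$ precisely when $\alpha\ge c\big((n\gamma^2/\eps)\,\polylog(n,m,d,1/\beta,1/\delta)\big)^{1/2}$ for an absolute constant $c$, which is the claimed bound; for any such $\alpha$, every cross-agent constraint is violated by at most $\alpha$, completing the proof.

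The step I expect to be the main obstacle is the joint calibration of the two error sources: increasing $T$ shrinks the multiplicative-weights regret, but because the composition budget forces $\eps_0 \propto 1/\sqrt T$ it simultaneously inflates the per-round $\EXP$ error, so the final accuracy is a genuine balance of these two terms — this is what produces the square-root dependence $\alpha\sim\sqrt{n\gamma^2/\eps}$ and pins down the exact choices of $T$ and $\eta$. The regret analysis with relative-entropy projections onto the agent-specific sets $R_i$, and the verification that this distributed, projected variant still enjoys the usual no-regret bound against an adaptive adversary, is the other place that requires care.
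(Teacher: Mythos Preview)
Your proposal is correct and follows essentially the same approach as the paper: per-agent multiplicative-weights regret bounds are summed, combined with the feasibility of $p^\star$ to bound $\tfrac{1}{T}\sum_t(\gamma\langle f^t,p^t\rangle - b^t)\le \alpha/2$, and then the $\EXP$ accuracy guarantee (with a union bound over rounds) converts this into an $\alpha$-violation bound on every cross-agent constraint for the averaged iterate. Your treatment is in fact slightly more complete than the paper's own proof, since you explicitly verify that the private constraints are satisfied (via convexity of $R_i$ and averaging) and spell out the sensitivity calculation for the score, both of which the paper leaves implicit.
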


\begin{proof}
Since each $|f_{ij}^k| \leq 1$,  the MW
algorithm gives a no-regret guarantee for each agent $i$:
\[
\frac{1}{T} \sum_t \langle f_i^t, p_i^t \rangle \leq \min_{p_i \in
  R_i}\frac{1}{T} \sum_t \langle f_i^t, p_i \rangle + \eta +
\frac{\log(m)}{T\eta} = \min_{p_i \in R_i}\frac{1}{T} \sum_t \langle
f_i^t, p_i \rangle + \frac{\alpha}{2n\gamma}
\]
for every agent $i$. Let $R = R_1\times \ldots \times R_n$, then the
joint play of all $n$ agents satisfy
\begin{equation}
\label{eq:mw-regret}
\frac{1}{T} \sum_t \left(\gamma \langle f^t , p^t \rangle - b^t\right)
\leq \min_{p \in R} \frac{1}{T} \sum_t \left(\gamma\langle f^t , p
  \rangle - b^t \right) + \alpha/2.
\end{equation}

Since there is feasible solution to the LP, we know that
\[
\min_{p  \in R} \frac{1}{T} \sum_t \left(\gamma\langle f^t , p \rangle - b^t \right) \leq 0,
\]
so from Equation \eqref{eq:mw-regret},
\begin{equation}\label{eq:bound1}
\frac{1}{T} \sum_t \left(\gamma \langle f^t , p^t \rangle - b^t\right)
\leq \alpha/2.
\end{equation}

By \Cref{thm:em}, with probability at least $1-\beta$, exponential
mechanism gives
\[
\sum_t \left(\gamma \langle f^t , p^t \rangle - b^t\right) \geq
\max_{(f, b)} \sum_t \left( \left(\gamma \langle f , p^t \rangle - b\right) -
\frac{2\gamma\log\left( \frac{dT}{\beta}\right)}{\eps_0} \right) \] and so,
\begin{equation}
\label{eq:em-regret}
\frac{1}{T} \sum_t \left(\gamma \langle f^t , p^t \rangle - b^t\right)
\geq \max_{(f, b)} \frac{1}{T} \left[ \sum_t \left(\gamma \langle f , p^t
  \rangle - b\right) \right] - \frac{2\gamma\log\left(
    \frac{dT}{\beta}\right)}{\eps_0} 
\end{equation}

Combining \Cref{eq:bound1,eq:em-regret} with the definition of $\avgP$, we get
\[
\max_{(f,b)} \left( \gamma\langle f, \avgP \rangle - b \right) = \max_{(f, b)} \frac{1}{T} \sum_t \left(\gamma \langle f , p^t
  \rangle - b\right) \leq
\frac{2\gamma\log\left( \frac{dT}{\beta}\right)}{\eps_0} + \alpha/2
\leq \alpha,
\]
as long as $\frac{2\gamma\log\left( \frac{dT}{\beta}\right)}{\eps_0}
\leq \alpha/2$. Plugging in for parameter $\eps_0$, this condition is
equivalent to
\begin{equation}\label{eq.alpha}
\alpha^2 \geq \frac{32\sqrt{2}n\gamma^2 \log\left( \frac{dT}{\beta}
  \right) \sqrt{\log{m}\ln\left( \frac{1}{\delta}\right)}}{\eps}.
\end{equation}

Plugging in $T$, we get our desired bound
\[ \alpha = \tilde{O}\left( \frac{n\gamma^2}{\eps} \log\left(\frac{d
n}{\beta}\right) \sqrt{\log(m) \ln(1/\delta)}\right)^{1/2}.
\]
\end{proof}

For simplicity, we set the target accuracy to be
\begin{equation}\label{eq:mw-acc}
\alpha =  100 \left(\frac{n\gamma^2}{\eps}\log\left(\frac{d}{\beta} \right)\log(n)\sqrt{\log(m)\ln(1/\delta)}\right)^{1/2}
\end{equation}
when calling $\DMW(\cdot, \eps, \delta, \alpha,\beta)$.\footnote{This
  accuracy level is achievable under mild conditions: as long as $\eps
  = O(1)$, $\gamma < 1$ and $1/\gamma$ is polynomially smaller than
  $n^{50}$ (we already assume $\gamma = \Omega(1/n)$), then the
  $\alpha$ in \Cref{eq:mw-acc} satisfies Inequality~\eqref{eq.alpha}.
}

\section{Details for $\PSL$}
\label{section:psl}

To recap, our goal is to select an approximate pure strategy
equilibrium of a $\gamma$-aggregative game such that the objective
cost function $L$ is approximately minimized. Before we begin, we need
to first define the benchmark $\OPT$ for the cost function, that is,
the set of equilibria to which we are comparing. We first state the
following result showing that approximate equilibrium always exists in
large games. It was observed by~\citet{LargeGame} and also proved
by~\citet{LipG} using a concentration argument, and we will state the
result in terms of aggregative games.

\begin{theorem}[\cite{LipG}]
Let $G$ be a $\gamma$-aggregative games with $n$ players and $m$
actions, then $G$ admits a $\left(\gamma
\sqrt{8n\log(2mn)}\right)$-approximate pure strategy Nash equilibrium.
\end{theorem}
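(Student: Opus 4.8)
The plan is to use the standard ``purification'' argument for Lipschitz (large) games, in the form given by~\citet{LipG}. The starting observation is that a $\gamma$-aggregative game is $\gamma$-Lipschitz in the relevant sense: any single player's unilateral change of action alters the utility of any other player by at most $\gamma$, since such a deviation moves each coordinate of the aggregator by at most $\gamma$ (bounded influence) and each $u_i$ is $1$-Lipschitz in $\|\cdot\|_\infty$ of the aggregator. So it suffices to exhibit an approximate pure equilibrium in an arbitrary $\gamma$-Lipschitz game with $n$ players and $m$ actions.

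First I would invoke Nash's theorem to fix an exact mixed strategy Nash equilibrium $\vp = (p_1,\dots,p_n)$, a product distribution over $\cA^n$. By the definition of (exact) equilibrium, every action in $\Supp(p_i)$ is an exact best response to $\vp_{-i}$, i.e.\ $u_i(x_i,\vp_{-i}) = \max_{a\in\cA} u_i(a,\vp_{-i})$ whenever $p_i$ places positive mass on $x_i$ (otherwise the average $u_i(\vp)$ would be strictly below $\max_a u_i(a,\vp_{-i})$, contradicting the equilibrium condition). Next I would draw a pure profile $\vx \sim \vp$ and argue, via the probabilistic method, that with positive probability it is an approximate pure equilibrium. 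For each fixed player $i$ and action $a \in \cA$, the quantity $u_i(a,\vx_{-i})$ is a function of the $n-1$ independent random actions $\vx_{-i}$ with bounded differences at most $\gamma$ (by the Lipschitz observation above), so McDiarmid's inequality gives
\[
\prob{\,\bigl|u_i(a,\vx_{-i}) - u_i(a,\vp_{-i})\bigr| > t\,} \;\le\; 2\exp\!\left(-\frac{2t^2}{n\gamma^2}\right).
\]
Choosing $t = \gamma\sqrt{2n\log(2mn)}$ and taking a union bound over the $nm$ pairs $(i,a)$, the probability that some deviation fails to concentrate is at most $2nm\exp(-4\log(2mn)) = (2mn)^{-3} < 1$. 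Hence there exists a pure profile $\vx$ in the support of $\vp$ for which $|u_i(a,\vx_{-i}) - u_i(a,\vp_{-i})| \le t$ holds simultaneously for all $i$ and all $a$.

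To finish, fix such a profile $\vx$. For every player $i$ the realized action $x_i$ lies in $\Supp(p_i)$ and is therefore an exact best response to $\vp_{-i}$, so for every $a \in \cA$,
\[
u_i(x_i,\vx_{-i}) \;\ge\; u_i(x_i,\vp_{-i}) - t \;\ge\; u_i(a,\vp_{-i}) - t \;\ge\; u_i(a,\vx_{-i}) - 2t .
\]
Thus $x_i$ is a $2t$-best response in $\vx$ for every $i$, so $\vx$ is a $2t = \gamma\sqrt{8n\log(2mn)}$-approximate pure strategy Nash equilibrium, as claimed. There is no substantive obstacle here; the only points requiring care are (i) checking that McDiarmid applies --- the utility depends on the reports only through the aggregator, each coordinate of which has bounded differences $\gamma$, composed with a $1$-Lipschitz map --- and (ii) calibrating the deviation radius $t$ so that the union bound still leaves positive probability while yielding exactly the stated constant $\sqrt{8n\log(2mn)}$.
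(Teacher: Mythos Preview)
The paper does not give its own proof of this theorem; it is quoted as a known result, attributed to~\citet{LipG} (and observed earlier by~\citet{LargeGame}), with the remark that it is ``proved by~\citet{LipG} using a concentration argument.'' Your proposal is precisely that concentration argument---start from an exact mixed Nash equilibrium, sample a pure profile, apply McDiarmid coordinatewise to $u_i(a,\vx_{-i})$ using the $\gamma$-bounded-difference property, union bound over the $nm$ player--action pairs, and read off the $2t$ approximation factor---and it is carried out correctly, including the calibration of $t$ to hit the stated constant $\sqrt{8n\log(2mn)}$. So your approach is exactly the one the paper points to, and there is nothing to compare beyond that.
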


Given any $\zeta \geq \gamma\sqrt{8n\log(2mn)}$, let $\cE(\zeta)$ be
the set of $\zeta$-approximate pure strategy Nash equilibria in the game.  We can then define the following benchmark
 \[\OPT(\zeta) = \min\{L(\vx) \mid \vx\in \cE(\zeta)\}.\]
Now we want to find an $(\zeta + \alpha)$-approximate pure strategy
equilibrium that achieves at most $\OPT(\zeta) + O(\alpha)$ using $\PSL$. The
algorithm has two stages. In the first stage, we try to identify an
aggregator $\hat s \in X$ such that there exists a mixed strategy
profile $\vp$ that satisfies three requirements:
\begin{enumerate}
\item all players are randomizing between approximate aggregative best
  responses to $\hat s$;
\item $S(\vp)$ is close to $\hat s$; 
\item and $L(\vp)$ is close to $\OPT(\zeta)$. 
\end{enumerate}
We find such an aggregator by solving linear programs based on the
discretized set of candidate aggregators and objective values.  In
particular, each linear program $LP(\hat s, \hat y)$ is defined by a
aggregator $\hat s \in X=\{-W, -W+\alpha, \ldots, W-\alpha\}^d$ and a
objective value for $L$: $\hat y\in \{0, \alpha, 2\alpha, \ldots,
n\gamma\}$. The sequence of queries we will feed to $\SV$ is the
objective values for these LP's:
\[
\cQ = \left\{ Q(\hat s, \hat y) \mid \hat s\in X, \hat y \in
  \{0, \alpha, \ldots, n\gamma\} \right\},
\]
\begin{align}\label{eq:lppriv}
 &\qquad \qquad Q(\hat{s}, \hat y) = \min a \\
 \mbox{ such that }\qquad &\forall k,\qquad
 \gamma\sum_{i=1}^n\sum_{j=1}^m
 f_{ij}^k p_{ij} \leq \hat{s}_k + a\label{eq:priv1}\\
 &\forall k,\qquad - \gamma\sum_{i=1}^n\sum_{j=1}^m
 f_{ij}^k p_{ij} \leq -\hat{s}_k + a\label{eq:priv2}\\
 &\;\; \qquad\qquad L(\vp) \leq \hat{y} + a \label{eq:priv6}\\
 &\forall i,\qquad \forall j\in \xi\text{-}\br_i(\hat{s}),    \qquad 0\leq p_{ij} \leq 1\label{eq:priv3}\\
 &\forall i,\qquad \forall j\notin \xi\text{-}\br_i(\hat{s}), \qquad  p_{ij} = 0\label{eq:priv4}\\
 &\forall i,\qquad \sum_{j=1}^m p_{ij} = 1\label{eq:priv5},
\end{align}
where $\xi = \gamma + \zeta + 2 \alpha$. 

We use $\SV$ to output the first $(\hat s, \hat y)$ such that the
approximate answer of $Q(\hat s, \hat y)$ is below the threshold
$\alpha + E_1$, where $E_1$ is the additive error bound for our
instantiation of $\SV$ (given by \Cref{thm.sparse}). This will
guarantee the actual value of $Q(\hat s, \hat y) \leq \alpha +
2E_1$.
%
%

During the second stage, the algorithm uses \DMW to compute a mixed
strategy profile $\vp$ by solving a modified version of the above LP,
denoted $LP(\hat{s},\hat{y})$, without the objective \eqref{eq:lppriv}
and with $a$ replaced by $\alpha + 2E_1$ in the constraints.  That is,
constraints \eqref{eq:priv1}-\eqref{eq:priv6} are further relaxed as
follows:
\begin{align}\label{eq:2ndstage}
  &\forall k,\qquad \gamma\sum_{i=1}^n\sum_{j=1}^m
  f_{ij}^k p_{ij} \leq \hat{s}_k + \alpha + 2E_1\\
  &\forall k,\qquad - \gamma\sum_{i=1}^n\sum_{j=1}^m
  f_{ij}^k p_{ij} \leq -\hat{s}_k + \alpha + 2E_1\\
  &\;\;\;\;\;\;\qquad L(\vp) \leq \hat{y} + \alpha + 2E_1.
\end{align}
Finally, we will output a pure strategy profile sampled from the
distribution defined by $\vp$.

\multid*

\begin{proof}
  Similar to the privacy proof in \Cref{thm.mwprivacy}, $\PSL$ also
  works in the Billboard model. Each player's action in the output is
  a function of only the ``broadcast'' information and her private type.
  Note that ``broadcast'' information comes from both $\SV$ and
  \DMW, which together satisfy $(2\eps,
  \delta)$-differential privacy. By the Billboard
  Lemma~\ref{lem:billboard}, our algorithm satisfies $(2\eps,
  \delta)$-joint differential privacy. 

  From \Cref{thm.sparse}, we know that the instantiation $\SV$ is
  $\eps$-differentially private, and with probability at least $1 -
  \beta/3$, has additive error bounded by
\[
err \leq \frac{4 \gamma \left((d+1)\log\left(\frac{2W}{\alpha} \right) +
    \log\left(\frac{6}{\beta} \right)\right)}{\eps} <  \frac{100 \gamma \left((d+1)\log(2W)\log(n) +
    \log\left(\frac{6}{\beta} \right)\right)}{\eps} = E_1 < \alpha,
\]
as long as $n^{25} > 1/\alpha$. For the rest of the proof, we
condition on this level of accuracy, which is the case except with
probability at most $\beta/3$. We know there exists an optimal
$\zeta$-approximate pure strategy equilibrium $\vx'$ such that
$L(\vx') = \OPT(\zeta)$. Note that each player's action $x'_i$ is
$\zeta$-best response, and is also $\xi$-aggregative best response
(recall $\xi = \gamma + \zeta + 2 \alpha$) to some discretized point
$\hat{s}\in X$, by Lemma~\ref{lem:nash-prop} and
Lemma~\ref{lem:abr-move}.

Since we set the threshold of $\SV$ to be $\alpha + E_1$, we are
guaranteed that $\SV$ will output a pair $(\hat{s}, \hat{y})$ such that
there exists some mixed strategy $\vp$ with
\[
\| S(\vp) - \hat s \|_\infty \leq \alpha + 2E_1 \qquad \mbox{ and }
\qquad L(\vp) \leq \hat{y} + \alpha + 2E_1 \leq \OPT(\zeta) +   2\alpha + 2E_1,
\]
and where every player only places weight on actions that are a
$\xi$-aggregative best response to $\hat{s}$.

Then $\vp$ is a feasible solution to the
second stage linear program, $LP(\hat s, \hat y)$. By \Cref{thm.mwacc}, \DMW will
output such a solution $\vp$ where
\[
\|S(\vp) - \hat{s}\|_\infty \leq \alpha + 2E_1 + E_2  \mbox{, and }
L(\vp) \leq \OPT(\zeta) + 2\alpha + 2E_1 + E_2, 
\]
except with probability $\beta/3$, where
\[
E_2= 100 \left(\frac{n\gamma^2}{\eps}\log\left(\frac{3d}{\beta}
  \right)\log(n)\sqrt{\log(m)\ln(1/\delta)}\right)^{1/2} < \alpha.
\]
Let $\vx$ be the action profile sampled from the mixed strategy $\vp$.
By McDiarmid's inequality, for each coordinate $k$:
\[
 \Pr\left[|S_k(\vx) - S_k(\vp)| \geq t\right] \leq
 2\exp\left(\frac{-2t^2}{n\gamma^2} \right) \mbox{, and }\Pr\left[|L(\vx) -
 L(\vp)| \geq t\right] \leq 2\exp\left(\frac{-2t^2}{n\gamma^2} \right).
\]
The union bound gives
\[
\Pr\left[\|S(\vx) - S(\vp)\|_\infty \geq t \mbox{ or } |L(\vx) - L(\vp)|
\geq t \right] \leq (d+1)2\exp\left(\frac{-2t^2}{n\gamma^2} \right).
\]
Then with probability at least $(1 - \beta/3)$, we can guarantee
\[
|L(\vx) - L(\vp)|, \|S(\vx) - S(\vp)\|_\infty \leq \left(\frac{n\gamma^2}{2}
  \ln\left(\frac{6d + 6}{\beta} \right)\right)^{1/2} \stackrel{\Delta}{=} E_3 < \alpha.
\]
Overall, we can guarantee the following with probability at least $1-\beta$
\[\|S(\vx) - \hat{s}\|_\infty \leq \alpha + 2E_1 + E_2 + E_3 < 4\alpha\]
\[
L(\vx) \leq L(\vp) + E_3 \leq \OPT(\zeta) + 2\alpha + 2E_1 + E_2 + E_3 < \OPT(\zeta)
+ 5\alpha.
\]
In $\vx$, all players are playing a $\xi$-aggregative best response to
$\hat{s}$, so by Lemma~\ref{lem:abr-move} they are also playing a
$(\xi + 8 \alpha)$-aggregative best response to $S(\vx)$. By
Lemma~\ref{lem:abr-prop}, all players in $\vx$ are playing a $(\xi +
\gamma + 8 \alpha)$-best response. Since $\gamma < \alpha$ and $\xi +
\gamma + 8 \alpha <\zeta + 12 \alpha$, then $\vx$ is an $\left(\zeta + O(\alpha)\right)$-pure
strategy Nash equilibrium.
\end{proof}

\section{Non-Private Equilibrium Computation via LP}\label{s.fixedpoint}

Here we show that similar techniques to those presented in
\Cref{section:psl} can be used to \emph{non-privately} compute and
select approximate equilibrium in $\gamma$-aggregative games. In this
setting, a better approximation factor is possible because we no
longer need to add noise to preserve privacy.

As before, the algorithm $\NPSL$ approximates the aggregator value
domain with a discretized grid of all possible aggregators $X=\{-W, -W
+ \alpha, \ldots , W-\alpha\}^d$, and consider a $\gamma$-Lipschitz
linear loss function $L\colon \A^n \rightarrow \RR$:
\[
L(\vx) = \gamma \sum_i \ell_i(x_i) \quad \mbox{ and } \quad L(\vp) =
\gamma\Expectation_{\vx \sim \vp} L(\vx) =\gamma \sum_i \langle p_{ij}
, \ell_{ij}\rangle.
\]
where $|\ell_{i}(a_j)|\leq 1$ for all actions $a_j\in \A$, and
$\ell_{ij} = \ell_i(a_j)$.



Let $\zeta \geq \gamma\sqrt{8n\log(2mn)}$, and define \[\OPT(\zeta) =
\min\{L(\vx) \mid \vx\in \cE(\zeta)\},\] where $\cE(\zeta)$ is the set
of $\zeta$-approximate pure strategy equilibria in the game.

 We first want to find a mixed strategy profile that could achieve the
 an objective value no more than $\OPT(\zeta)$.  This can be done by
 solving LP \eqref{eq:non-privatelp} for every $\hat s \in X$, denoted
 $LP(\hat{s})$, where $\xi = \zeta + \gamma + 2 \alpha$.
\begin{equation}
\begin{aligned}
\label{eq:non-privatelp}
  &\qquad \qquad \min_p L(p)\\
  &\forall k\in[d],\qquad \hat{s}_k - \alpha \leq \gamma\sum_{i=1}^n\sum_{j=1}^m
  f_{ij}^k p_{ij} \leq \hat{s}_k + \alpha\\
  &\forall i\in[n],\qquad \forall j\in \xi\text{-}\br_i(\hat{s}),    \qquad 0\leq p_{ij} \leq 1\\
  &\forall i\in[n],\qquad \forall j\notin \xi\text{-}\br_i(\hat{s}), \qquad p_{ij} = 0\\
  &\forall i\in[n],\qquad \sum_{j=1}^m p_{ij} = 1
\end{aligned}
\end{equation}
After solving $LP(\hat{s})$ for all $\hat{s} \in X$, algorithm \NPSL
selects the mixed strategy profile $\vp$ that gives the smallest
objective value among all the solutions to the LPs.  The algorithm
then rounds $\vp$ to get a pure strategy profile $\vx$.


\begin{algorithm}[H]
  \KwData{A type vector $t$, discretization parameter $\alpha$, and confidence parameter $\beta$}
  \KwResult{An $\tilde{O}\left(\alpha + \sqrt{n}\gamma \right)$-approximate pure strategy Nash equilibrium}
 
\Indp{\textbf{let} $s$ be the aggregator in $X$ that achieves the smallest objective value in $LP(\hat{s})$\;}
{\textbf{let} $\vec{p}$ be the solution to $LP(s)$\;}
{\textbf{let} $\vx$ be an action profile sampled from
      the product distribution $\vp$\;} 
{\textbf{Output}: $\vec{x}$}

  \caption{Non-Private Equilibrium Selection via LP: {$\NPSL(t, \alpha, \beta)$}  \label{alg:npLP}}
\end{algorithm}

\begin{theorem}
With probability at least $1- \beta$, \NPSL$(t, \alpha, \beta)$
computes a $\left(4\alpha + 2\gamma + 2E \right)$-approximate pure
strategy Nash equilibrium $\vx$ such that $L(\vx) \leq \OPT +E$, where
\[
E = O\left(\sqrt{n}\gamma \polylog(d, 1/\beta) \right).
\]
\end{theorem}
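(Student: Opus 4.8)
The plan is to mirror the proof of Theorem~\ref{thm:multi-util} for \PSL, but stripped of all noise terms, so that the approximation error is driven only by the grid resolution $\alpha$ and the sampling error $E$. There are two substantive claims: that the LP the algorithm actually solves has a solution whose objective is at most $\OPT(\zeta)$, and that rounding the resulting mixed profile to a pure one preserves the (approximate) equilibrium property up to the concentration error $E$.

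First I would establish feasibility. By the existence theorem of~\citet{LipG}, $\cE(\zeta)\neq\emptyset$ whenever $\zeta\geq\gamma\sqrt{8n\log(2mn)}$; let $\vx'$ be an optimal $\zeta$-approximate pure Nash equilibrium, so $L(\vx')=\OPT(\zeta)$. Each action $x_i'$ is a $\zeta$-best response, hence by Lemma~\ref{lem:nash-prop} a $(\zeta+\gamma)$-aggregative best response to $S(\vx')$. Taking the grid point $\hat s^\ast\in X$ with $\|S(\vx')-\hat s^\ast\|_\infty\leq\alpha$, Lemma~\ref{lem:abr-move} gives $x_i'\in\xi\text{-}\br_i(\hat s^\ast)$ since $\xi=\zeta+\gamma+2\alpha$. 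Viewing $\vx'$ as a degenerate product distribution, it satisfies every constraint of $LP(\hat s^\ast)$ — the aggregator box constraint because $S(\vx')$ is within $\alpha$ of $\hat s^\ast$, and the support constraints by the previous sentence — with objective value $L(\vx')=\OPT(\zeta)$. Because \NPSL\ returns the solution $\vp$ of the feasible $LP(s)$ minimizing the objective over all $\hat s\in X$, we conclude $L(\vp)\leq\OPT(\zeta)$, $\|S(\vp)-s\|_\infty\leq\alpha$, and $\Supp(p_i)\subseteq\xi\text{-}\br_i(s)$ for every $i$.

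Next I would handle the rounding. Let $\vx$ be sampled from $\vp$. Since changing one player's action moves each $S_k$ by at most $\gamma$ (bounded influence) and moves $L$ by at most $\gamma$ ($\gamma$-Lipschitzness of the linear loss), McDiarmid's inequality gives $\Pr[|S_k(\vx)-S_k(\vp)|\geq t]\leq 2\exp(-2t^2/(n\gamma^2))$ for each $k$, and the same tail for $|L(\vx)-L(\vp)|$; a union bound over the $d+1$ quantities shows that except with probability $\beta$,
\[
\|S(\vx)-S(\vp)\|_\infty\leq E,\qquad |L(\vx)-L(\vp)|\leq E,\qquad E:=\left(\tfrac{n\gamma^2}{2}\ln\tfrac{2(d+1)}{\beta}\right)^{1/2}=O\!\left(\sqrt{n}\,\gamma\,\polylog(d,1/\beta)\right).
\]
Conditioning on this event, $L(\vx)\leq L(\vp)+E\leq\OPT(\zeta)+E$ gives the cost bound. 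For the equilibrium guarantee, $\|S(\vx)-s\|_\infty\leq\|S(\vx)-S(\vp)\|_\infty+\|S(\vp)-s\|_\infty\leq E+\alpha$; every player plays $x_i\in\xi\text{-}\br_i(s)$, so Lemma~\ref{lem:abr-move} makes every player a $(\xi+2(E+\alpha))$-aggregative best response to $S(\vx)$, and Lemma~\ref{lem:abr-prop} then makes $\vx$ a $(\xi+2E+2\alpha+\gamma)=(\zeta+4\alpha+2\gamma+2E)$-approximate pure strategy Nash equilibrium (the stated bound $4\alpha+2\gamma+2E$ suppresses this $\zeta$ term, which is $O(\sqrt n\,\gamma\,\polylog)$ — the same order as $E$ — when $\zeta$ is set to its minimum, consistent with the algorithm's $\tilde O(\alpha+\sqrt n\gamma)$ guarantee).

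The main obstacle is the conceptual point in the rounding step rather than any hard calculation: the LP only produces an (approximate) fixed point $\vp$ of the set-valued map $\cV_\xi$, which is in general a \emph{mixed} profile, and because utilities are non-linear in the aggregator, a best response to the expected aggregator need not be a best response to the induced distribution over aggregators — so $\vp$ itself is not guaranteed to be even an approximate mixed Nash equilibrium. What rescues the argument is exactly the smallness of $\gamma$: bounded influence makes both the aggregator and the linear objective bounded-differences functions of the realized profile, so McDiarmid concentrates $S(\vx)$ around $S(\vp)$, and the rounded pure profile inherits the approximate-equilibrium property. Everything else is bookkeeping of the additive $\alpha$, $\gamma$, and $E$ terms through Lemmas~\ref{lem:nash-prop}, \ref{lem:abr-prop}, and~\ref{lem:abr-move}.
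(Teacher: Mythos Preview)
Your proposal is correct and follows essentially the same approach as the paper: establish that the optimal $\zeta$-equilibrium $\vx'$ is feasible for some $LP(\hat s^\ast)$ (so $L(\vp)\leq\OPT(\zeta)$), then use McDiarmid plus a union bound to control $\|S(\vx)-S(\vp)\|_\infty$ and $|L(\vx)-L(\vp)|$, and finally push the error through Lemmas~\ref{lem:abr-move} and~\ref{lem:abr-prop}. The only cosmetic differences are that the paper applies Lemma~\ref{lem:abr-move} twice (first $s\to S(\vp)$, then $S(\vp)\to S(\vx)$) whereas you collapse this via the triangle inequality into a single application, and that the paper defers the feasibility argument to the end; your observation that the theorem statement silently drops the additive $\zeta$ present in the proof is also accurate.
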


\begin{proof}
  Since the (unrounded) mixed strategy profile $\vp$ is a feasible
  solution to $LP(s)$, we know
\[
\| S(\vp) - s \|_\infty \leq \alpha.
\]

We also know that in $\vp$, every player is playing a
$\xi$-aggregative best response to $s$, and by
Lemma~\ref{lem:abr-move}, it follows that she is playing a $(\xi +
2\alpha)$-aggregative best response to $S(\vp)$.  

Let $\vx$ be the realized profile by sampling from $\vp$. By
McDiarmid's inequality, for each coordinate $k$:
\[
 \Pr\left[|S_k(\vx) - S_k(\vp)| \geq t\right] \leq
 2\exp\left(\frac{-2t^2}{n\gamma^2} \right) \mbox{, and }\Pr\left[|L(\vx) -
 L(\vp)| \geq t\right] \leq 2\exp\left(\frac{-2t^2}{n\gamma^2} \right).
\]
The union bound gives
\[
\Pr\left[\|S(\vx) - S(\vp)\|_\infty \geq t \mbox{ or } |L(\vx) - L(\vp)|
\geq t \right] \leq (d+1)2\exp\left(\frac{-2t^2}{n\gamma^2} \right).
\]
Then with probability at least $1 - \beta$, we can guarantee
\[
|L(\vx) - L(\vp)|, \|S(\vx) - S(\vp)\|_\infty \leq \left(\frac{n\gamma^2}{2}
  \ln\left(\frac{2d + 2}{\beta} \right)\right)^{1/2} \stackrel{\Delta}{=} E.
\]

Thus, with probability at least $1- \beta$, we know that in $\vx$,
each player is playing a $(\xi + 2\alpha + 2E)$-aggregative best
response to $S(\vx)$, and is therefore playing a $(\xi + 2\alpha + 2E
+\gamma)$-best response. Hence, we show $\vx$ forms an $(\zeta +
4\alpha+2\gamma+2E)$-approximate pure strategy equilibrium.

Note that the optimal $\zeta$-approximate pure strategy equilibrium
$\vx'$ with objective value $\OPT(\zeta)$ is also a feasible solution
to $LP(\hat{s})$ for some $\hat{s}\in X$, so we must have $L(\vp) \leq
\OPT(\zeta)$, which implies that $L(\vx) \leq \OPT(\zeta) + E$.
\end{proof}

\section{Details for Single Dimensional (Quasi)-Aggregative Games} \label{s.1dproofs}

\begin{proof}[Proof of \Cref{thm:psn-sum}]
  \Cref{alg:psn} only accesses the data through three instantiations
  of $\SV$, each of which satisfy $\epsilon/3$-differential
  privacy. By the Composition Theorem in \citet{DMNS06}, these three
  computations compose to satisfy $\eps$-differential privacy. Each
  player's action in the output strategy profile is a function only of $\SV$'s output and the player's private data (type). Thus, the
  algorithm works in the Billboard model, and by the Billboard
  Lemma~\ref{lem:billboard}, it satisfies $\eps$-joint differential
  privacy.





  We now prove that the algorithm computes an approximate Nash
  equilibrium. Let $err$ be the error in $\PSN$ due to its calls to
  $\SV$. We know by Theorem \ref{thm.sparse} that with probability at
  least $1 - \beta$, all three instantiations of $\SV$ have error at
  most
\[
err = \frac{100\gamma(\log(2Wn) + \log(6/\beta))}{\eps} \leq \alpha,
\]
by our assumption on $\alpha$.
For the rest of the proof we assume this level of accuracy, which is
the case except with probability $\beta$. 

First, consider the case that our algorithm outputs a strategy
profile in stage 1. We claim that this gives an $(10\alpha + \gamma
)$-approximate Nash equilibrium. Let $\br(k\alpha)$ be the output.
Then by the accuracy level of $\SV$,
\[
|V(k\alpha) - k\alpha| \leq 4\alpha + err \leq 5\alpha.
\]
Since each player's action is an aggregative best response to
aggregator value $k\alpha$, it is also a $10 \alpha$-aggregative best
response to $V(k\alpha)$ by Lemma~\ref{lem:abr-move}.
Thus, each player is playing a $(10 \alpha + \gamma )$-best response
as desired by Lemma~\ref{lem:abr-prop}.

Now suppose that the algorithm does not output anything in stage 1. We
argue that it will output a $(6\alpha + 2\gamma )$-approximate Nash
equilibrium in stage 2.

We first show that the algorithm's second $\SV$ outputs an index $l$
such that
\begin{equation}\label{eq:sandwich}
V((l - 1)\alpha) < l\alpha < V(l\alpha). 
\end{equation}
Since $\SV$ failed to output a strategy profile in stage 1, we know that 
\[
|V(k\alpha) - k\alpha| \geq 4\alpha - err \geq 3\alpha \mbox{ for all
} -W/\alpha\leq k \leq W/\alpha - 1.
\]
\rc{Verify that all this is still true.  I suspect it's
  not...}\sw{what's not true?}
Since $V(s)$ is in $[-W,W]$ for all $s$, we know that $V(-W) \geq
3\alpha-W$ and $V(W-\alpha) \leq W-4\alpha$.  Then there must exist an
index $l$ such that 
\[
V((l-1)\alpha) - l\alpha \geq 2\alpha \mbox{, and } l\alpha -
V(l\alpha) \geq 3\alpha,
\]
which implies that $Q'_l = -5\alpha \leq -4\alpha - err$. Such an $l$ satisfies \Cref{eq:sandwich} and will also be identified by $\SV$ as a below threshold query. Note that any
index $i$ which does not satisfy \Cref{eq:sandwich} must have $Q'_i\geq
-3\alpha \geq -4a + err$, so it will not be identified by the $\SV$ as a below
threshold query (since we set the threshold to be $-4\alpha$).

Finally, we claim that we can find an approximate equilibrium between
$\br((l-1)\alpha)$ and $\br(l\alpha)$. Let $x^j$ be the profile
defined in \Cref{alg:psn}. Then there exists $j'$ such
that 
\[
|S(x^{j'}) - l\alpha| \leq \gamma/2
\]
Suppose not. Since $S(x^0) > l\alpha > S(x^n)$, there exists $r$ such
that $S(x^r) > l\alpha >S(x^{r+1})$ such that
\[
S(x^r) - l\alpha > \gamma/2 \mbox{, and }  l\alpha - S(x^{r+1}) > \gamma/2.
\]
However, this violates our bounded influence assumption: $|S(x^{r+1}) -
S(x^r)| \leq \gamma$.

Since we set the threshold of the third $\SV$ to be $\alpha + \gamma/2$, we can find an index $j'$ such that
\[
|S(x^{j'}) - l\alpha| \leq 2\alpha + \gamma/2.
\]
Note that players in $x^{j'}$ are either playing an aggregative best
response to aggregative value $l\alpha$ or $(l-1)\alpha$. It suffices to bound the
payoff loss for the latter ones: \[
|S(x^{j'}) - l\alpha| \leq |S(x^{j'}) - (l-1)\alpha| \leq 3\alpha + \gamma/2.
\]
Thus an aggregative best response to $(l-1)\alpha$ remains a $(6\alpha
+ \gamma)$-aggregative best response to $S(x^{j'})$ by
Lemma~\ref{lem:abr-move}. By Lemma~\ref{lem:abr-prop}, each player can
only gain at most $\gamma $ by deviating from an apparent best
response, thus each player is playing $ (6\alpha + 2\gamma)$-best
response, which gives at least a $(10\alpha + 2\gamma )$-approximate
equilibrium.
\end{proof}

\subsection{Private Equilibrium Computation with a Lipschitz Objective}\label{s.select}


The $\PSN$ algorithm presented in Section~\ref{s.privsummnash} allowed
us to compute an approximate Nash equilibrium in any 1-dimensional
$\gamma$-quasi-aggregative game. However, if the game has multiple
approximate equilibria, it does not guarantee the quality of the
equilibrium we obtain. In this section we propose an algorithm to
select the approximate Nash equilibrium of the highest quality with
respect to a given objective. 
Our algorithm requires the following assumptions on the quasi-aggregative
game and objective:

\paragraph{Assumption 1}{Each player $i$ has a complete ordering $\succ_i$ over
  the action set $\A$, where $a \succ_i a'$ if and only if $S(a,
  x_{-i}) \geq S(a', x_{-i})$ for all $x_{-i} \in \A^{n-1}$. We say
  player $i$ is playing \emph{optimistically} if she is maximizing the
  aggregator value with her action, and playing \emph{pessimistically}
  if minimizing. }

\paragraph{Assumption 2}{Let $q\colon [-W, W] \rightarrow \RR$ be a
  score function that measures the quality of a aggregator value,
  where $q(s)$ is the quality of aggregator value $s$. We assume that
  $q$ is $\lambda$-Lipschitz in $s$.}
\newline

\citet{KM02} shows that any $\gamma$-quasi-aggregative
game admits a $4\gamma$-approximate pure Nash equilibrium. Given any
$\zeta \geq 4\gamma$, let $\cE(\zeta)$ be the set of
$\zeta$-approximate pure Nash equilibrium. Now define 
\[
\OPT(\zeta) = \{q(S(\vx)) \mid \vx \in \cE(\zeta)\}
\]
as our benchmark of the equilibrium quality. We will sometimes write
$\OPT$ for $\OPT(\zeta)$ when the context is clear.  We show that our
algorithm can compute an $(\zeta + O(\gamma))$-approximate equilibrium
with quality at least $\OPT + O(\lambda \gamma)$.

Similar to Algorithm \ref{alg:psn}, this algorithm also iterates
through all values $s$ in the discretized set $Z=\{-W, -W+\alpha,
\ldots, W-\alpha\}$, with players submitting their approximate
aggregative best response sets to $s$, $\xi\abr(s)_i$. Let $X(s) = \{
\vec{x} \mid \text{ each player is playing some } x_i \in \xi\abr(s)_i
\}$, where $\xi = 2\alpha + \gamma+ \zeta$. We are searching for an
approximate pure strategy Nash equilibrium $x' \in X(s)$ such that
$|S(x')- s| \leq \alpha$.  Note that the cardinality of $X(s)$ can
potentially be $\Omega(m^n)$, but by Assumption 1, we can
compute the upper and lower bound efficiently
\[S_{max}(s) = \max_{x\in X(s)} S(x) \mbox{, and } S_{min}(s) = \min_{x\in X(s)} S(x)\]
by asking players to play optimistically and pessimistically, respectively. 

If we find such an $s$ that has an approximate Nash equilibrium
strategy profile, there are three cases: either $S_{max}(s)$ or
$S_{min}(s)$ is close to $s$, or $s\in [S_{min}(s), S_{max}(s)]$. If
we are in the first two cases, the algorithm simply outputs the
corresponding to the optimistic or pessimistic strategy profile,
respectively. In the third case, the algorithm perform the same smooth
walk as in \Cref{alg:psn} from the optimistic profile to the
pessimistic profile, and outputs an intermediate profile $x'$ such
that $|S(x') - s|$ is small.

Since we are interested in computing the \emph{best} equilibrium, we
iterate through aggregators in order of their quality score. Let
$s_1\succ s_2 \succ \ldots \succ s_{2W/\alpha}$ be an ordering over
the set of discretized aggregator values $Z$, such that $q(s_{i}) \geq
q(s_{i+1})$. The algorithm will sequentially consider $s_i$ according
to this ordering $\succ$, to compute an approximate equilibrium with
aggregator value that maximizes $q$.
\begin{algorithm}
\DontPrintSemicolon
\KwData{An $n$-player type vector $t$, comparator equilibrium class parameter $\zeta$, privacy parameter $\eps$, accuracy parameter $\alpha$, and confidence parameter $\beta$}
\KwResult{$(10\alpha + 3 \gamma + \zeta)$-approximate Nash equilibrium with quality score at least $\OPT(\zeta) - 5\alpha\lambda$}
{\textbf{Initialize:} $G$ such that $q(G)=-W$\;} 
  \For{each aggregator value $s\in \{s_1, s_2,
    \ldots, s_{2W/\alpha}\}$} 
{\textbf{let} $X(s) = \{\vec{x} \mid \mbox{each player has } x_i \in (2\alpha  + \gamma + \zeta)\text{-}\br(s)_i \}$\;
\textbf{let} $S_{max}(s) = \max_{x\in X(s)} S(\vec{x})$
    and $S_{min}(s) = \min_{x\in X(s)} S(\vec{x})$\;}

\For{$1\leq k \leq 2W/\alpha$}{\textbf{let} queries
  \[
  Q_k = |S_{max}(s_k) - s_k| \qquad Q_k' = |S_{min}(s_k) - s_k|
  \]
  \[
  Q_k'' = \max\left(\min\left( S_{min}(s_k) - s_k, 0 \right),
    - 2\alpha \right) + \max\left(\min\left( s_k - S_{max}(s_k), 0 \right),
    - 2\alpha \right)
  \]}
{\textbf{let} $\{a_k\} = \SV(t, \{Q_k\},
    3\alpha, 1, \eps/4)$\;}
\If{some $a_i\neq \perp$}{\textbf{let} $y = x_{max}(s_i)$ and the associated aggregator $S' = s_i$}
{\textbf{let} $\{a'_k\} = \SV(t, \{Q'_k\},
    3\alpha, 1, \eps/4)$\;}
\If{some $a'_i\neq \perp$}{ \If{$s_i \succ S'$}{\textbf{ let } $y = x_{min}(s_i)$ and the associated aggregator $S' = s_i$}}

{\textbf{let} $\{a''_k\} = \SV(\{t, \{Q''_k\},
    3\alpha, 1, \eps/4)$\;}
\If{some $a''_l \neq \perp$}{
\For{each $0\leq j\leq n$}{ \textbf{let}
      strategy profile $x^j$ be defined as
    \[
    x^j_i = \begin{cases}
      x_{max}(s_l)_i \mbox{ if }i\leq j\\
      x_{min}(s_l)_i \mbox{ otherwise}
      \end{cases}
    \]}
      {\textbf{let } query $Q_j''' = S(x^j)$\;} 
      {\textbf{let} $\{a'''_j\}=\SV(t, \{Q_j'''\},
      \alpha + \gamma/2, 1, \eps/4)$\;}
    \If{some $a'''_{j'} \neq \perp$}{ \If{$s_l \succ S'$}{\textbf{let} $y = x^{j'}$  } }

\eIf{$y$ is defined}{\textbf{Output} $y$}{Abort}
}

  \caption{Private Equilibrium Selection with a Lipschitz
  Objective \label{alg:pselection}}
\end{algorithm}

\begin{theorem}
For any $\zeta\geq 4\gamma$, \Cref{alg:pselection} satisfies
$\eps$-joint differential privacy, and with probability at least $1 -
\beta$, outputs a $(10\alpha + 3 \gamma + \zeta)$-approximate Nash
equilibrium with quality score at least $\OPT(\zeta) -
5\alpha\lambda $, for any 
\[
\alpha \geq O\left(\frac{\gamma}{\eps}\polylog(n, m, 1/\beta) \right).
\]
\end{theorem}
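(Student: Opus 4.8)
The plan is to follow the structure of the proof of \Cref{thm:psn-sum} for $\PSN$, adding two new ingredients: the extremal ``optimistic'' and ``pessimistic'' profiles $x_{\max}(s),x_{\min}(s)$ (well defined by Assumption~1, since each player's monotone order $\succ_i$ lets everyone simultaneously maximize or minimize the aggregator over $X(s)$), and a quality argument using the $\lambda$-Lipschitzness of $q$ from Assumption~2.

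For privacy, observe that the algorithm accesses the players' types only through four calls to $\SV$, each run with privacy parameter $\eps/4$, so by composition (\Cref{thm:composition}; basic composition suffices) their joint output is $\eps$-differentially private. Each player's recommended action in the output profile $y$ is a function only of the $\SV$ outputs together with that player's own type (which fixes her $\succ_i$, her sets $\xi\abr_i(s)$, and hence her coordinate of $x_{\max}(s)$, $x_{\min}(s)$ and of every walk profile). Thus the algorithm sits in the billboard model and \Cref{lem:billboard} yields $\eps$-joint differential privacy.

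For accuracy, first note that by \Cref{thm.sparse} (with $c=1$ and $N=O(W/\alpha+n)$ queries) each $\SV$ call incurs additive error at most $err=O(\frac{\gamma}{\eps}\polylog(n,m,1/\beta))$, which is at most $\alpha$ under the stated hypothesis; union-bounding, condition on all four $\SV$s being $err$-accurate, which fails with probability at most $\beta$. Since $\zeta\ge 4\gamma$, the set $\cE(\zeta)$ is nonempty \citep{KM02}; pick $\vx^\ast\in\cE(\zeta)$ with $q(S(\vx^\ast))=\OPT(\zeta)$ and a grid point $\hat s\in Z$ with $|S(\vx^\ast)-\hat s|\le\alpha$. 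By \Cref{lem:nash-prop} and \Cref{lem:abr-move} each $x^\ast_i$ is a $\xi$-aggregative best response to $\hat s$ (with $\xi=2\alpha+\gamma+\zeta$), so $\vx^\ast\in X(\hat s)$ and hence $S_{\min}(\hat s)\le S(\vx^\ast)\le S_{\max}(\hat s)$, with $S_{\min}(\hat s)\le\hat s+\alpha$ and $S_{\max}(\hat s)\ge\hat s-\alpha$. I now claim the algorithm does not abort and its output is associated with an aggregator $s^\ast$ with $q(s^\ast)\ge q(\hat s)$. Indeed, case-split on $\hat s$: either $|S_{\max}(\hat s)-\hat s|$ or $|S_{\min}(\hat s)-\hat s|$ is small enough to be caught (up to $err$) by the first or second $\SV$; or else the bounds above force $S_{\min}(\hat s)<\hat s<S_{\max}(\hat s)$ with margin, so $Q''_{k_0}$ clears its threshold and the third $\SV$ returns some $l$, and at such an $l$ the walk $x^0=x_{\min}(s_l),\dots,x^n=x_{\max}(s_l)$ has $S(x^0)$ below and $S(x^n)$ above $s_l$ up to $O(\alpha)$ with consecutive steps of size at most $\gamma$ (bounded influence), so some $x^{j'}$ has $|S(x^{j'})-s_l|$ below the fourth $\SV$'s threshold and is returned. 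Because queries are fed to $\SV$ in decreasing order of $q$, whichever branch fires yields $q(s^\ast)\ge q(\hat s)\ge\OPT(\zeta)-\lambda\alpha$ by Assumption~2.

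To finish: in every branch the returned $y$ consists of actions in $\xi\abr_i(s^\ast)$ (for the walk branch this uses that both $x_{\min}(s_l)$ and $x_{\max}(s_l)$ are built from actions in $\xi\abr_i(s_l)$) and satisfies $\|S(y)-s^\ast\|_\infty\le 4\alpha+\gamma$ (from the detection thresholds plus $err\le\alpha$), so \Cref{lem:abr-move} shows each $y_i$ is a $(\xi+O(\alpha+\gamma))$-aggregative best response to $S(y)$, and \Cref{lem:abr-prop} converts this into a $(10\alpha+3\gamma+\zeta)$-approximate Nash equilibrium once the constants are tallied with $\xi=2\alpha+\gamma+\zeta$; and $q(S(y))\ge q(s^\ast)-\lambda\|S(y)-s^\ast\|_\infty\ge\OPT(\zeta)-5\lambda\alpha$. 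The main obstacle is precisely the non-abort step: one must verify that for \emph{every} configuration of $S_{\min}(\hat s)$ and $S_{\max}(\hat s)$ relative to $\hat s$, at least one of the four $\SV$ thresholds is crossed at an aggregator of quality at least $q(\hat s)$ --- that is, that the ``catch regions'' of the $Q$, $Q'$, $Q''$ and walk queries cover all cases with slack $err$. Everything else is a rerun of the $\PSN$ analysis with the extremal profiles substituted in, together with two applications of $\lambda$-Lipschitzness for the quality bound.
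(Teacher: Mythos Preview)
Your proposal is correct and follows essentially the same approach as the paper's proof: privacy via the billboard lemma plus composition over the four $\SV$ calls; accuracy by conditioning on all $\SV$ errors being at most $\alpha$; locating a grid point $\hat s$ near the optimal $\zeta$-equilibrium and arguing that at least one of the three detection queries fires there; and finishing with \Cref{lem:abr-move}, \Cref{lem:abr-prop}, and the $\lambda$-Lipschitzness of $q$.

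The only organizational difference is that the paper case-splits on \emph{which} $\SV$ the final output comes from (first/second versus third/fourth), using the global failure of the first two $\SV$'s to deduce $|S_{\max}(s_i)-s_i|\ge 2\alpha$ and $|S_{\min}(s_i)-s_i|\ge 2\alpha$ for every $s_i$, whereas you case-split locally on the configuration of $S_{\min}(\hat s),S_{\max}(\hat s)$ relative to $\hat s$. Both routes reach the same trichotomy. One point you leave implicit (and which the paper also treats somewhat loosely) is that when the third $\SV$ fires at some $s_l$ that may differ from $\hat s$, you still need $S_{\min}(s_l)<s_l<S_{\max}(s_l)$ for the walk to cross $s_l$; this follows because $Q''_l$ being below its threshold (by more than $err$) forces each of the two capped terms in $Q''_l$ to be strictly negative, exactly as in the Stage~2 analysis of \Cref{thm:psn-sum}. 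With that detail filled in, your sketch matches the paper's argument and yields the stated constants.
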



\begin{proof}
  \Cref{alg:pselection} only accesses the data through four
  instantiations of $\SV$, each of which answers at most one query
  with $\epsilon/4$-differential privacy. Again, by the Composition
  Theorem in \cite{DMNS06}, these privacy parameters compose so that
  the strategy profile selection as a public message satisfies
  $\eps$-differential privacy.
  The action of each player is a function
  of only the public message and her own private payoff data (type), so
  by Lemma \ref{lem:billboard}, \Cref{alg:pselection} satisfies
  $\eps$-joint differential privacy.


We now prove that \Cref{alg:pselection} computes an approximate
equilibrium with quality score close to $\OPT$.  We know that with
probability at least $1-\beta$, all four instantiations of $\SV$ have
error at most
\[
err = \frac{100\gamma(\log(2Wn) + \log(8/\beta))}{\eps} \leq \alpha,
\]
by our assumption on $\alpha$. 
For the rest of the argument, we assume this level of accuracy, which
is the case except with probability $\beta$. 

Suppose that the algorithm outputs a strategy profile $\vec{y}$ from the
first two instantiations of $\SV$. Let $m$ be the index of the
corresponding query, then
\[
|S(\vec{y}) - s_m| \leq 3\alpha + err \leq 4\alpha.
\]
Since each player is playing a $(2\alpha + \gamma +
\zeta)$-aggregative best response to $s_m$ in the profile $\vec{y}$,
we know by Lemma~\ref{lem:abr-move} that each player is at least
playing a $(10\alpha + \gamma +\zeta)$-aggregative best response to
$S(\vec{y})$, and so by Lemma~\ref{lem:abr-prop}, a $(10\alpha + 2
\gamma + \zeta)$-best response.

Suppose that the algorithm does not output anything in the first two
instantiations of $\SV$. By the accuracy guarantee, we know that for
each $s_i\in \{s_1, \ldots , s_{2W/\alpha}\}$,
\[
|S_{max}(s_i) - s_i| \geq  2\alpha \mbox{ and } |S_{min}(s_i) - s_i|\geq 2\alpha.
\]
Furthermore, we know there must exist a $\zeta$-approximate pure
strategy Nash equilibrium $\vec{x}^*$, so each player in $\vec{x}^*$
must be playing a $(\zeta + \gamma)$-aggregative best response to
$S(\vec{x}^*)$ by Lemma~\ref{lem:nash-prop}. Then by
Lemma~\ref{lem:abr-move}, the players are playing a $(2\alpha + \gamma
+ \zeta)$-aggregative best response for some $s_l\in Z$. For such an
$s_l$, it must be the case that $|S(\vec{x}^*) - s_l| \leq \alpha$ and
$\vec{x}^*\in X(s_l)$, so,
\[
S_{min}(s_l) < s_l < S_{max}(s_l),
\]
otherwise the first two instantiations of $\SV$ would have output
$x_{max}(s_l)$ or $x_{min}(s_l)$.  The third instantiation of $\SV$
would find us such an $s_l$. Now as with~\Cref{thm:psn-sum}, we can
find a strategy profile $\vec{z}$ with the last instantiation of $\SV$
such that $\vec{z}$ is between $x_{max}(s_l)$ and $x_{min}(s_l)$ and,
\[
|S(\vec{z}) - s_l| \leq \alpha + \gamma/2.
\]
By Lemma~\ref{lem:abr-prop} and Lemma~\ref{lem:abr-move}, each player
in $\vec{z}$ is playing a $(4\alpha + 2\gamma + \zeta)$-aggregative
best response to $s_l$, and so a $(4\alpha + 3 \gamma + \zeta)$-best
response.

Let $\vec{y}'$ be the equilibrium in $\cE(\zeta)$ that gives quality
$q(S(\vec{y}')) = \OPT$.  Such an optimal strategy profile $\vec{y}'$
would be among the strategy profiles that our algorithm searches.
Either we output the profile $\vec{y}'$, or we found a different
profile $\vx$ associated with a discretized aggregator of higher
quality score (because we enumerate aggregators with higher quality
$q$ first). Let $s'$ be the associated aggregator to our output
profile $\vx$.

 Note that $|S(\vec{y}')-s''| \leq \alpha$ for some discretized
 aggregator $s'' \in Z$ and also $|S(\vx) - s'| \leq 4\alpha$. Because
 the order in which we iterates the aggregators gives priority to
 higher quality, we know that $q(s') \geq q(s'')$, so $q(S(\vx)) \geq
 \OPT - 5\alpha \lambda$.
\end{proof}

\end{document}